\title{Constrained Flips in Plane Spanning Trees}
\author{Oswin Aichholzer}{Graz University of Technology, Austria}{oswin.aichholzer@tugraz.at}{https://orcid.org/0000-0002-2364-0583}{}
\author{Joseph Dorfer}{Graz University of Technology, Austria}{joseph.dorfer@tugraz.at}{https://orcid.org/0009-0004-9276-7870}{Austrian Science Foundation (FWF) 10.55776/DOC183}
\author{Birgit Vogtenhuber}{Graz University of Technology, Austria}{birgit.vogtenhuber@tugraz.at}{https://orcid.org/0000-0002-7166-4467}{}
\authorrunning{O. Aichholzer and J. Dorfer and B. Vogtenhuber}
\keywords{Non-crossing spanning trees, Flip Graphs, Diameter, Complexity, Happy edges} 
\begin{document}
	
	\maketitle
	
\begin{abstract}
A flip in a plane spanning tree $T$ is the operation of removing one edge from $T$ and adding another edge
such that the resulting structure is again a plane spanning tree. For trees on a set of points in convex position we study two classic types of constrained flips:
(1)~Compatible flips are flips in which the removed and inserted edge do not cross each other.  
We relevantly improve the previous upper bound of $2n-O(\sqrt{n})$ on the diameter of the compatible flip graph to~$\frac{5n}{3}-O(1)$, 
by this matching the upper bound for unrestricted flips by Bjerkevik, Kleist, Ueckerdt, and Vogtenhuber [SODA~2025] up to an additive constant of $1$. We further show that no shortest compatible flip sequence removes an edge that is already in its target position. Using this so-called happy edge property, we derive a fixed-parameter tractable algorithm to compute the shortest compatible flip sequence between two given trees.
(2)~Rotations are flips in which the removed and inserted edge share a common vertex.
Besides showing that the happy edge property does not hold for rotations, we improve the previous upper bound of $2n-O(1)$ for the diameter of the rotation graph to~$\frac{7n}{4}-O(1)$.
	\end{abstract}

	\section{Introduction}\label{sec:intro}
	
	Let $S$ be a finite point set in the plane in general position, that is, no three points lie on a common line. We call $S$ a \emph{convex} point set if no point in $S$ lies in the interior of the convex hull of $S$.
	A plane straight-line graph of $S$ is a graph with vertex set $S$ and whose edges are straight line segments between pairs of points of S such that no two edges intersect, except at a common endpoint. All graphs considered in this paper are straight-line graphs. For brevity we will omit the term straight-line. A \emph{plane spanning tree} is a plane connected graph with~$n-1$ edges. In a convex point set, edges appear in two forms. \emph{Convex hull edges} are edges that lie on the boundary of the convex hull of the point set. Edges that are not convex hull edges are \emph{diagonals}.
	
	\subparagraph*{Flip Graphs of Plane Spanning Trees.}
	A \emph{flip} between two plane spanning trees of $S$ is the operation that removes an edge from a tree and adds another edge such that the resulting structure is again a plane spanning tree. We also denote this operation as an \emph{unrestricted flip}. 
	A constrained version of flips are so-called \emph{compatible flips}, for which the removed edge and the added edge are not allowed to cross. \emph{Rotations} are a restricted subclass of compatible flips for which we additionally require the removed edge and the added edge to share a common vertex. The most restricted and local operation is a so called \emph{slide}. A slide is an operation in which the added edge $(u,v)$ and the removed edge $(v,w)$ are again required to share a common vertex, additionally, the triangle $\Delta(u,v,w)$ must not contain any other vertex and the edge $(u,w)$ has to be contained in the current tree. This can visually be interpreted as sliding the edge $(u,v)$ along the edge $(u,w)$ into the edge $(v,w)$.
	The \emph{(compatible) flip graph} of plane spanning trees of $S$ has as its vertex set all such trees. 
	Two vertices $T_1$, $T_2$ of this flip graph are connected with an edge if and only if $T_1$ can be transformed into $T_2$
	by a single (compatible) flip. See Figure~\ref{flips} for an example of flips in plane spanning trees.
	
	\begin{figure}[ht]
		\centering
		\includegraphics[scale=0.6]{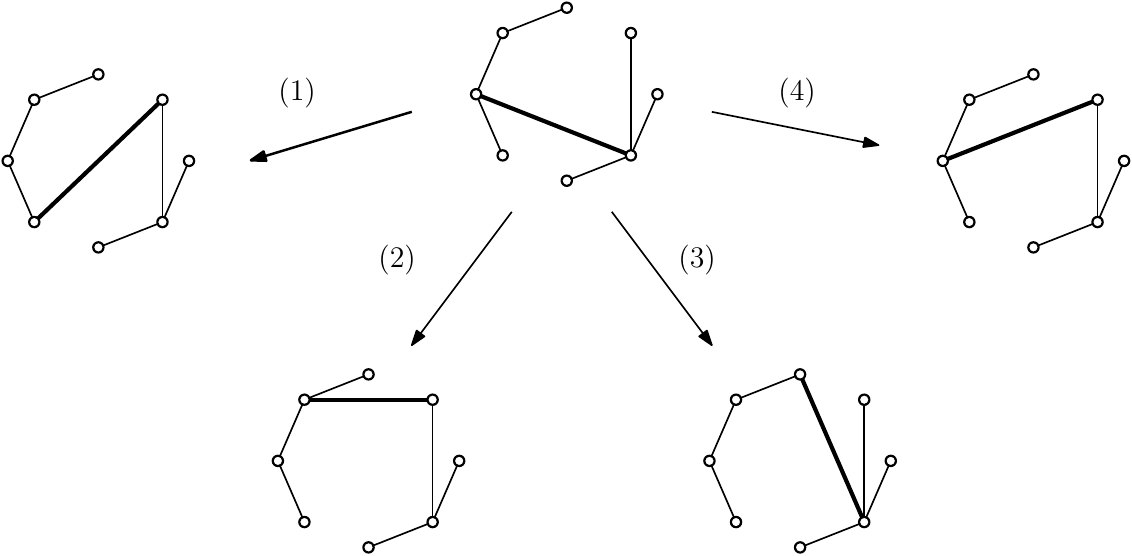}
		\caption{Four flips in a plane spanning tree. (1) unrestricted flip, (2) compatible flip, (3) rotation, (4) slide}
		\label{flips}
	\end{figure}
	
	Given an initial tree $T_{\text{in}}$ and a target tree $T_{\text{tar}}$, a \emph{(compatible) flip sequence} from $T_{\text{in}}$ to~$T_{\text{tar}}$ is a path from $T_{\text{in}}$ to~$T_{\text{tar}}$ in the (compatible) flip graph. The \emph{(compatible) flip distance} between $T_{\text{in}}$ and $T_{\text{tar}}$ is the length of a shortest path between $T_{\text{in}}$ and $T_{\text{tar}}$ in the flip graph. 
	Any (compatible) flip sequence of this length is called a \emph{shortest (compatible) flip sequence}. Similarly, we define the \emph{rotation graph}, \emph{rotation sequences} and the \emph{rotation distance}. See Figure \ref{flip_graph} for an illustration of the concepts.
	
	\begin{figure}[ht]
		\centering
		\includegraphics[scale=0.6]{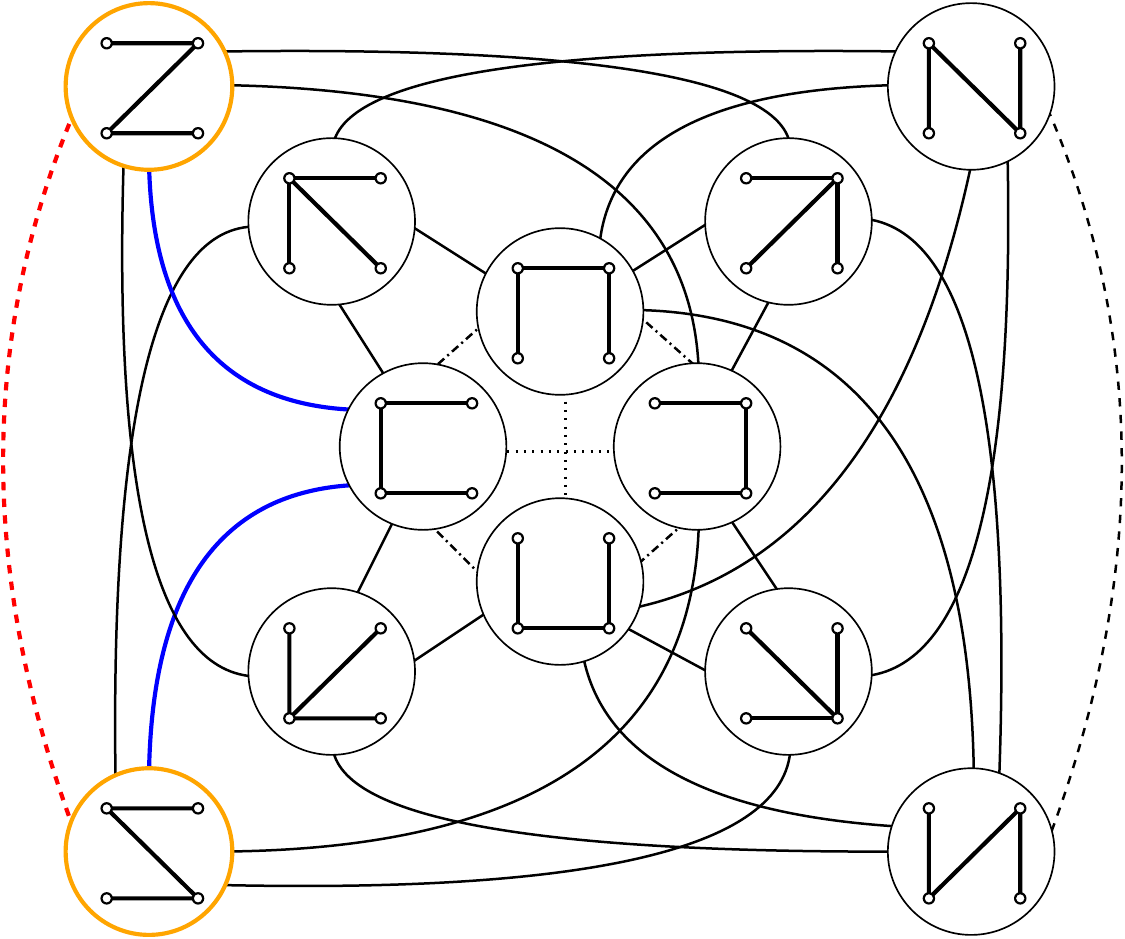}
		\caption{Flip graph of plane spanning trees on four vertices in convex position. Dashed edges denote flips that are part of the flip graph, but not the compatible flip graph. The dotted lines are compatible flips, but not rotations and the dash dotted lines are rotations but not slides. One pair of two plane spanning trees is marked in orange. Their shortest flip sequence of length one is marked in red (thicker dashed). A shortest compatible flip sequence of length two is marked in blue (thicker).}
		\label{flip_graph}
	\end{figure}
	
	For graph reconfiguration problems three natural questions arise: (1) Is the flip graph connected? (2) What is the diameter of the flip graph? (3) What is the complexity of computing shortest flip sequences between two specific configurations?
	
	\subparagraph*{Connectedness and Diameter.} For any $n$-point set $S$, the flip graph of plane spanning trees on $S$
	is known to be connected and has radius exactly $n-2$.
	The same holds for the compatible flip graph and the rotation graph; see 	~\cite{AVIS199621,bjerkevik2024flippingnoncrossingspanningtrees}. 
	Hence, the diameter always lies between $n-2$ and $2n-4$. 
	Since a lower bound for the diameter of $\big\lfloor\frac{3}{2}n\big\rfloor-5$ for convex $n$-point sets was shown in~\cite{hhmmn-gtgpcp-99}, the flip graph 
	of plane spanning trees on convex point sets has received considerable attention.  
	However, despite a prevailing conjecture that the diameter is at most $\frac{3}{2}n+O(1)$~\cite{bousquet2023notes,ferran03CCCGpersonalcommunication}, even improving the upper bound to $2n-o(n)$ took surprisingly long.
	In the last few years, the upper bound for its diameter was improved to $2n - \Omega(log(n))$ in~\cite{aichholzer2024reconfiguration} and soon after to $2n-\Omega(\sqrt{n})$ in~\cite{bousquet2023notes}. 
	The upper bound from~\cite{bousquet2023notes} is constructive and, though not stated explicitly, only uses compatible flips.
	Hence it also provides an upper bound for the diameter of the according compatible flip graph. 
	In~\cite{bousquet_et_al:LIPIcs.SoCG.2024.22}, the diameter was bounded from above by~$cn$ with $c = \frac{1}{12}\left(22+\sqrt{2}\right) \approx 1.95$, which marked the first linear improvement over the initial bound from~\cite{AVIS199621}. 
	Very recently, a lower bound of $\frac{14}{9}n-O(1)$ and an upper bound of $\frac{5}{3}n-3$ on the diameter were achieved in~\cite{bjerkevik2024flippingnoncrossingspanningtrees}, 
	where the upper bound in general requires non-compatible~flips.
	It is also known that any plane spanning tree can be transformed into any other using $\Theta(n^2)$ many slides \cite{AICHHOLZER2007155}. For a survey on further flip operations in trees with multiple flips in parallel as well as flips with labeled edges, we refer~to~\cite{NICHOLS2020111929}.
	
	\subparagraph*{Happy Edges and Complexity.} For any graph reconfiguration problem, \emph{happy edges} are edges that lie in both the initial and the target graph.
	A flip graph fulfills the \emph{happy edge property} if there exists a shortest flip sequence between any two graphs that never flips happy edges.
	The happy edge property often is a good indication for the complexity of a graph reconfiguration problem. 
	For example, for triangulations of simple polygons~\cite{Aichholzer2015} and general point sets~\cite{LUBIW201517,Pilz_2014},
	finding shortest flip sequences is computationally hard. The hardness proofs
	use counterexamples to the happy edge property as a key ingredient. 
	Conversely, the happy edge property is known to hold for triangulations of convex polygons~\cite{2ae353801697435f901a750248959ba2}.
	Though the complexity of finding shortest flip sequences is still open, the happy edge property yields multiple fixed-parameter tractable algorithms~\cite{articlefpt2,cleary2009rotation,li20233,articlefpt}.
	Moreover,
	the happy edge property holds for plane perfect matchings of convex point sets and shortest flip sequences can be found in polynomial time~\cite{articlematchings}. For plane spanning trees in convex position the happy edge property was conjectured to hold~\cite{aichholzer2024reconfiguration,ferran03CCCGpersonalcommunication} for different types of flips. Up to now, it has only been disproved for slides~\cite{aichholzer2024reconfiguration}.

	\subparagraph*{Contribution and Outline.} We will start by improving upper bounds on the flip graph for restricted flip types in convex point sets. In Section \ref{sec:upperbound} we improve the upper bound on the diameter of the compatible flip graph from $2n-\sqrt{n}$ in~\cite{bousquet2023notes} to $\frac{5}{3}(n-1)$. Section \ref{sec:rotations} is dedicated to improving the upper bound on the rotation graph from $2n-O(1)$ in \cite{AVIS199621} to $\frac{7}{4}(n-1)$.
	
	Afterwards we deal with the happy edge property. In Section \ref{sec:happyedges} we provide a more refined framework for happy edge properties and in Section \ref{sec:happycompatible} we prove the happy edge property for compatible flips in convex point sets and provide a counterexample to the happy edge property for rotations.
	
	Finally, in Section \ref{sec:fpt} we use variants of the happy edge property to show fixed-parameter tractability of the flip distance for both unrestricted and compatible flips.
	
	\emph{Due to space restrictions many details have been moved to the appendix. Such results will be marked with $(\star)$.}
	
	\section{Bounding the Diameter of the Compatible Flip Graph}\label{sec:upperbound}
	This section is devoted to improving the upper bound on the diameter of the compatible flip graph.
	All the currently known improvements to the upper bound of the length of shortest flip sequences for unrestricted flips are
	based on two different ways to handle edges:
	They either invest two flips to flip an edge from $T_{\text{in}}\setminus T_{\text{tar}}$ to a convex hull edge and later flip that convex hull edge to an edge of $T_{\text{tar}}\setminus T_{\text{in}}$, or perform a single \emph{perfect flip}, that is, a flip from an edge of $T_{\text{in}}\setminus T_{\text{tar}}$ to an edge of $T_{\text{tar}}\setminus T_{\text{in}}$. 
	Upper bounding the length of a flip sequence is then achieved by lower bounding the number of perfect flips.
	Note that any such flip sequence induces a pairing of the edges of $T_{\text{in}}\setminus T_{\text{tar}}$ with edges of  $T_{\text{tar}}\setminus T_{\text{in}}$.
	
	To obtain our improved bound for the diameter of the compatible flip graph, 
	we build on strategies for bounding the diameter of the flip graph with unrestricted flips from~\cite{bjerkevik2024flippingnoncrossingspanningtrees}. We give an overview of their proof and highlight the parts where we provide adjustments to make the flip sequence compatible.
	
	The pairing of edges in~\cite{bjerkevik2024flippingnoncrossingspanningtrees} is obtained as follows: The authors consider the linear representation of a tree $T$ on a convex point set as illustrated in Figure~\ref{fig:linear}. The linear representation is obtained by cutting the convex point set between two consecutive vertices on the convex hull and unfolding the tree $T$ such that its vertices are on a straight line. 
	
	\begin{figure}[ht]
		\centering
		\includegraphics[scale=0.6]{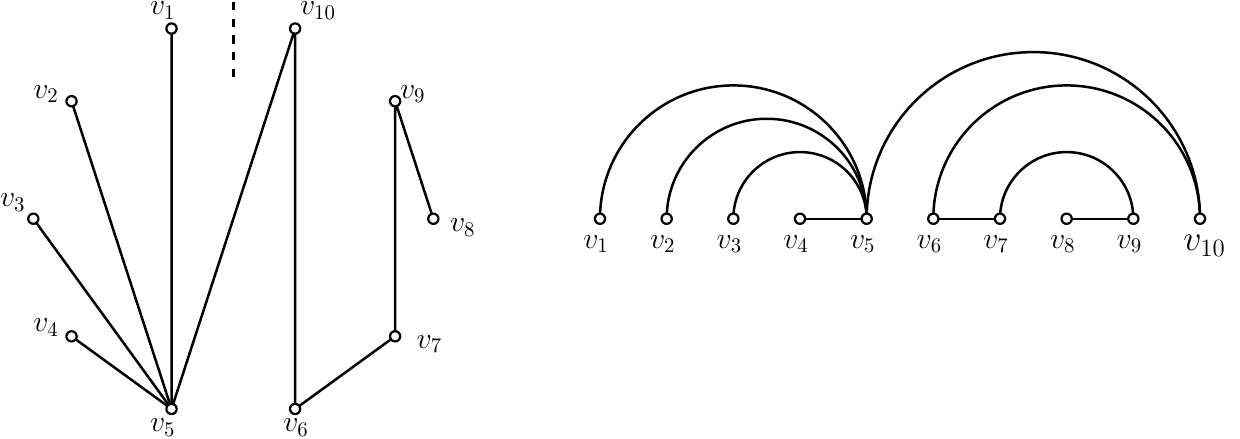}
		\caption{A tree $T$ on a convex point set (left) and a linear representation of $T$ (right)} 
		\label{fig:linear}
	\end{figure}
	
	The linear representation induces a linear order $v_1,\ldots,v_n$ of the vertices of $T$.
	With this linear order, the \emph{length of an edge} $e=(v_i,v_j)$, where $i<j$, is defined as $j-i$. 
	Further, the edge $e$ \emph{covers} all vertices $\{v_i, \ldots, v_j\}$ as well as any edge $(v_k,v_\ell)$ with $k,\ell \in \{i, \ldots,j\}$.

	A \emph{gap} $\{v_i,v_{i+1}\}$ is an edge between two consecutive vertices in the linear order (which might or might not be present in the tree $T$).
	Next, the authors define a bijection (Lemma~3.1 in~\cite{bjerkevik2024flippingnoncrossingspanningtrees}) between \emph{gaps} and edges of $T$,
	where each gap is in bijection with the shortest edge of~$T$ that covers that gap. 
	The edges are then partitioned into 
	categories based on how many vertices they share with their corresponding gap $(v_i,v_{i+1})$; see Figure \ref{fig:snw} for an illustration.
	
	\pagebreak
	
	Concretely, an edge $(u,v)$ is a \dots
	\begin{itemize}
		\item \emph{short edge} if $\{u,v\} = \{v_i,v_{i+1}\}$
		\item \emph{near edge} if $\lvert\{u,v\}\cap\{v_i,v_{i+1}\}\rvert = 1$
		\item \emph{wide edge} if $\lvert\{u,v\}\cap\{v_i,v_{i+1}\} \rvert = 0$
	\end{itemize}
	
	\begin{figure}[ht]
		\centering
		\includegraphics[scale=0.6]{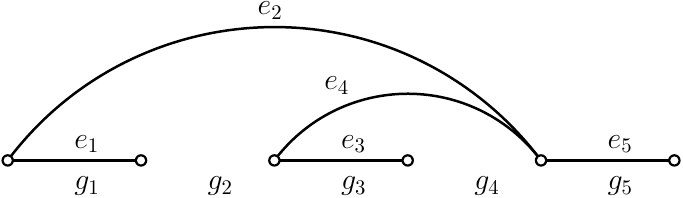}
		\caption{Edges $e_1$, $e_3$ and $e_5$ are short edges for gaps $g_1$, $g_3$ and $g_5$ respectively, $e_4$ is a near edge for~$g_4$, and~$e_2$ is a wide edge for $g_2$.} 
		\label{fig:snw}
	\end{figure}
	
	The authors show that the number of short edges in any tree $T$ is always larger than the number of wide edges. 
	In a next step, they construct an edge-edge-bijection between the edges in $T_{\text{in}}$ and $T_{\text{tar}}$
	by concatenating the edge-gap-bijection 
	for $T_{\text{in}}$ with the gap-edge-bijection of $T_{\text{tar}}$.
	An edge $e \in T_{\text{in}}$ and its image from the edge-edge-bijection $e'\in T_{\text{tar}}$
	form a \emph{pair}. If an edge $e$ is in both $T_{\text{in}}$ and $T_{\text{tar}}$, and $e$ is in bijection to itself, then $e$ is not flipped at all.
	Note that any edge can be flipped to its assigned gap by a single (compatible) flip. 
	The authors use this and the relation between the numbers of short and wide edges in each tree  
	to show that all pairs of edges containing at least one short or wide edge can be flipped using an average of $\frac{3}{2}$ flips per pair. 
	This is done by flipping every edge in such a pair to its corresponding gap and later from its gap to its target edge,
	which takes two flips minus the number of short edges in the pair.
	
	The remaining pairs then only consist of near edges. These pairs can further be partitioned into three sets as illustrated in Figure \ref{fig:abc}.
	\begin{itemize}
		\item \emph{above pairs}, where $e$ and $e'$ share a vertex and $e$ is longer than $e'$.
		\item \emph{below pairs}, where $e$ and $e'$ share a vertex and $e$ is shorter than $e'$.
		\item \emph{crossing pairs}, where $e$ and $e'$ cross, or equivalently, $e$ and $e'$ share a different vertex with their gap.
	\end{itemize}
	
	\begin{figure}[ht]
		\centering
		\includegraphics[scale=0.6]{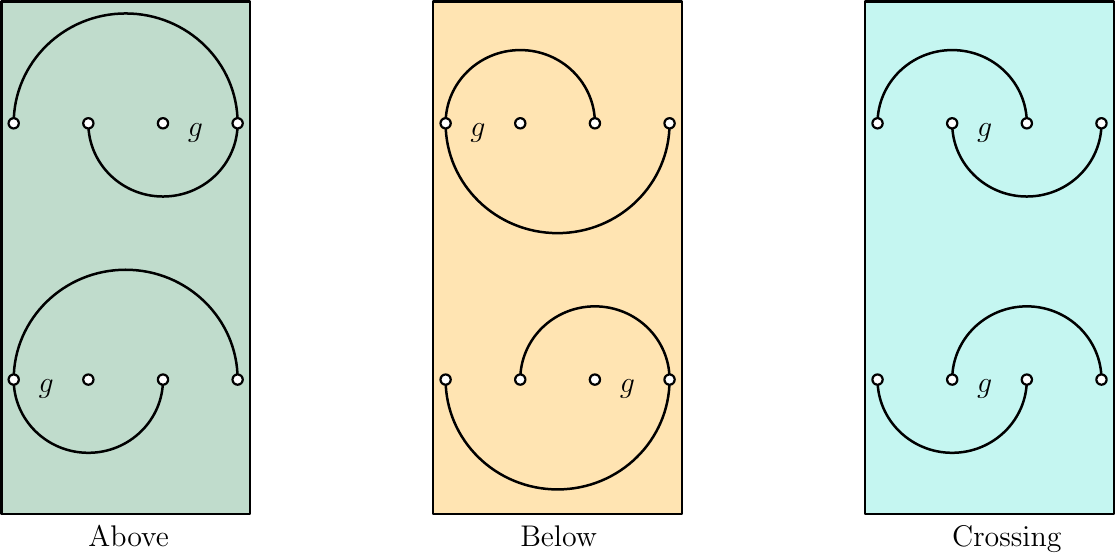}
		\caption{Illustration of above, below and crossing pairs of edges. Edges of $T_{\text{in}}$ are drawn above the vertices, while edges of $T_{\text{tar}}$ are drawn below the vertices.
		The respective gap is denoted by $g$.} 
		\label{fig:abc}
	\end{figure}
	
	The sets of above, below and crossing pairs are then denoted by $A$, $B$, and $C$, respectively. 
	The authors provide a flip sequence that flips all pairs in $A$ (resp.\ $B$, $C$) directly~\cite[Lemma~3.2]{bjerkevik2024flippingnoncrossingspanningtrees}. 
	Since a direct flip of an above or below pair is compatible, this implies the following lemma.
	
	\begin{restatable}{lemma}{lemmaCompatibleAB}\label{Lemma2}
		Assume that in the linear representation, the trees $T_{\text{in}}$ and $T_{\text{tar}}$ consist only of short edges and
		pairs of edges
		in $A$ (resp.\ $B$). 
		Then there is a compatible flip sequence from $T_{\text{in}}$ to $T_{\text{tar}}$ of length $\lvert A \rvert$ (resp.\ $\lvert B\rvert$).
	\end{restatable}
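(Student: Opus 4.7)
The plan is to piggyback on Lemma~3.2 of~\cite{bjerkevik2024flippingnoncrossingspanningtrees}: under the exact same structural assumption on $T_{\text{in}}$ and $T_{\text{tar}}$, that lemma already constructs a flip sequence of length $\lvert A \rvert$ (resp.\ $\lvert B \rvert$) in which every flip is a direct flip of a single above (resp.\ below) pair. Hence the only work left is to verify that each such direct flip is already compatible, so that the very same sequence qualifies as a compatible flip sequence and no re-ordering or additional flips are required.

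To establish compatibility of a single such flip, I would rely on the following essentially trivial geometric observation. In an above pair $(e,e')$ the two edges share a common vertex by definition -- this is precisely the feature that distinguishes above and below pairs from crossing pairs. Two straight-line segments that share an endpoint only meet at that endpoint and therefore do not cross. Thus the removed edge $e$ and the inserted edge $e'$ do not cross, which is exactly the defining condition of a compatible flip. The same argument applies verbatim to below pairs, where the shared vertex again precludes a crossing between $e$ and $e'$.

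Putting the two observations together, the flip sequence provided by~\cite[Lemma~3.2]{bjerkevik2024flippingnoncrossingspanningtrees} is already a compatible flip sequence of the desired length $\lvert A \rvert$ (resp.\ $\lvert B \rvert$), proving the lemma. The main obstacle is essentially absent once the cited lemma can be invoked: all combinatorial content -- choosing a valid order in which each individual direct flip yields a plane spanning tree -- is already handled there, and our sole addition is the observation that above and below pair flips are automatically compatible. If a self-contained argument were desired, the genuine work would lie in reconstructing the ordering argument of~\cite{bjerkevik2024flippingnoncrossingspanningtrees} rather than in the compatibility step, so restating that construction would be the only technical burden.
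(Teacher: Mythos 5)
Your proposal is correct and follows essentially the same route as the paper: invoke the direct-flip sequence for above (resp.\ below) pairs from~\cite{bjerkevik2024flippingnoncrossingspanningtrees} (the paper cites this as Lemma~3.2 for acyclicity of the conflict graph plus their Proposition~17 for turning acyclicity into a perfect flip sequence) and then observe that each such direct flip exchanges two edges sharing a vertex, hence is compatible. No gap.
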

	
	In contrast, no direct flip between two edges of a pair in $C$ is compatible, since any such pair is crossing. 
	In Appendix \ref{app:upper} we develop an approach to flip these edges 
	such that the resulting flip sequence is compatible while using only one additional flip.
	
	\begin{restatable}[$\star$]{lemma}{lemmaCompatibleC}\label{Lemma3}
		Assume that in the linear representation, the trees $T_{\text{in}}$ and $T_{\text{tar}}$ consist only of short edges and 
		pairs of edges in $C$. 
		Then there is a compatible flip sequence from $T_{\text{in}}$ to $T_{\text{tar}}$ of length~$\lvert C \rvert +1$.
	\end{restatable}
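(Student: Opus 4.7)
The plan is to route the resolution of the crossing pairs through the convex-hull edge $L=(v_1,v_n)$, which corresponds to the pair of convex-hull vertices between which the linear representation is cut. Since $L$ lies on the convex-hull boundary in the original convex point set, it crosses no other chord, so every flip involving $L$ is automatically compatible. I order the pairs $(e_1,e'_1),\ldots,(e_k,e'_k)$ by the left-to-right position of their gaps, with $k=|C|$.

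The construction has three stages. As the one additional flip, I pick a pair whose near edge $e_{j_0}$ separates $v_1$ from $v_n$ in $T_{\text{in}}$ and perform the \emph{setup flip} $e_{j_0}\to L$. Such a pair always exists when $k\ge 1$: otherwise the $v_1$-to-$v_n$ path in $T_{\text{in}}$ would consist entirely of short edges, forcing that path to traverse all $n-1$ gaps, so $T_{\text{in}}$ would equal the complete gap path, leaving no room for near edges and contradicting $k\ge 1$. Next I perform a chain of $k-1$ \emph{cross-pair flips} $e_j\to e'_{\sigma(j)}$, one per remaining pair, where $\sigma(j)$ is chosen so that $e_j$ and $e'_{\sigma(j)}$ do not cross. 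Finally, I apply the \emph{closing flip} $L\to e'_m$ for the last remaining target edge. The total is $1+(k-1)+1=|C|+1$ compatible flips.

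The main obstacle is proving that a suitable assignment of cross-pair targets in the second stage can always be chosen so that each flip is both compatible (the removed and inserted edges do not cross) and valid (the result is a plane spanning tree). Compatibility requires a case analysis on the orientation of $e_j$ and $e'_j$ relative to their gaps: a near edge for $g_j$ extends either to the left of $v_{i_j}$ or to the right of $v_{i_j+1}$, giving four configurations per pair and many interactions between consecutive pairs. The key structural property to establish is that the planarity of $T_{\text{in}}$ (respectively $T_{\text{tar}}$) enforces a laminar nesting of the $e_j$'s (respectively $e'_j$'s) in the linear representation, which in turn guarantees the existence of a compatible permutation $\sigma$. Validity of each intermediate flip follows by induction along the chain, using the invariant that the parked edge $L$ remains in the tree throughout phase two, providing an alternative long-distance connection whenever a cross-pair flip removes an edge that would otherwise disconnect the tree.
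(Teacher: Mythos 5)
Your overall architecture matches the paper's: park one crossing-pair edge at the boundary edge $(v_1,v_n)$ as the single extra flip, perform a chain of $\lvert C\rvert-1$ perfect flips in which each initial edge is sent to the target position of a \emph{different} pair, and finish by flipping $(v_1,v_n)$ to the last remaining target. However, the central claim --- that the permutation $\sigma$ and an ordering of the flips exist so that every intermediate step is compatible and yields a plane spanning tree --- is exactly what you label ``the main obstacle'' and then do not prove. The appeal to laminarity is insufficient: each of $T_{\text{in}}$ and $T_{\text{tar}}$ is separately non-crossing, but the obstruction is the interaction \emph{between} the two trees (an inserted target edge $e'_{\sigma(j)}$ must avoid crossing every edge still present in the current tree, not merely its removed partner $e_j$), and an arbitrary choice of the parked pair does not work. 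The paper resolves this with machinery you do not reconstruct: the conflict graph $H_C$ on $C$ is acyclic, and one repeatedly extracts a ``topmost'' root $(e_i,e_i')$ whose initial edge is uncovered by the other $C$-edges of $T_{\text{in}}$ and whose target edge neither covers nor crosses any remaining initial edge; the parked edge must be chosen as such a root, and each subsequent flip sends the current root's initial edge to the \emph{previous} root's target. Your existence argument for the setup flip (some $C$-edge lies on the $v_1$--$v_n$ tree path, since otherwise the tree would be the full gap path) is fine as far as it goes, but it does not select the topmost root, and a wrong choice can make a later insertion cross a still-present edge.

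Second, your validity argument points in the wrong direction. Keeping $L=(v_1,v_n)$ in the tree does not help when you \emph{add} an edge: the danger in exchanging $e_j$ for $e'_{\sigma(j)}$ is closing a cycle, and the presence of the extra long edge $L$ connects more vertex pairs, making cycles easier to create, not harder. The paper's actual argument is face-based: the gap of the removed edge and the non-edge $(v_n,v_1)$ lie in different faces of the inserted edge, so no cycle closes. As written, your proposal identifies the correct skeleton but leaves the substance of the lemma unproven.
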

	
	To obtain their upper bound on the flip distance between two trees $T_{\text{in}}$, $T_{\text{tar}}$, 
	the authors cut the original instances $T_{\text{in}}$, $T_{\text{tar}}$ along happy diagonals, handle each of the resulting subinstances separately, 
	and then add up the flips needed in each of the subinstances.
	Let $b$ denote the number of happy convex hull edges, $c$~the number of happy diagonals, and $d$ the cardinality of $T_{\text{in}}\setminus T_{\text{tar}}$. 
	Using that $b+c+d=n-1$, 
	their resulting bound~\cite[Theorem~4.1]{bjerkevik2024flippingnoncrossingspanningtrees}  
	is $\frac{5}{3}d+\frac{2}{3}b-\frac{4}{3} \leq \frac{5}{3}(n-1) - \frac{4}{3}$. 
	With similar calculations, we show the following theorem for compatible flips in Appendix~\ref{app:upper}.
	
	\begin{restatable}[$\star$]{theorem}{theoremCompatibleUpperBound}\label{upper}
		Let $T_{\text{in}}$ and $T_{\text{tar}}$ be two plane spanning trees in convex point sets with~$n$ vertices. 
		Then, there exists a compatible flip sequence from $T_{\text{in}}$ to $T_{\text{tar}}$ of length at most $\frac{5}{3}d + \frac{2}{3}b + c-\frac{1}{3} \leq \frac{5}{3}(n-1)-\frac{1}{3}$, %
		where $d=\lvert T_{\text{in}}\setminus T_{\text{tar}}\rvert=\frac{\lvert T_{\text{in}}\Delta T_{\text{tar}}\rvert}{2}$ and where $b$ and $c$ denote the number of happy convex hull edges and happy non-convex-hull edges, respectively.
	\end{restatable}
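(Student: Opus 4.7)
The plan is to follow the cutting-and-summing strategy of~\cite{bjerkevik2024flippingnoncrossingspanningtrees}, paying the one extra flip from Lemma~\ref{Lemma3} at most once per subinstance. First, I observe that $T_{\text{in}} \cap T_{\text{tar}}$ is a plane forest on the convex point set, so the $c$ happy diagonals are pairwise non-crossing. Cutting along them decomposes the convex polygon into exactly $c+1$ subinstances $I_1, \ldots, I_{c+1}$, and in each subinstance every cut diagonal appears as a (happy) convex hull edge. Writing $d_j$ and $b_j$ for the number of edges to be flipped and happy convex hull edges in $I_j$, the standard double counting gives $\sum_j d_j = d$ and $\sum_j b_j = b + 2c$, while each $I_j$ contains no happy diagonals.

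In each subinstance I would apply the framework from the excerpt verbatim: the linear representation, the edge-gap bijection, the induced edge-edge pairing, and the partition of pairs into those containing a short or wide edge (handled via two flips through a gap, leveraging that short edges outnumber wide edges) and those consisting only of near pairs $A_j \cup B_j \cup C_j$. The only place where compatibility has a cost is $C_j$: I invoke Lemma~\ref{Lemma3} at a cost of $|C_j|+1$ flips instead of the $|C_j|$ flips used in~\cite{bjerkevik2024flippingnoncrossingspanningtrees}, while Lemma~\ref{Lemma2} handles $A_j$ and $B_j$ at cost $|A_j|$ and $|B_j|$ respectively.

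The main step is the arithmetic. For each $I_j$, the per-subinstance unrestricted-flip bound established in~\cite[Theorem~4.1]{bjerkevik2024flippingnoncrossingspanningtrees} is $\tfrac{5}{3}d_j + \tfrac{2}{3}b_j - \tfrac{4}{3}$, and Lemma~\ref{Lemma3} adds at most one extra flip per subinstance. Summing over the $c+1$ subinstances,
\[
\sum_{j=1}^{c+1}\!\left( \tfrac{5}{3}d_j + \tfrac{2}{3}b_j - \tfrac{4}{3} \right) + (c+1) = \tfrac{5}{3}d + \tfrac{2}{3}(b+2c) - \tfrac{4}{3}(c+1) + (c+1) = \tfrac{5}{3}d + \tfrac{2}{3}b + c - \tfrac{1}{3},
\]
and the global bound $\tfrac{5}{3}(n-1) - \tfrac{1}{3}$ follows from $d+b+c = n-1$ together with $b \ge 0$ and $d \le n-1$.

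The delicate part will be transcribing the per-subinstance analysis of~\cite{bjerkevik2024flippingnoncrossingspanningtrees} into the compatible setting while accounting for subinstances that are trivially small or have $C_j = \emptyset$; in both cases the extra flip is either unnecessary or easily absorbed into the slack of the bound, so I expect no new combinatorial difficulties beyond careful bookkeeping of how the cut diagonals interact with the short/near/wide classification on each side.
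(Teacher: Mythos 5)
Your proposal is correct and follows essentially the same route as the paper: cut along the $c$ pairwise non-crossing happy diagonals into $c+1$ subinstances with $\sum_j d_j = d$ and $\sum_j b_j = b+2c$, run the per-subinstance analysis of~\cite{bjerkevik2024flippingnoncrossingspanningtrees} with Lemma~\ref{Lemma2} for $A_j,B_j$ and Lemma~\ref{Lemma3} paying at most one extra flip for $C_j$, and sum. The only difference is presentational: you cite the per-subinstance unrestricted bound $\frac{5}{3}d_j+\frac{2}{3}b_j-\frac{4}{3}$ as a black box and add $+1$, whereas the paper re-derives that bound (including the degenerate sub-case where $T_{\text{in}}\cup T_{\text{tar}}$ covers all convex hull edges, handled via Observation~\ref{Lemma6}) to arrive at $\frac{5}{3}d_j+\frac{2}{3}b_j-\frac{1}{3}$ directly.
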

	
	We remark that our bound with respect to the parameters $d$, $b$, and $c$ can be relevantly higher than the one in~\cite[Theorem~4.1]{bjerkevik2024flippingnoncrossingspanningtrees} if the two trees share many happy diagonals. 
	This behaviour
	was not unexpected due to the following result: 
	In~\cite[Theorem~1.6]{bousquet2023reconfiguration}, the authors provide examples of pairs of plane spanning trees with (unrestricted) flip distance $d$ and compatible flip distance $2d$.
	The according tree pairs have
	$c\approx \frac{n}{2}$ happy diagonals.
	However, any pair of trees that could reach the upper bounds in~\cite[Theorem~4.1]{bjerkevik2024flippingnoncrossingspanningtrees} and Theorem~\ref{upper} must have $c=0$. So while for certain instances the flip distance and the compatible flip distance may differ strongly, extremal examples that determine the diameter are not affected.

\section{Bounding the Diameter of the Rotation Graph}\label{sec:rotations}
Next, we show that for the even more constrained rotation as flip operation, the diameter of the resulting flip graph is linear with leading coefficient less than two. %

For this purpose we again construct a pairing of edges such that a perfect flip can be performed for many of them. The challenge that arises, compared to unrestricted or compatible flips, is that rotations require the removed and the added edge to share a vertex. This rules out the gap based pairing from the last section, since in crossing pairs, edges do not share vertices and the same holds for most pairs that contain a wide edge.

\begin{lemma}\label{lem:bij_ve}
	For a plane spanning tree $T$ on a point set $S$ and a vertex $v \in S$ there exists a bijection $\rho_{T,v}$ between vertices in $S\setminus\{v\}$ and edges in $T$ such that $\rho_{T,v}(w)$ is incident to $w$ for every vertex $w \in S\setminus\{v\}$.
\end{lemma}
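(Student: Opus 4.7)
The plan is to exploit the tree structure by rooting $T$ at $v$. Since $T$ is connected and acyclic, every vertex $w \in S \setminus \{v\}$ has a unique simple path in $T$ from $w$ to $v$. I would define $\rho_{T,v}(w)$ to be the first edge along this path, or equivalently, the edge connecting $w$ to its parent in the tree rooted at~$v$. This edge is incident to $w$ by construction, so the incidence condition is automatic.

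To verify that $\rho_{T,v}$ is a bijection, I would first note the cardinality match: $|S \setminus \{v\}| = n-1$, which equals the number of edges in any spanning tree on $n$ vertices. Hence it suffices to establish surjectivity. For an arbitrary edge $e = (x,y) \in T$, removing $e$ from $T$ disconnects it into exactly two subtrees, exactly one of which contains $v$. Let $w$ be the endpoint of $e$ that lies in the subtree not containing $v$; then the unique $w$-to-$v$ path in $T$ must begin with $e$, so $\rho_{T,v}(w) = e$. This shows that every edge is hit, and the equal-cardinality argument then upgrades surjectivity to a bijection.

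The statement is purely graph-theoretic and uses neither planarity nor convex position, so there is essentially no obstacle: the construction is just the standard parent-edge correspondence for rooted trees. The only thing worth emphasizing for the rest of the section is that this bijection gives us the flexibility the gap-based pairing lacked, because we may freely choose the root~$v$ and then every non-root vertex is assigned an incident edge, which is exactly the adjacency requirement needed to later perform rotations.
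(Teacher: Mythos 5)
Your proposal is correct and follows essentially the same route as the paper: root $T$ at $v$ and assign each non-root vertex its parent (incoming) edge. The paper leaves the bijectivity check implicit, whereas you spell out the cardinality-plus-surjectivity argument, but the underlying construction is identical.
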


\begin{proof}
	We root the tree $T$ at $v$ and orient every edge away from $v$ such that $v$ forms the unique source of the oriented graph; see  Figure \ref{concepts} for an illustration. 
	Now map each vertex in $S\setminus\{v\}$ to its incoming edge.
\end{proof}

\begin{corollary} \label{bij_ee}
	Let $T_1$ and $T_2$ be two plane spanning trees on a point set $S$ and let $v$ be a vertex in $S$. 
	Then there exists a pairing of edges from $T_1$ with edges in $T_2$ such that:
	\begin{itemize}
		\item Two edges that are paired together share a vertex.
		\item Different pairs of edges share a different vertex.
		\item No pair of edges has $v$ as its shared vertex.
	\end{itemize}
\end{corollary}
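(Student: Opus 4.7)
The plan is to apply Lemma~\ref{lem:bij_ve} to both trees $T_1$ and $T_2$ using the \emph{same} distinguished vertex $v$, and then compose the two resulting bijections along their common index set $S \setminus \{v\}$. This is the natural thing to try: Lemma~\ref{lem:bij_ve} already delivers, for a single tree, exactly the ingredient we need (an incidence-respecting bijection indexed by $S\setminus\{v\}$), so the corollary should come essentially for free.

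Concretely, I would set $\rho_1 := \rho_{T_1,v}$ and $\rho_2 := \rho_{T_2,v}$, and for each $w \in S\setminus\{v\}$ pair the edge $\rho_1(w) \in T_1$ with the edge $\rho_2(w) \in T_2$, with $w$ serving as the designated shared vertex of that pair. Since both $\rho_1$ and $\rho_2$ are bijections from the $(n-1)$-element set $S\setminus\{v\}$ onto the $(n-1)$-element edge sets of $T_1$ and $T_2$ respectively, the map $\rho_2 \circ \rho_1^{-1}$ is a bijection from the edges of $T_1$ to the edges of $T_2$, so this is indeed a valid pairing.

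Verifying the three properties is then immediate. First, by Lemma~\ref{lem:bij_ve} both $\rho_1(w)$ and $\rho_2(w)$ are incident to $w$, so the two edges of the pair share the vertex $w$. Second, different pairs correspond to different values of $w$, so they are assigned different shared vertices. Third, since the index set is $S \setminus \{v\}$, the shared vertex of any pair is never $v$.

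There is no real obstacle in this argument; the corollary is a one-step consequence of the lemma. The only subtlety worth flagging is a notational one: if $\rho_1(w) = \rho_2(w)$ happens to hold (a happy edge incident to $w$ that is selected by both rootings), the two ``paired'' edges coincide and technically share two vertices, but we always designate $w$ as \emph{the} shared vertex of the pair, which keeps the second bullet well-defined.
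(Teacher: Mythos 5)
Your proposal is correct and matches the paper's proof exactly: both pair $\rho_{T_1,v}(w)$ with $\rho_{T_2,v}(w)$ over $w \in S\setminus\{v\}$ and read off the three properties directly from Lemma~\ref{lem:bij_ve}. The remark about designating $w$ as the shared vertex when the two edges coincide is a reasonable clarification that the paper leaves implicit.
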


\begin{proof}
	The pairing $\{(\rho_{T_1,v}(w),\rho_{T_2,v}(w)\mid w \in S \setminus \{v\}\}$ induced by the bijection in Lemma \ref{lem:bij_ve} has all the desired properties.
\end{proof}

\begin{figure}[ht]
	\centering
	\includegraphics[page=2, scale=0.6]{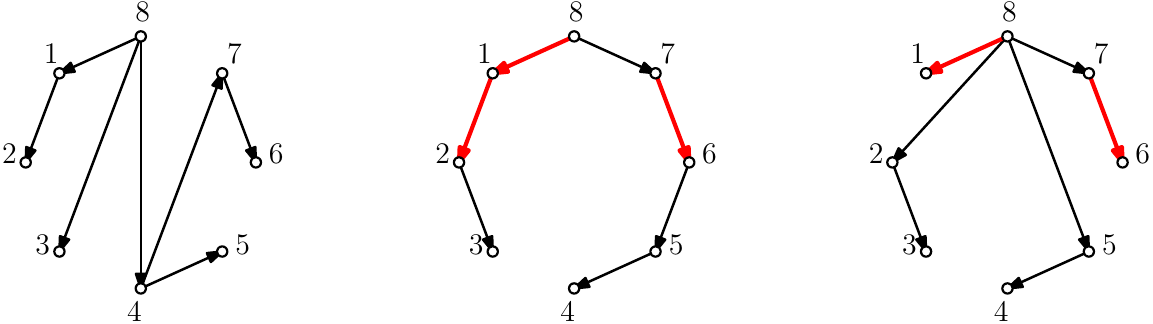}
	\caption{Illustration of the bijection from Lemma~\ref{lem:bij_ve}. 
		All edges are oriented away from the root $v$. 
		Each vertex in $S\setminus\{v\}$ is in bijection with its unique incoming edge.} 
	\label{concepts}
\end{figure}

For any two given plane spanning trees $T_{\text{in}}$ and $T_{\text{tar}}$, we apply the bijection from Corollary~\ref{bij_ee} with $v:=v_n$ to obtain a pairing of the edges of the two trees with all the mentioned properties.   
For an edge $e$ that is in bijection with a vertex $v_i$ we say that $e$ is \emph{attached} to $v_i$.
Note that each vertex $v_i$ with $i \in\{1,\ldots, n-1\}$ has an attached edge from each of $T_{\text{in}}$ and $T_{\text{tar}}$, while $v_n$ has no attached vertices.

From this point on we consider the two trees presented in their linear representation, see Figure \ref{fig:linear}, such that $v_{n}$ is the rightmost vertex. 
The pairs of edges $(e_{\text{in}}=(v_i,v_j),e_{\text{tar}}=(v_i,v_k))$, where $v_i$ is the vertex where edges $e_{\text{in}}$ and $e_{\text{tar}}$ are attached to, can now be subdivided into multiple classes. We say that a pair of edges is \dots
\begin{itemize}
	\item \emph{right-attached above} if $j\leq k<i$
	\item \emph{left-attached above} if $i<k\leq j$
	\item \emph{right-attached below} if $k \leq j<i$
	\item \emph{left-attached below} $i<j \leq k$
	\item \emph{diving} if $j<i<k$
	\item \emph {jumping} if $k<i<j$
\end{itemize}

We denote the respective sets of pairs in that order by $RA$, $LA$, $RB$ $LB$, $D$ and $J$. 
Additionally, we set $L = LA\cup LB$ and $R = RA \cup RB$. 
See Figure~\ref{pairs} for an illustration of the six types of edge pairs, as well as the two groups $L$ and $R$.
We remark that happy edges that are left-attached fulfill both, the definition for left-attached above and the one left-attached below. 
This does not pose any problem, since in the parts of the proof where this is relevant, we will only consider the whole set $L$. 
By symmetry, the same can be said about right-attached happy edges. 
Another noteworthy observation is that we do not handle convex hull edges any differently than diagonals. 
There might even be happy convex hull edges that belong to different pairs. 
This will lead to cases in which we rotate happy (convex-hull) edges. 
We allow this on purpose, since we will see in Section \ref{sec:happycompatible} that the happy edge property does not hold for rotations.

\begin{figure}[ht]
	\centering
	\includegraphics[scale=0.6]{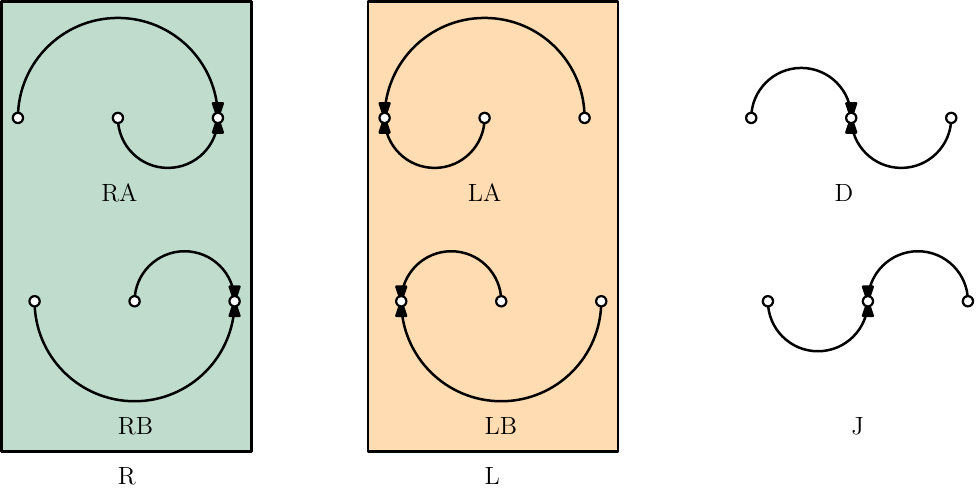}
	\caption{Illustration of pairs of edges in the sets RA, RB, R, LA, LB, L, D and J. 
	Edges of $T_{\text{in}}$ are drawn above the vertices, while edges of $T_{\text{tar}}$ are drawn below the vertices. Each edge points to the vertex which it is attached to.}
	\label{pairs}
\end{figure}

The new upper bound on the diameter of the rotation graph is described by the following Theorem.

\begin{theorem} \label{prediam}
	There exists a rotation sequence from $T_{\text{in}}$ to $T_{\text{tar}}$ of length at most\\$2(n-1) -  max\big\{\lvert L \rvert, \lvert R \rvert, \lvert J \rvert, \lvert D \rvert\big\} \leq \frac{7}{4}(n-1)$
\end{theorem}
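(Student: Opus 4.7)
The plan is to construct, for any choice of class $X \in \{L, R, J, D\}$, a rotation sequence from $T_{\text{in}}$ to $T_{\text{tar}}$ of length at most $2(n-1) - |X|$. Taking $X$ to be a largest class immediately yields the first bound. The second inequality follows because $|L| + |R| + |D| + |J| = n - 1$ forces $\max\{|L|, |R|, |D|, |J|\} \geq (n-1)/4$, so $2(n-1) - \max \leq \frac{7}{4}(n-1)$.

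For each fixed $X$, I would assign one \emph{perfect rotation} $e_{\text{in}} \to e_{\text{tar}}$ to every pair in $X$ and a two-rotation detour to every pair outside $X$. Summing gives $|X| + 2(n-1-|X|) = 2(n-1) - |X|$ rotations. A perfect rotation for a pair in $X$ is a priori feasible because Corollary~\ref{bij_ee} ensures that the two edges of a pair share their attachment vertex $v_i$, which is the very prerequisite for a rotation. For a pair outside $X$ attached to $v_i$, I would route through the auxiliary edge $(v_i, v_n)$: first rotate $e_{\text{in}}$ at $v_i$ into $(v_i, v_n)$, later rotate $(v_i, v_n)$ at $v_i$ into $e_{\text{tar}}$. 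The vertex $v_n$ is a natural hub because it has no attached pair and the cut used to linearize the point set lies at $v_n$.

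The main obstacle is scheduling: the single-rotation steps for pairs in $X$ and the two-rotation detours for the remaining pairs must be interleaved so that every intermediate graph is a plane spanning tree and each rotation is legal. For this I would exploit the rooted structure from Lemma~\ref{lem:bij_ve}, which makes each $v_i \neq v_n$ the unique ``child'' endpoint of its attached edge in each of $T_{\text{in}}$ and $T_{\text{tar}}$. Processing pairs in an order compatible with tree depth --- bottom-up when parking $T_{\text{in}}$-edges at the hub, top-down when unloading $T_{\text{tar}}$-edges from the hub --- should guarantee that a removed attached edge always separates off a single subtree and that the replacement edge reattaches it without crossings. By the left-right reflection symmetry and the $T_{\text{in}} \leftrightarrow T_{\text{tar}}$ swap symmetry, it suffices to carry out the detailed feasibility analysis for the two representatives $X = L$ and $X = D$, which differ substantially because $L$-pairs keep both edges on the same side of $v_i$ while $D$-pairs straddle $v_i$. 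The combinatorial heart of the proof will be a local case analysis at each attachment vertex $v_i$ verifying that no not-yet-processed pair and no temporary hub edge obstructs the current rotation.
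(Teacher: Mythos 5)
Your high-level skeleton matches the paper's: one construction per class $X\in\{L,R,J,D\}$ costing $2(n-1)-|X|$ rotations, combined with $|L|+|R|+|D|+|J|=n-1$ to get the $\frac{7}{4}(n-1)$ bound, and the paper likewise reduces to two representative cases (it proves Propositions~\ref{lr} and~\ref{dj} for $R$ and $J$ and invokes symmetry). However, your concrete construction has a genuine flaw: parking every non-$X$ pair on the hub edge $(v_i,v_n)$ does not work. In the linear representation $(v_i,v_n)$ separates $\{v_{i+1},\dots,v_{n-1}\}$ from $\{v_1,\dots,v_{i-1}\}$, so it crosses every edge $(v_a,v_b)$ with $a<i<b<n$; in particular, whenever an $X$-pair edge covers a vertex $v_i$ carrying a non-$X$ pair (which happens routinely, e.g.\ for any right-attached pair of length greater than one), your hub edge crosses an edge you intend to leave in place for a perfect rotation. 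The paper avoids exactly this by cutting the tree along the $X$-pair edges into faces and parking the remaining edges on \emph{convex hull gaps} inside each face (Lemma~\ref{convex_hull_star} with $K^\ast=\emptyset$, packaged as Lemma~\ref{cl}), so that parked edges can never cross anything.

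Two further ingredients are missing. First, even after parking, the $|X|$ perfect rotations must be scheduled so that no target edge crosses a not-yet-removed initial edge of another pair; the paper establishes this via acyclicity of a conflict graph on $R$ (Lemma~\ref{lem:lrresolve} and Claim~\ref{acyclic}), and your proposed ``bottom-up by tree depth'' order is not shown to respect these constraints, which are governed by covering and crossing relations rather than depth. Second, for $X\in\{D,J\}$ the two edges of a pair lie on opposite sides of $v_i$, and the paper does \emph{not} realize the saving as ``$|J|$ perfect rotations from $T_{\text{in}}$ plus two rotations for everyone else'': instead it spends up to $n-1$ rotations transforming $T_{\text{in}}$ into an intermediate tree $T^\ast$ that already contains all target edges of jump pairs (Lemma~\ref{lem:stars}, via a sweep over the gap visible from above with a nontrivial case distinction), and harvests the saving of $|J|$ entirely on the $T_{\text{tar}}$ side (Lemma~\ref{lem:djresolve}). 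Your per-pair accounting gives the same total, but the straddling geometry of diving/jumping pairs makes the direct one-rotation-per-pair plan infeasible without something like this restructuring.
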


We will use Lemma~\ref{convex_hull_star} below to get the edges from three of the sets $L$, $R$, $J$, and $D$ ``out of the way''. Then we will show that all edges from the remaining set can be perfectly flipped. 

We pair the edges and vertices of $T$ via the bijection $\rho_{T,v_{n}}$. 
Further, for an arbitrary $j\in\{1,\ldots, n-1\}$, we assign a gap to every vertex except $v_n$ by the following function (indices taken modulo $n$). 
\begin{equation*}
	\delta_j:	v_k \mapsto \begin{cases}
		(v_k,v_{k+1}) \qquad \text{if}~k> j\\
		(v_{k-1},v_k) \qquad \text{if}~k\leq j
	\end{cases}
\end{equation*}

In other words, we choose one particular gap $(v_j,v_{j+1})$ that is not assigned to any vertex and the other gaps are assigned such that they contain their assigned vertex.

In the following lemma, $j$ is an arbitrary index, $I$ is a subset of the convex hull edges in~$T$ that behave ``nicely'' with respect to $\delta_j$, 
and $K$ are all edges of $T$ that are incident to~$v_n$. $I^\ast$ and $K^\ast$ are according subsets that are required to be present in the resulting tree $T^\ast$.

\begin{lemma}\label{convex_hull_star}
	For any $j \in \{1, \ldots, n-1\}$, 
	let $I = \{k \in \{1, \ldots, n-1\} \mid \rho_{T,v_n}(v_k) = \delta_j(v_k)\}$ and $K = \{k \in~\{1, \ldots, n-1\} \mid \rho_{T,v_n}(v_k) = (v_n,v_k)\}$. 
	Further, let $I^\ast, K^\ast$ be a partition of $\{1,...,n-1\}$, that is, $I^\ast \cap K^\ast = \emptyset$ and $I^\ast \cup K^\ast = \{1, \ldots, n-1\}$. 
	Then there exists a rotation sequence from $T$ to $T^\ast$ of length at most $n-1-\lvert I \cap I^\ast\rvert - \lvert K \cap K^\ast\rvert$ that does not flip edges in $\rho_{T,v_n}(\{v_k\mid k \in(I \cap I^\ast)\cup(K\cap K^\ast)\})$ and $T^\ast$ has the following properties:
	\begin{itemize}
		\item For all $k\in I^\ast$, the tree $T^\ast$ contains the convex hull edges $\delta_j(v_k)$. That is, the convex hull edges attached to $v_k$ directed away from $v_n$ with respect to the order of vertices along the path on the convex hull from $v_n$ to $(v_j,v_{j+1})$\footnote{We remark that this direction in general differs from the orientation induced by the tree that is used in the proof of Lemma~\ref{lem:bij_ve}.}.
		\item The tree $T^\ast$ contains the edges $(v_n,v_k)$ for all $k \in K^\ast$. That is, all vertices that are attached to $v_k$ are connected directly to $v_n$ via edges that form a fan at $v_n$.
	\end{itemize}
\end{lemma}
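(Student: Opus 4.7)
My plan is to construct the rotation sequence explicitly by processing vertices in $X := \{1,\ldots,n-1\}\setminus((I\cap I^\ast)\cup(K\cap K^\ast))$ one at a time, each via a single rotation, giving a total of at most $|X| = n-1-|I\cap I^\ast|-|K\cap K^\ast|$ rotations. At each step the rotation targets a vertex $v_k\in X$: it removes $v_k$'s \emph{current} parent edge with respect to the rooting at $v_n$, and adds the desired edge $\rho_{T^\ast,v_n}(v_k)$, which equals $\delta_j(v_k)$ if $k\in I^\ast$ and $(v_n,v_k)$ if $k\in K^\ast$. Before starting, I verify that $T^\ast$ is a plane spanning tree: the parent function implicit in $I^\ast$, $K^\ast$, $\delta_j$ always reaches $v_n$ upon iteration, and the edges of $T^\ast$ are pairwise non-crossing since each is either a convex hull edge or incident to $v_n$.

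The key structural observation is that a rotation at $v_k$ which removes $v_k$'s current parent edge only changes $v_k$'s own parent edge in the rooted tree; every other vertex's parent edge is preserved (the subtree of $v_k$ gets reattached as a block via the new edge). This gives a clean invariant: after processing a subset $Y\subseteq X$, the current tree $T'$ consists of $\rho_{T^\ast,v_n}(v_l)$ for $l\in Y$ and $\rho_{T,v_n}(v_l)$ for $l\notin Y$, and the current parent edge of every $v_l$ in $T'$ is exactly the corresponding one. Since we only process vertices in $X$, the edges $\rho_{T,v_n}(v_l)$ for $l\in(I\cap I^\ast)\cup(K\cap K^\ast)$ are never removed, settling the non-flipping requirement of the lemma.

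For the order, I plan to traverse the left arm in increasing index order $v_1,v_2,\ldots,v_j$ and the right arm in decreasing index order $v_{n-1},v_{n-2},\ldots,v_{j+1}$, skipping indices outside $X$. With this order, tree preservation at each rotation is immediate: once the prefix of the same arm has been handled, the desired new parent of $v_k$ (one of $v_{k-1}$, $v_{k+1}$, or $v_n$) is connected to $v_n$ in $T'$ along a path of already-correct $T^\ast$-edges that does not pass through $v_k$, so the desired parent lies on the root side of the removed edge.

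The main obstacle is planarity of the rotations that add an edge $(v_n,v_k)$ for $k\in K^\ast$. Among the edges of $T'$, those of the form $\rho_{T^\ast,v_n}(v_l)$ cannot cross $(v_n,v_k)$ (convex hull edges cross no chord, and edges incident to $v_n$ share the endpoint $v_n$), so the only candidates for crossings are not-yet-processed edges $\rho_{T,v_n}(v_l)$ whose endpoints strictly straddle $v_k$. To ensure no such straddler is present at the time of the rotation, I plan a careful scheduling argument: any such straddler has one endpoint on each side of $v_k$, and by interleaving or postponing the $K^\ast$-rotations appropriately (delaying each $K^\ast$-vertex $v_k$ until the corresponding straddlers have been rotated out as parent edges of already-processed vertices on the far arm) one obtains a globally valid schedule that remains compatible with the arm orderings and hence with tree preservation. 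Making this scheduling precise, together with the case distinctions for straddlers whose far endpoint lies on the left arm versus the right arm and the special handling when $k$ is itself on one of the arms, constitutes the bulk of the technical work and is the step I expect to be most intricate.
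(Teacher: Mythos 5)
Your overall architecture (one rotation per vertex of $X$, removing the current parent edge of $v_k$ and inserting $\rho_{T^\ast,v_n}(v_k)$, with the count $n-1-\lvert I\cap I^\ast\rvert-\lvert K\cap K^\ast\rvert$ falling out immediately) matches the paper's, and your invariant about parent edges and the non-flipping of the protected edges is sound. However, there is a genuine gap exactly where you flag it: you never actually establish that the rotation inserting $(v_n,v_k)$ for $k\in K^\ast$ is crossing-free. You correctly identify that the only obstructions are not-yet-processed edges $\rho_{T,v_n}(v_l)$ that separate $v_k$ from $v_n$, but the proposed fix --- ``interleaving or postponing the $K^\ast$-rotations appropriately'' on top of the fixed left-arm-increasing / right-arm-decreasing order --- is only a plan, not an argument. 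In particular you would need to show that the resulting dependency relation (each $K^\ast$-vertex waits for all of its straddlers, each straddler's removal waits for its own arm prefix, and arm vertices may themselves be in $K^\ast$) admits a linear extension consistent with tree preservation; nothing in the proposal rules out a circular wait or shows that a straddler's owner is always scheduled earlier. Since this is the entire content of the planarity claim for the fan edges, the proof is incomplete as written.

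The paper avoids the scheduling problem altogether by not fixing the processing order in advance: at each step it considers the forest $T_{0,\mathrm{vis}}$ of edges of the not-yet-finalized part of the tree that are \emph{visible} from $v_n$, and processes a degree-one vertex $v_k$ of that forest. For such a vertex the unique visible incident edge is automatically $\rho_{T,v_n}(v_k)$, the gap $\delta_j(v_k)$ is empty, and the segment from $v_n$ to $v_k$ is uncrossed --- so both the removal and the insertion are simultaneously legitimate, and the induction is just ``$\lvert E_0\rvert$ grows by one per step.'' If you want to salvage your fixed-order approach, the cleanest repair is to replace your static arm ordering by this greedy visibility-based choice (or to prove a nesting lemma showing that among the unprocessed straddlers of any $v_k\in K^\ast$ there is always an innermost one whose own rotation is currently executable); as it stands, the crossing-freeness of the $K^\ast$ steps is asserted rather than proved.
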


We refer to Figure \ref{fig:concepts} for an illustration of Lemma~\ref{convex_hull_star} applied to the tree from Figure~\ref{concepts}.

\begin{proof}
	We set $E_0 = \rho_{T,v_n}(\{v_k\mid k \in(I \cap I^\ast)\cup(K\cap K^\ast)\})$ and $T_0 = T \setminus E_0$. Now consider the forest~$T_{0,vis}$ of edges that are \emph{visible} from $v_n$ in $T_0$. These are all edges for which no straight-line segment from $v_n$ to the edge is intersected by any other edge. 
	Further, let $v_k$ be a vertex of degree one in~$T_{0,vis}$. Then the only edge incident to $v_k$ in $T_{0,vis}$ is $\rho_{T,v_n}(v_k)$. 
In particular, if $k \in I^\ast$ then $\delta_j(v_k)$ does not contain any edge and is visible from $v_n$, 
and if $k \in K^\ast$ then the straight-line segment between $v_n$ and $v_k$ is uncrossed.
	
	We can execute a flip that exchanges the edge $\rho_{T,v_n}(v_k)$ with the edge $\delta_j(v_k)$ (if $k \in I^\ast$) or $(v_n,v_k)$ if $k\in K^\ast$ to get a valid tree.
	We replace $E_0\leftarrow E_0 \cup \{\delta_j(v_k)\}$ (resp.\ $E_0\leftarrow E_0 \cup \{(v_n,v_k)\}$) and $T_0\leftarrow T_0\setminus\{\delta_j(v_k)\}$ (resp.\ $T_0\leftarrow T_0 \setminus \{(v_n,v_k)\}$) and repeat the process. We repeat until $E_0=T$. This takes at most $n-1-\lvert I \cap I^\ast\rvert - \lvert K \cap K^\ast\rvert$ rotations, since the cardinality of $E_0$ increases by one in every iteration.
\end{proof}

\begin{figure}[ht]
	\centering
	\includegraphics[page=1, scale=0.55]{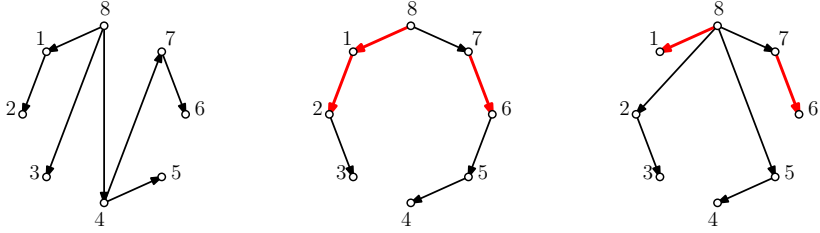}
	\caption{Illustration of Lemma \ref{convex_hull_star}. 
	Left: The tree from Figure~\ref{concepts} with $n = 8$. 
	Middle: Target structure of Lemma~\ref{convex_hull_star} with~$K^\ast=\emptyset$ and chosen index $j=3$. %
		The edges in $\rho_{T,v_n}(I \cap I^\ast) = \{(v_8,v_1),(v_1,v_2),(v_7,v_6)\}$ that do not get rotated when rotating from the left to the middle tree are marked in red. 
	Right: Target structure of Lemma~\ref{convex_hull_star} with $K^\ast =\{2,5\}$ and the same index $j=3$ as before. 
		The edges in $\rho_{T,v_n}(I \cap I^\ast)=\{(v_8,v_1),(v_7,v_6)\}$ that do not get rotated when rotating from the left to the right tree are marked in red.} 
	\label{fig:concepts}
\end{figure}

A gap $(v_i,v_{i+1})$ in a tree $T$ is called \emph{empty} if the edge $(v_i,v_{i+1})$ is not part of $T$. 
Note that in the target tree $T^\ast$ from Lemma~\ref{convex_hull_star}, the gap  $(v_j,v_{j+1})$ is empty. 
Moreover, by setting $K^\ast=\emptyset$, a flip sequence as in Lemma \ref{convex_hull_star} can be used to rotate all edges to the convex hull such that all convex hull edges that point away from $v_n$ do not have to be rotated.

\subsection{The Sets $L$ and $R$}

\begin{proposition}\label{lr}
	There exists a flip sequence from $T_{\text{in}}$ to $T_{\text{tar}}$ that uses $2(n-1) -  max\big\{\lvert L \rvert, \lvert R \rvert\}$ rotations.
\end{proposition}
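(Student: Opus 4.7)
We may assume without loss of generality that $|L| \geq |R|$, since reversing the linear order of the vertices swaps the roles of $L$ and $R$; the goal then becomes to describe a rotation sequence from $T_{\text{in}}$ to $T_{\text{tar}}$ of length at most $2(n-1) - |L|$. The strategy is to handle each pair in $L$ by a single \emph{perfect} rotation directly from $\rho_{T_{\text{in}}}(v_i)$ to $\rho_{T_{\text{tar}}}(v_i)$, which is a valid rotation candidate because both edges share the vertex $v_i$ by the definition of an $L$-pair; and to handle each pair in $R \cup J \cup D$ by two rotations that pass through a canonical intermediate attached edge of the form $\delta_j(v_i)$ or $(v_n,v_i)$. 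Since $L$, $R$, $J$, $D$ partition the $n-1$ non-$v_n$ vertices, the two types of contributions combine to
\[
|L| + 2\bigl(|R|+|J|+|D|\bigr) \;=\; |L| + 2\bigl((n-1)-|L|\bigr) \;=\; 2(n-1) - |L|,
\]
which is exactly the bound claimed in the proposition.

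To realize this count, I would split the construction into three phases. In \emph{Phase~1}, we apply a variant of Lemma~\ref{convex_hull_star} to $T_{\text{in}}$ that preserves the $|L|$ attached edges of $L$-pair vertices (even when these are not canonical) and rotates only the remaining $n-1-|L|$ attached edges to canonical form, yielding an intermediate tree $T^*_1$ in at most $n-1-|L|$ rotations. In \emph{Phase~2}, we perform the $|L|$ perfect rotations $\rho_{T_{\text{in}}}(v_i) \to \rho_{T_{\text{tar}}}(v_i)$, one for each $L$-pair, in a carefully chosen order, to reach a tree $T^*_2$ whose $L$-pair attached edges now agree with those of $T_{\text{tar}}$. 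In \emph{Phase~3}, we apply the variant of Lemma~\ref{convex_hull_star} in the reverse direction (from $T_{\text{tar}}$ back to $T^*_2$) to handle the remaining non-$L$ attached edges, costing at most another $n-1-|L|$ rotations. Summing the three phase costs recovers the total $2(n-1) - |L|$.

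The main obstacle is twofold. First, the greedy leaf-processing argument in the proof of Lemma~\ref{convex_hull_star} has to be extended so that it can treat the designated $L$-pair attached edges as preserved even when they are not of the canonical forms $\delta_j(v_i)$ or $(v_n,v_i)$; one has to verify that the visible-forest leaves encountered during the process still correspond to the non-$L$ vertices whose attached edges are meant to be rotated. Second, an order for the $|L|$ perfect flips of Phase~2 must be exhibited under which every individual rotation preserves planarity and the tree structure. Both points crucially exploit the fact that, for every $L$-pair vertex $v_i$, both $\rho_{T_{\text{in}}}(v_i)$ and $\rho_{T_{\text{tar}}}(v_i)$ have $v_i$ as their common left endpoint, which constrains the surrounding tree structure enough to carry the argument through.
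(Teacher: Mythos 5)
Your three-phase skeleton and the resulting count $|L| + 2\bigl((n-1)-|L|\bigr) = 2(n-1)-|L|$ coincide exactly with the paper's proof, which (for the symmetric case $|R|>|L|$) rotates everything except the $R$-pair edges to the convex hull, performs one perfect rotation per $R$-pair, and then undoes the first phase on the target side. However, the two points you defer as ``the main obstacle'' are not loose ends to be checked --- they are the entire technical content of the proposition, and your proposal does not supply arguments for either. The paper isolates them as Lemma~\ref{cl} and Lemma~\ref{lem:lrresolve}, and each requires a genuine idea. For your Phase~1, rather than modifying the greedy visibility argument inside Lemma~\ref{convex_hull_star} to ``preserve non-canonical edges'' (which you would still have to prove terminates in $n-1-|L|$ steps), the paper cuts the point set into faces along the $L$-pair edges, so that within each face those edges become convex hull edges; it then applies Lemma~\ref{convex_hull_star} verbatim to each face with a choice of root and index $j$ that places the $L$-edges in $I\cap I^\ast$, and a careful sum over faces (accounting for diagonals bordering two faces versus convex hull edges bordering one) yields the bound $n-1-|R|$. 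A further output of this phase that your plan omits, but which is needed later, is that the gap adjacent to each attachment point of an $L$-pair is left empty.

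For your Phase~2, asserting that an order exists ``under which every individual rotation preserves planarity and the tree structure'' because the shared left endpoint ``constrains the surrounding tree structure enough'' is not an argument. The paper proves this via the conflict graph of \cite{bjerkevik2024flippingnoncrossingspanningtrees}: the prior work gives acyclicity for the sets of above pairs and of below pairs separately, and the new ingredient here (Claim~\ref{acyclic}) is that on $LA\cup LB$ all conflict edges are directed from one of the two subclasses to the other and never back, so the union is still acyclic; repeatedly flipping a source of this DAG then gives the $|L|$ perfect rotations. Without an argument of this kind your Phase~2 is unsupported --- note that naively interleaving above- and below-type direct flips can create crossings or cycles, which is precisely why the combined acyclicity has to be established rather than inherited. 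So the proposal is a correct outline of the paper's strategy, but as a proof it has gaps at exactly the two places you flag.
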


Without loss of generality, let $\lvert R \rvert > \lvert L \rvert$. For the sets $R$, we will (1) rotate all edges except the ones in $R$ to the convex hull such that every edge in $R$ has an empty gap to the left of its attachment point, (2) rotate all edges in pairs in $R$ to their target position, and (3) obtain the final tree. Steps (1) and (3) are described in Lemma \ref{cl}, Step (2) happens in Lemma \ref{lem:lrresolve}. An illustration of the strategy can be seen in Figure~\ref{LR}.

\begin{restatable}[$\star$]{lemma}{tohull}\label{cl}
	There exists a rotation sequence of length at most $n-1-\lvert R \rvert$ that rotates all edges except the ones in $R$ to the convex hull such that each edge is rotated into the gap in the convex hull that is left of its assigned vertex. In particular, for an edge $e=(v_i,v_r)$ with~$i<r$ that belongs to a pair in $R$, the gap $(v_{r-1},v_r)$ in the convex hull is empty. A symmetric argument holds for pairs in $L$.
\end{restatable}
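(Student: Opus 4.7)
The plan is to adapt the peeling argument from the proof of Lemma \ref{convex_hull_star}. I would choose $j := n-1$ so that $\delta_{n-1}(v_k) = (v_{k-1},v_k)$ for every $k \in \{1,\dots,n-1\}$, and initialize the ``finalized'' set $E_0$ with the $R$-pair edges of $T_{\text{in}}$ instead of the usual set of already-well-placed edges. Letting $P_R := \{r : v_r \text{ is the attachment of a pair in } R\}$, we have $\lvert P_R \rvert = \lvert R \rvert$, and $1 \notin P_R$ because every $R$-pair edge goes to a vertex of strictly smaller index. The intended target tree would be
\[
T^\ast := \{\rho_{T_{\text{in}},v_n}(v_r) : r \in P_R\} \;\cup\; \{(v_{k-1},v_k) : k \in \{1,\dots,n-1\} \setminus P_R\}.
\]
A preliminary step would show that $T^\ast$ is a plane spanning tree: starting from the convex-hull path $v_n v_1 \dots v_{n-1}$ and deleting the $\lvert P_R \rvert$ gap edges $(v_{r-1},v_r)$ splits the path into $\lvert P_R \rvert + 1$ sub-paths, each of which is then reconnected by the corresponding $R$-pair edge $(v_i,v_r)$ with $i<r$; a bookkeeping on edge counts confirms connectivity and acyclicity.

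Next, I would set $T_0 := T_{\text{in}} \setminus E_0$ and iterate until $T_0 = \emptyset$. Let $T_{0,\text{vis}}$ denote the sub-forest of $T_0$ consisting of edges visible from $v_n$, and pick a vertex $v_k$ of degree one in $T_{0,\text{vis}}$. If $k \notin P_R$, the visible edge at $v_k$ is $\rho_{T_{\text{in}},v_n}(v_k)$ by the same reasoning as in Lemma \ref{convex_hull_star}, and the rotation would be to the convex-hull gap $(v_{k-1},v_k)$. If $k \in P_R$, then $\rho(v_k) \in E_0$, so the unique visible edge at $v_k$ is some child edge $(v_k,v_c) = \rho(v_c)$; since $\rho(v_c) \in T_0$ forces $c \notin P_R$, I would redirect the iteration to $v_c$ and rotate $\rho(v_c)$ to $(v_{c-1},v_c)$. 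Each iteration removes exactly one edge from $T_0$ and adds one convex-hull edge to $E_0$, so the process terminates after at most $\lvert T_0 \rvert = n-1-\lvert R \rvert$ rotations with $E_0 = T^\ast$. The empty-gap property would then be immediate: for $r \in P_R$, the gap $(v_{r-1},v_r) = \delta_{n-1}(v_r)$ is $\delta$-assigned only to $v_r$, and since $v_r$ is never processed, this gap is never filled. The symmetric statement for pairs in $L$ would follow by choosing $j := 0$ (so that every non-$L$ vertex is assigned its right convex-hull gap) and running the analogous peeling.

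The main obstacle lies in verifying that the adapted peeling remains well-defined at every iteration. Three properties would need to be checked: (i)~$T_{0,\text{vis}}$ contains a vertex of degree one whenever $T_0 \neq \emptyset$, which should follow because removing the $R$-pair edges splits $T_{\text{in}}$ into a forest whose visible part still admits a leaf by the visibility argument of Lemma \ref{convex_hull_star}; (ii)~when redirecting from a $P_R$-leaf $v_k$ to a child $v_c$, the gap $(v_{c-1},v_c)$ is empty in the current tree so that the rotation yields a valid plane spanning tree; and (iii)~no two iterations attempt to fill the same convex-hull gap. Property (ii) is the hard part: if $(v_{c-1},v_c)$ were already present in the current tree, it could only equal $\rho(v_{c-1})$ (since it is neither an $R$-edge nor previously added by the peeling), and a careful analysis using the visibility structure, analogous to the one in Lemma \ref{convex_hull_star}, would show this contradicts the fact that $v_k$ is a leaf of $T_{0,\text{vis}}$ whose unique visible edge in $T_0$ is $(v_k,v_c)$. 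These verifications and the associated bookkeeping would be carried out in Appendix \ref{app:upper}.
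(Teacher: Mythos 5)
Your target tree is the right one and the overall plan (peel edges one at a time, rotating each into the hull gap left of its attachment vertex while freezing the $R$-edges) matches the spirit of the paper's argument. However, there is a genuine gap exactly at the point you flag as ``the hard part'', and it is not merely a missing verification: running the peeling of Lemma~\ref{convex_hull_star} \emph{globally}, rooted at $v_n$, with the $R$-edges pre-inserted into $E_0$, breaks the invariant that makes that peeling work. The correctness of Lemma~\ref{convex_hull_star} rests on the fact that $E_0$ only ever contains hull edges $\delta_j(v_k)$ and edges incident to the root, neither of which can obstruct visibility from the root; a degree-one vertex of $T_{0,\text{vis}}$ then has an empty, visible target gap. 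In your setup $E_0$ starts out containing the $R$-edges, which are diagonals, and every edge covered by an $R$-edge $(v_i,v_r)$ is invisible from $v_n$ for the entire process. Concretely, take $n=6$ and $T_{\text{in}}=\{(v_6,v_2),(v_2,v_5),(v_5,v_3),(v_3,v_4),(v_2,v_1)\}$ with the pair at $v_5$ (edge $(v_2,v_5)$) being the only $R$-pair: after the edges attached to $v_1$ and $v_2$ are rotated to the hull, $T_0=\{(v_5,v_3),(v_3,v_4)\}$ consists entirely of edges hidden behind $(v_2,v_5)$, so $T_{0,\text{vis}}=\emptyset$ while $T_0\neq\emptyset$ and the peeling stalls. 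If instead you weaken visibility to be blocked only by edges of $T_0$, the process no longer stalls, but then the justification that the chosen gap $(v_{c-1},v_c)$ is empty and that the rotation neither closes a cycle nor crosses anything no longer follows from the visibility argument you are importing --- which is precisely properties (i)--(iii) that you leave open. Your redirection rule for $P_R$-leaves does not repair this, since in the configuration above the stalled state has no $P_R$-vertex in $T_{0,\text{vis}}$ at all.

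The paper resolves exactly this obstruction by \emph{cutting along the $R$-edges first}: each resulting face $F$ is a convex point set in which the bounding $R$-edges are convex hull edges (hence never block visibility), and Lemma~\ref{convex_hull_star} is applied verbatim to $T[F]$ with the root taken to be the \emph{rightmost vertex $v_r$ of $F$} (not $v_n$) and $j=r-1$, so that the gap $(v_{r-1},v_r)$ of each face stays empty. The $R$-edges land in $I[F]\cap I^\ast[F]$ for every face they bound and are therefore never rotated, and the bound $n-1-\lvert R\rvert$ falls out of summing $\lvert F\rvert-1-\lvert I[F]\cap I^\ast[F]\rvert$ over the $\lvert R_{\text{diag}}\rvert+1$ faces. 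The re-rooting per face is not cosmetic: it is what restores the ``leaf of the visible forest has an empty visible target gap'' invariant for vertices covered by an $R$-edge. To salvage your global argument you would essentially have to reprove Lemma~\ref{convex_hull_star} from scratch under a weaker visibility notion, at which point the face decomposition is the cleaner route.
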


\begin{figure}[ht]
	\centering
	\includegraphics[scale=0.55]{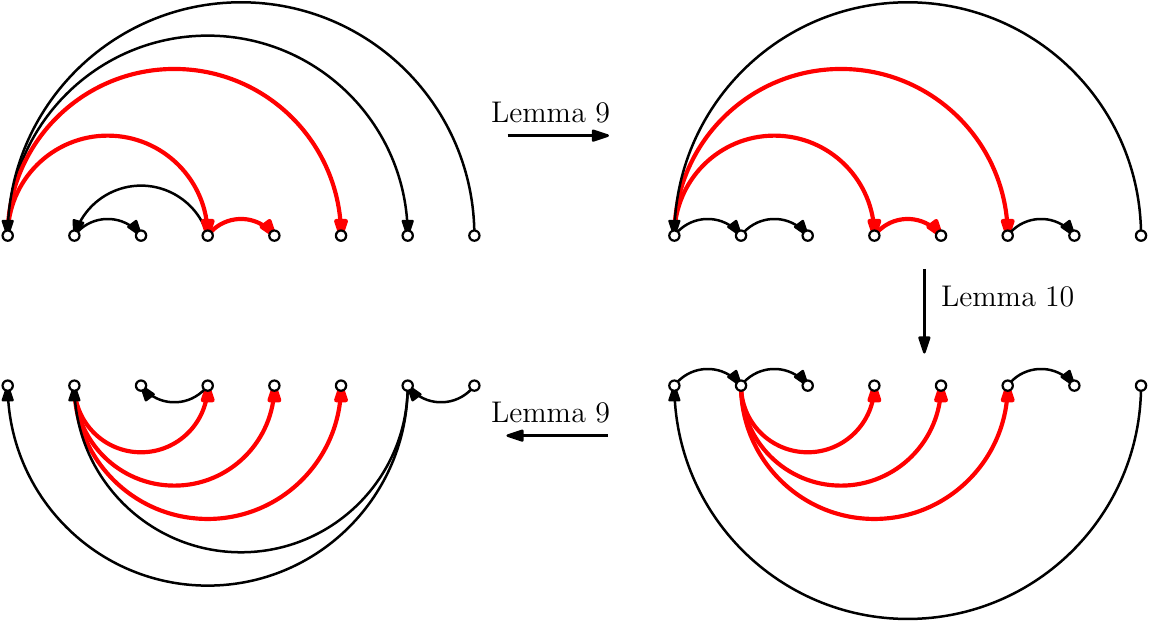}
	\caption{Steps of the proof of Proposition \ref{lr}. First, rotate all edges except the ones in R (marked in red) to the convex hull, then rotate the pairs in $R$. At last, rotate all remaining edges from the convex hull back in.} 
	\label{LR}
\end{figure}

Lemma \ref{cl} is an iterative application of Lemma \ref{convex_hull_star}. We apply the lemma  with $K^\ast=\emptyset$ to every subtree that results from cutting the tree along edges in $R$ (resp.\ $L$). Let $T_1$ and $T_2$ be the trees obtained from applying Lemma \ref{cl} to $T_{\text{in}}$ and $T_{\text{tar}}$.

\begin{restatable}[$\star$]{lemma}{lrresolve}\label{lem:lrresolve}
	There exists a rotation sequence from $T_1$ to $T_2$ of length at most $\lvert R\rvert$ (resp.~$\lvert L \rvert $).
\end{restatable}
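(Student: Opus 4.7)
The plan is to perform exactly one rotation per non-happy $R$-pair, directly swapping $e_{\text{in}}^{(i)} = (v_i,v_j)$ with $e_{\text{tar}}^{(i)} = (v_i,v_k)$. Since both edges share the attachment vertex $v_i$, each such operation is indeed a rotation, and skipping happy pairs keeps the total count at most $|R|$. The task then reduces to verifying tree-validity and planarity at each step, which I plan to achieve by choosing a suitable processing order.

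Tree-validity rests on an invariant that holds in $T_1$ and is preserved by our rotations: when rooted at $v_n$, every non-root vertex has a parent of strictly smaller index (indices mod~$n$ with $v_0 = v_n$). In $T_1$ this is immediate, since every non-$R$ vertex was rotated into the gap directly to its left by Lemma~\ref{cl}, and every $R$-attachment $v_i$ has parent $v_j$ with $j<i$ by definition. A rotation of an $R$-pair preserves the invariant because the new parent of $v_i$ is $v_k$ with $k<i$. Consequently, descendants of any $R$-attachment $v_i$ are always confined to $\{v_{i+1},\ldots,v_{n-1}\}$, so $v_k$ is never a descendant of $v_i$ and each rotation produces a valid spanning tree.

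For planarity, I would define a dependency $v_{i'} \prec v_i$ whenever $e_{\text{in}}^{(i')}$ crosses $e_{\text{tar}}^{(i)}$ in the polygon, and process the $R$-pairs in a topological order of $\prec$. Convex hull edges cross no diagonals, and the $e_{\text{tar}}$-edges already placed coexist in the plane tree $T_{\text{tar}}$ without crossings, so the only potential conflict for an incoming $e_{\text{tar}}^{(i)}$ is with a not-yet-processed $e_{\text{in}}^{(i')}$, which the topological order forbids. The main obstacle is showing that $\prec$ is acyclic. To rule out a $2$-cycle with $i<i'$ and edges $(v_i,v_j), (v_i,v_k), (v_{i'},v_{j'}), (v_{i'},v_{k'})$ where $j,k<i$ and $j',k'<i'$, I observe that the two crossing assumptions translate to the endpoint interleavings $j<k'<i<i'$ and $k<j'<i<i'$, while non-crossing of $(v_i,v_j)$ and $(v_{i'},v_{j'})$ in the plane tree $T_{\text{in}}$ forces $j \geq j'$, and, analogously, non-crossing of $(v_i,v_k)$ and $(v_{i'},v_{k'})$ in $T_{\text{tar}}$ forces $k \geq k'$. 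Combining these yields $k \geq k' > j$ and $j \geq j' > k$, a contradiction. The $|L|$-bound then follows by the left-right symmetric argument.
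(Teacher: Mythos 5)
Your overall strategy --- one perfect rotation per non-happy pair in $R$, with tree-validity and planarity of each intermediate tree verified separately --- is essentially the paper's, and your tree-validity argument via the ``every parent has strictly smaller index'' invariant is correct; it is in fact a clean substitute for the covering conditions that the paper builds directly into its conflict graph. The gap is in the ordering argument. A topological order of $\prec$ exists only if $\prec$ has no directed cycles of \emph{any} length, but you only exclude $2$-cycles. Antisymmetry does not imply acyclicity (a cycle $P_1\prec P_2\prec P_3\prec P_1$ has no $2$-cycle), and nothing in your argument rules out longer cycles: the condition ``$e_{\text{in}}^{(i')}$ crosses $e_{\text{tar}}^{(i)}$'' can orient a dependency either from a smaller attachment index to a larger one or vice versa (depending on whether the in-edge pokes out of the target edge on its left or its right), so the relation is not contained in any obvious linear order, and excluding cycles of every length is precisely the hard core of the lemma.

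The paper closes exactly this point by reusing \cite[Lemma~3.2]{bjerkevik2024flippingnoncrossingspanningtrees}: the conflict graph restricted to the above pairs $RA$ is acyclic, the one restricted to the below pairs $RB$ is acyclic, and every conflict between an $RA$-pair and an $RB$-pair is oriented from $RA$ to $RB$ --- this last step being a short endpoint-interleaving computation very similar in spirit to your $2$-cycle argument. Any directed cycle would then have to lie entirely inside $RA$ or entirely inside $RB$, which is impossible. To complete your proof you would either have to import that acyclicity result for $RA$ and $RB$ separately and supply the $RA\to RB$ orientation claim, or give a direct argument excluding cycles of every length; the $2$-cycle computation alone does not suffice.
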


The proof of Lemma \ref{lem:lrresolve} reduces to the notion of conflict graphs from \cite{bjerkevik2024flippingnoncrossingspanningtrees}. In Section~\ref{sec:upperbound}, left-attached and right attached pairs have been combined to above and below pairs, but we also get an acyclic conflict graph, if we pair them according to whether they are right-attached of left-attached.

\subsection{The Sets $D$ and $J$}

\begin{proposition}\label{dj}
	There exists a flip sequence from $T_{\text{in}}$ to $T_{\text{tar}}$ that uses $2(n-1) -  max\big\{\lvert D \rvert, \lvert J \rvert\}$ rotations.
\end{proposition}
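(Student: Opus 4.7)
The plan is to mirror the three-step strategy used in the proof of Proposition~\ref{lr}. Without loss of generality assume $\lvert D\rvert \geq \lvert J\rvert$. First I would transform $T_{\text{in}}$ into an intermediate tree $T_1$ that still contains all $T_{\text{in}}$-edges from pairs in $D$ but whose remaining edges have been rotated onto the convex hull in a prescribed way. Symmetrically, I would transform $T_{\text{tar}}$ into $T_2$ that still contains all $T_{\text{tar}}$-edges from pairs in $D$. Finally, I would rotate each pair in $D$ directly at its shared attachment vertex $v_i$, taking $(v_i,v_j)$ to $(v_i,v_k)$, producing a path from $T_1$ to $T_2$ of length $\lvert D\rvert$. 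Reversing the second sequence and concatenating the three pieces then yields $2(n-1-\lvert D\rvert)+\lvert D\rvert=2(n-1)-\lvert D\rvert=2(n-1)-\max\{\lvert D\rvert,\lvert J\rvert\}$ rotations, as required.

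For Steps~1 and~2, the natural approach is to apply Lemma~\ref{convex_hull_star} iteratively to the subforests obtained by cutting $T_{\text{in}}$ (resp.\ $T_{\text{tar}}$) along the edges in $D$. Each diving edge $(v_i,v_j)$ partitions the remaining vertices into two arcs, and a careful choice of the distinguished empty gap parameter $j$ in each subinstance ensures that every convex hull edge rotated into place lies on the correct arc. Taking $K^{\ast}=\emptyset$ throughout and exploiting the saving of $\lvert I\cap I^{\ast}\rvert$ rotations in each subinstance, the number of rotations in each of Steps~1 and~2 is at most $n-1-\lvert D\rvert$, and no edge of a diving pair is rotated.

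The main obstacle will be Step~3: realizing the $\lvert D\rvert$ direct rotations as a valid sequence. Unlike the left- and right-attached case handled in Proposition~\ref{lr}, a diving pair has its two edges on opposite sides of the attachment vertex $v_i$, so removing $(v_i,v_j)$ from the current intermediate tree and inserting $(v_i,v_k)$ only yields a plane spanning tree if $v_k$ and $v_i$ lie in different components after the deletion. I expect that the specific convex hull configuration built in Step~1, together with performing the rotations in an order that processes more deeply nested diving pairs first, guarantees this separation at every step; an analogue of the conflict-graph argument from Lemma~\ref{lem:lrresolve}, now formulated for diving (and jumping) pairs rather than left- and right-attached pairs, should certify that the induced dependency is acyclic. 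Once such an ordering is in place, each rotation in Step~3 is a perfect flip between the two edges of one pair in $D$, and the total is exactly $\lvert D\rvert$ rotations, completing the bound.
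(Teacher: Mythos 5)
Your plan transplants the three-phase strategy of Proposition~\ref{lr} to diving pairs, but this symmetric reduction breaks down at a point earlier than the one you flag. The prescribed intermediate tree $T_2$ is not achievable: if $T_2$ is to retain a target diving edge $(v_i,v_k)$ with $k>i$ while every other vertex covered by it receives the convex hull edge in the gap to its left, then the edges $(v_i,v_{i+1}),\dots,(v_{k-1},v_k)$ together with $(v_i,v_k)$ close a cycle. (For $R$-pairs this does not happen because the attachment vertex is the \emph{right} endpoint of both edges of the pair, so the gap immediately left of it stays empty and both edges of the pair cover that same empty gap --- which is also exactly the mechanism that validates the direct rotations in Lemma~\ref{lem:lrresolve}.) So for diving pairs some gap strictly inside the span of each target edge must be left empty in $T_2$, the hull configurations of $T_1$ and $T_2$ then no longer coincide outside the $D$-edges, and the middle phase can no longer consist of only $\lvert D\rvert$ flips. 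On top of that, the two edges of a diving pair cover disjoint sets of gaps, so the covering argument that certifies each direct rotation as a valid tree operation has no analogue here; your appeal to ``an analogue of the conflict-graph argument'' is precisely the missing content, and it is not clear it can be supplied in this form.

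The paper avoids all of this by giving up the symmetry. It builds an asymmetric intermediate tree $T^\ast$ (Lemma~\ref{lem:stars}) that already contains all \emph{target} edges of the jump pairs, together with certain hull edges and edges fanning into the joint vertices of jump pairs or into $v_n$ (using the $K^\ast$ mechanism of Lemma~\ref{convex_hull_star}); reaching $T^\ast$ from $T_{\text{in}}$ costs the full $n-1$ rotations with no savings. The entire saving of $\lvert J\rvert$ is then harvested on the other side, where Lemma~\ref{lem:djresolve} transforms $T_{\text{tar}}$ into $T^\ast$ in at most $n-1-\lvert J\rvert$ rotations by cutting along the jump pairs' edges. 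There is no middle phase of perfect flips at all. The bound $2(n-1)-\lvert D\rvert$ follows by exchanging the roles of $T_{\text{in}}$ and $T_{\text{tar}}$ (which swaps $D$ and $J$), not by running your construction on $D$ directly.
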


As the first rotation of the rotation sequence we rotate $\rho_{T_{\text{in}},v_n}(v_1)$ into the edge $(v_1,v_n)$. From now on, we say that a \emph{gap is visible from above} if the gap is empty and shares a \emph{face} in the tree (more specifically a face of the drawing of $T_{\text{in}}$ combined with the convex hull) with $(v_1,v_n)$. The concept is illustrated in Figure \ref{fig:from_above}. Note that there is always a unique gap that is visible from above and that it has an edge $e_\ell$ that is attached to its left vertex and (maybe) an edge $e_r$ that is attached to its right vertex. $e_r$ may not exist if $g$ is the rightmost gap. Further, we show that $e_\ell$ and $e_r$ share a face with $(v_1,v_n)$.

\begin{lemma}
	$e_\ell$ and $e_r$ share a face with $(v_1,v_n)$.
\end{lemma}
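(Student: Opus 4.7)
The plan is as follows. Write $T$ for the tree obtained immediately after the first rotation, so that $(v_1,v_n)\in T$, and let $F$ be the interior face of the drawing of $T$ together with the convex hull that is incident to $(v_1,v_n)$. Since $g=(v_i,v_{i+1})$ is visible from above, $g$ also lies on $\partial F$. My aim is to show that the edges of $\partial F$ at $v_i$ and at $v_{i+1}$ (other than $g$ itself) are precisely $e_\ell=\rho_{T,v_n}(v_i)$ and $e_r=\rho_{T,v_n}(v_{i+1})$, which immediately yields the lemma.

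The first step is to pin down the combinatorial structure of $T$ around $(v_1,v_n)$. Removing $(v_1,v_n)$ from $T$ splits it into two subtrees $T_1\ni v_1$ and $T_n\ni v_n$ with vertex sets $V_1$ and $V_n$. By the standard fact that deleting an edge of a plane spanning tree on a convex point set partitions the vertices into two contiguous arcs along the hull, we have $V_1=\{v_1,\dots,v_k\}$ and $V_n=\{v_{k+1},\dots,v_n\}$ for some $k\in\{1,\dots,n-1\}$. The only short convex-hull edge joining $V_1$ and $V_n$ is $(v_k,v_{k+1})$, and it is missing from $T$ because $T\setminus\{(v_1,v_n)\}$ contains no edge between the two components; conversely, any other gap $(v_j,v_{j+1})$ has both endpoints in the same component, so the unique tree path between them stays inside $T_1$ or $T_n$, and the face bounded by that path together with the gap does not touch $(v_1,v_n)$. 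Hence $(v_k,v_{k+1})$ is the unique gap visible from above, which forces $v_i=v_k$ and $v_{i+1}=v_{k+1}$.

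Next I would trace $\partial F$. Its boundary walk consists of $(v_1,v_n)$, the unique $T_1$-path from $v_1$ to $v_k$ running along the side of $T_1$ facing $T_n$, the gap $g$, and the unique $T_n$-path from $v_{k+1}$ to $v_n$. In particular, the tree edge on $\partial F$ incident to $v_k$ (other than $g$) is the last edge of the $T_1$-path from $v_1$ to $v_k$, and symmetrically the tree edge on $\partial F$ incident to $v_{k+1}$ (other than $g$) is the last edge of the $T_n$-path from $v_{k+1}$ to $v_n$.

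Finally I would identify those two boundary edges as $e_\ell$ and $e_r$. The unique path in $T$ from $v_n$ to $v_k$ must begin with $(v_n,v_1)$ and then continue inside $T_1$, since deleting $(v_1,v_n)$ disconnects $v_k\in V_1$ from $v_n$; hence $e_\ell=\rho_{T,v_n}(v_k)$ coincides with the last edge of the $T_1$-path from $v_1$ to $v_k$, which lies on $\partial F$ by the previous step. A symmetric argument in $T_n$ identifies $e_r=\rho_{T,v_n}(v_{k+1})$ with the last edge of the $v_n$-to-$v_{k+1}$ path in $T_n$, also on $\partial F$. The main obstacle is the structural step: once we know that $V_1$ and $V_n$ are contiguous hull arcs, tracing $\partial F$ and matching boundary edges to the bijection $\rho_{T,v_n}$ is essentially bookkeeping.
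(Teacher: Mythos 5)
Your argument reaches the right conclusion but by a genuinely different route. The paper's proof is a pure counting argument that never decomposes the tree: the face $F$ containing $(v_1,v_n)$ has $k$ additional vertices, hence exactly $k$ further tree edges on its boundary (exactly one consecutive pair on a face boundary is a non-edge); each such edge is attached via $\rho_{T,v_n}$ to one of those $k$ vertices, distinct edges to distinct vertices, so by counting \emph{every} vertex of $F$ other than $v_1,v_n$ has its attached edge on $\partial F$ --- in particular $v_i$ and $v_{i+1}$, which lie on $F$ because the visible gap does. You instead cut $T$ along $(v_1,v_n)$, use contiguity of the two components to identify the visible gap as $(v_k,v_{k+1})$, and trace $\partial F$ as two tree paths; this is more explicit (it names $e_\ell$ and $e_r$ as the last edges of the $v_1$-to-$v_k$ and $v_n$-to-$v_{k+1}$ paths) but shifts all the work into the step you dismiss as ``bookkeeping''. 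That step --- every edge of the $v_1$-to-$v_k$ path genuinely borders $F$ --- is the real content of your route and should be argued: it holds because if a diagonal $(v_p,v_q)$ of $T$ strictly covered an interior vertex $u$ of that path, planarity would force both the $v_1$-to-$u$ and $u$-to-$v_k$ subpaths to pass through $\{v_p,v_q\}$, so the path would contain both $v_p$ and $v_q$ with $u$ strictly between them, creating a cycle with the edge $(v_p,v_q)$. The paper's bijection argument delivers this conclusion for free without ever inspecting the paths. Two minor points: you should also record why $(v_k,v_{k+1})$ \emph{is} visible (no diagonal of $T$ covers it, since such an edge would join the two components), rather than only excluding the other gaps; and in the boundary case $k+1=n$ the edge $e_r$ does not exist, so your symmetric argument should be restricted to $k+1<n$.
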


\begin{proof}
	Assume $v_1$ and $v_n$ share a face with $k$ additional vertices. Then it also has to share a face with $k$ further edges in order for $T$ to be a tree. Those $k$ edges are exactly the edges that are attached to the $k$ vertices and, thus, include $e_\ell$ and $e_r$.
\end{proof}

\begin{figure}[ht]
	\centering
	\includegraphics[scale=0.55]{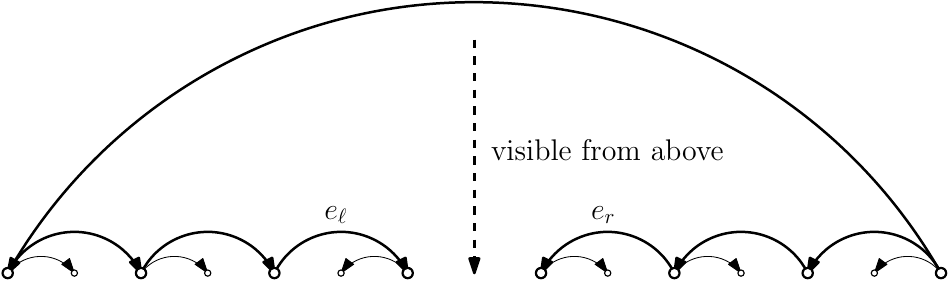}
	\caption{Gap that is visible from above} 
	\label{fig:from_above}
\end{figure}

\pagebreak

\begin{restatable}[$\star$]{lemma}{stars}\label{lem:stars}
	There exists a flip sequence of length at most $n-1$ that transforms $T_{\text{in}}$ into a tree $T^\ast$ that only contains the following three types of edges:
	\begin{itemize}
		\item[\textbf{(A)}] some convex hull edges attached to their right vertex,
		\item[\textbf{(B)}] all target edges of jump pairs of $(T_{\text{in}},T_{\text{tar}})$, and
		\item[\textbf{(C)}] some edges that are attached to their left vertex and have as a second vertex the joint vertex of a jump pair of $(T_{\text{in}},T_{\text{tar}})$ or the vertex $v_n$.
	\end{itemize}
\end{restatable}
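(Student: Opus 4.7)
The plan is to iteratively perform rotations guided by the \emph{gap visible from above}, using at most one rotation per vertex in $\{v_1,\ldots,v_{n-1}\}$. The first rotation of the sequence (described before the lemma) replaces $\rho_{T_{\text{in}},v_n}(v_1)$ with $(v_1,v_n)$, which finalizes $v_1$'s attached edge as a type~(C) edge (attached to $v_1$ with right endpoint $v_n$).

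For each subsequent iteration, I would locate the unique gap $g=(v_a,v_{a+1})$ visible from above in the current tree and consider the adjacent edges $e_\ell$ and $e_r$. I would then perform a single rotation that finalizes the attached edge of one endpoint of $g$: namely, $v_{a+1}$ via $e_r$, or $v_a$ via $e_\ell$ in the boundary case $v_{a+1}=v_n$. The rotation target depends on the nature of that endpoint. If it is the joint vertex of a jump pair, the rotation produces the jump target edge (type~(B)). Otherwise, the rotation produces either the convex hull edge filling $g$ (type~(A)) or an edge to a jump joint or $v_n$ on the right (type~(C)); the choice is made so that $T^\ast$ remains a valid spanning tree.

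Each such rotation is valid because the removed and added edges share the chosen endpoint, and the added edge lies either on the convex hull or within the face currently containing $(v_1,v_n)$, so no crossings arise. Furthermore, a single rotation only modifies the $\rho$-image of the processed vertex, which means each iteration finalizes exactly one vertex's attached edge. Together with the initial rotation, this processes all $n-1$ vertices in $\{v_1,\ldots,v_{n-1}\}$, giving at most $n-1$ rotations in total, while maintaining the ``unique gap visible from above'' invariant throughout.

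The main obstacle will be the geometric verification at each step. The hard part is to show that when the processed endpoint is a jump joint, the jump target edge lies within the face containing $(v_1,v_n)$ and can therefore be added without crossings; likewise, to argue that the uniqueness of the gap visible from above persists after every rotation. A secondary challenge is to make the choice between types~(A) and~(C) for non-joint vertices in a coherent manner, so that the resulting tree $T^\ast$ is a proper spanning tree made only of edges of the three allowed types.
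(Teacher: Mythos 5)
Your proposal reproduces the correct high-level skeleton (iterate on the unique gap visible from above, perform one rotation per step, terminate after at most $n-1$ rotations), but the step you defer as ``the main obstacle'' is precisely the crux of the argument, and your plan for it would fail as stated. When the right endpoint $v_{i+1}$ of the visible gap is the joint vertex of a jump pair, you propose to rotate $e_r$ directly into the jump target $e_r'=(v_j,v_{i+1})$ with $j<i$. But $e_r'$ covers the vertices $v_{j+1},\ldots,v_i$, and in the current tree these vertices may still carry attached edges that are not convex hull edges; adding $e_r'$ over them can create a crossing or disconnect them from the face containing $(v_1,v_n)$, so the direct rotation is in general not available. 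The paper resolves this with an extra case: if some vertex strictly between $v_j$ and $v_{i+1}$ has a non-convex-hull attached edge, take the one with the \emph{rightmost} such attachment point $v_k$, rotate that edge (not $e_r$) into $(v_k,v_{i+1})$ --- which is exactly a type~(C) edge hanging from the jump joint --- contract it into $v_{i+1}$, and only retry the jump rotation later with a strictly shorter target. The validity of this auxiliary rotation rests on the observation that the rightmost non-convex-hull attached edge in that range is right-attached, so everything between $v_k$ and $v_{i+1}$ is already on the convex hull. Without this preprocessing mechanism, the type~(C) edges in the statement have no source in your construction, which is a sign that the direct-rotation plan cannot be the whole story.

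A secondary issue is your accounting. You charge one rotation to each vertex whose attached edge is ``finalized,'' and you assert that the finalized vertex is always an endpoint of the visible gap; in the corrected argument the auxiliary rotations finalize the attached edge of some interior vertex $v_k$ instead, and the jump joint's own attached edge is revisited several times before being resolved. The paper instead charges each rotation to the gap(s) it eliminates (by filling, cutting, or contracting), which makes the bound of $n-1$ immediate. Your per-vertex count could likely be repaired, but only after the Case-(2b)-style preprocessing is in place and you verify that each such auxiliary rotation finalizes a distinct, previously unprocessed vertex.
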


\begin{figure}[ht]
	\centering
	\includegraphics[scale=0.55]{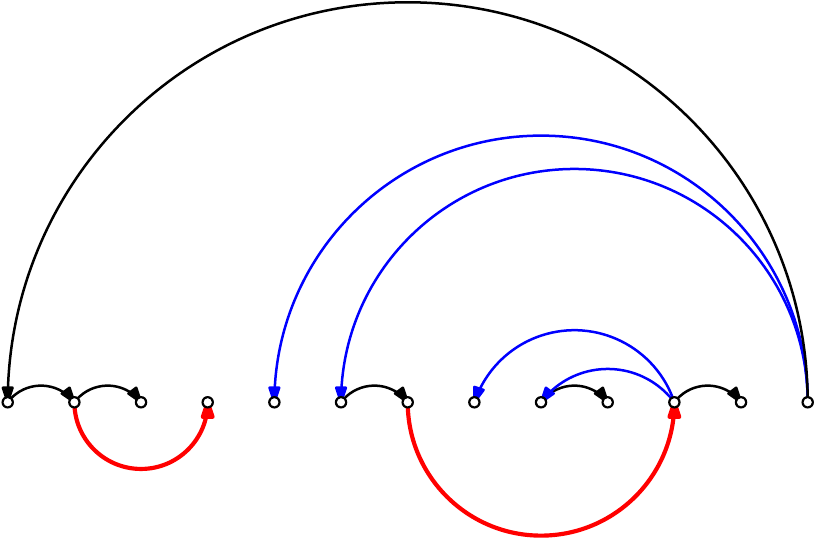}
	\caption{An intermediate tree $T^\ast$ as produced by Lemma~\ref{lem:stars}. The target edges from jump pairs are marked in red (and drawn below the vertices to emphasize that they are already at their target position). The edges of type $C$ from the set $K^\ast$ in Lemma~\ref{lem:djresolve} and are marked in blue.}
	\label{fig:Tstar}
\end{figure}

A tree $T^\ast$ as obtained from Lemma~\ref{lem:stars} is depicted in Figure~\ref{fig:Tstar}. The edges that are of special interest for the further application of Lemma~\ref{lem:djresolve} are highlighted with colors.

\begin{restatable}[$\star$]{lemma}{DJresolve}\label{lem:djresolve}
	We can transform $T_{\text{tar}}$ into $T^\ast$ as in Lemma \ref{lem:stars} in at most $n-1-\lvert J\rvert$ rotations.
\end{restatable}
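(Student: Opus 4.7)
The plan is to exploit the fact that the $\lvert J\rvert$ target edges of jump pairs, which coincide with the type B edges of $T^\ast$, are already present in $T_{\text{tar}}$ and hence need not be rotated. The remaining $n-1-\lvert J\rvert$ edges of $T_{\text{tar}}$ will be rotated into the type A and type C edges of $T^\ast$ via repeated applications of Lemma~\ref{convex_hull_star}, once to each subtree of $T_{\text{tar}}$ obtained after removing the jump pair target edges.

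Concretely, I would first cut $T_{\text{tar}}$ along its $\lvert J\rvert$ jump pair target edges, obtaining a forest of $\lvert J\rvert+1$ subtrees $S_1,\ldots,S_{\lvert J\rvert+1}$ with total edge count $n-1-\lvert J\rvert$. For each subtree $S_\ell$, I would designate the vertex $r_\ell\in S_\ell$ that lies closest to $v_n$ along paths in $T_{\text{tar}}$; this is either $v_n$ itself (for the unique subtree containing it) or an endpoint of some jump pair target edge, in either case a vertex of the kind that is allowed as a right endpoint of type C edges in $T^\ast$. By the structure of $T^\ast$ given in Lemma~\ref{lem:stars}, every type C edge of $T^\ast$ lying inside $S_\ell$ connects a non-root vertex of $S_\ell$ to $r_\ell$, while the type A edges of $T^\ast$ lying in $S_\ell$ are convex hull edges attached to their right vertex.

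Next, I would apply Lemma~\ref{convex_hull_star} to each $S_\ell$ with $r_\ell$ playing the role of $v_n$ and with an index $j$ chosen so that the gap assignment $\delta_j$ matches the type A edges of $T^\ast$ restricted to $S_\ell$. I would set $I^\ast_\ell$ to be the set of non-root vertices of $S_\ell$ corresponding to a type A edge in $T^\ast$ and $K^\ast_\ell$ the set corresponding to a type C edge; by the characterization of $T^\ast$, this is a valid partition of $S_\ell\setminus\{r_\ell\}$. The lemma then produces a rotation sequence of length at most $m_\ell-1$, where $m_\ell=\lvert S_\ell\rvert$. Since the jump pair target edges serve as rigid separators that are never rotated, the rotations performed in different subtrees do not interfere, and summing over all $\ell$ yields at most $\sum_\ell(m_\ell-1)=n-1-\lvert J\rvert$ rotations in total, as claimed.

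The main obstacle will be making a coherent choice of the index $j$ and the partition $(I^\ast_\ell,K^\ast_\ell)$ within each $S_\ell$ so that Lemma~\ref{convex_hull_star} outputs exactly the type A and type C edges of $T^\ast$ restricted to $S_\ell$; in particular, one must verify that the selected $\delta_j$-gaps produce precisely the convex hull edges attached to the right vertex required by type A, and that the edges introduced by $K^\ast_\ell$ attach to the correct vertex $r_\ell$. This consistency should follow from the observation that $T^\ast$ respects the same subtree decomposition that is induced on $T_{\text{tar}}$ by the untouched jump pair target edges.
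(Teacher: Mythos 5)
Your high-level plan (decompose along the jump target edges, apply Lemma~\ref{convex_hull_star} piecewise, sum) is the right one, but your choice of decomposition has a genuine gap. You cut $T_{\text{tar}}$ into the \emph{tree components} obtained by deleting the $\lvert J\rvert$ jump target edges, so that each vertex lies in exactly one piece and the arithmetic $\sum_\ell(m_\ell-1)=n-1-\lvert J\rvert$ comes out for free. However, these components are not geometrically separated: after removing a jump target edge $(v_k,v_i)$ with $k<i$, the component containing $v_k$ may contain vertices on both sides of that chord (some of $v_{k+1},\dots,v_{i-1}$ together with outside vertices), so it is not a convex sub-point-set in the sense needed for Lemma~\ref{convex_hull_star}, and rotations performed inside one component can cross the untouched jump edges or edges of another component. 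Your assertion that the jump edges ``serve as rigid separators'' is exactly what fails for tree components. A related problem is your root choice: the vertex of a component closest to $v_n$ in $T_{\text{tar}}$ is whichever endpoint of the severed jump edge that component happens to contain, which may be the \emph{left} endpoint $v_k$ rather than the joint vertex $v_i$; but the type~(C) edges of $T^\ast$ must attach to joint vertices of jump pairs (or $v_n$), so your $K^\ast_\ell$ fan would then point to the wrong vertex, and some type~(C) edges would leave $S_\ell$ entirely.

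The paper avoids both issues by cutting the \emph{point set} into faces of the planar subdivision induced by the jump target edges: each face is a convex sub-point-set containing \emph{both} endpoints of every bounding jump edge, the induced subtree $T_{\text{tar}}[F]$ is a spanning tree of $F$ in which the bounding jump edges are convex hull edges, and the rightmost vertex of each face is automatically the joint vertex of its bounding jump pair (or $v_n$). The price is that the faces overlap on the jump edges, so $\sum_F(\lvert F\rvert-1)=n-1+\lvert J_{\text{diag}}\rvert$ rather than $n-1-\lvert J\rvert$; the missing savings are then recovered from the term $\lvert I_{\text{tar}}[F]\cap I^\ast[F]\rvert$ in Lemma~\ref{convex_hull_star}, since every bounding jump edge is already a correctly oriented convex hull edge of its face(s) and hence is never rotated. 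If you want to keep your cleaner edge count, you must either prove that the tree components happen to be convex arcs rooted at joint vertices (they are not in general) or switch to the face decomposition and redo the count with the $I\cap I^\ast$ credit, as in the proof of Lemma~\ref{cl}.
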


To obtain Lemma \ref{lem:djresolve}, we apply Lemma \ref{convex_hull_star} to every face that results from cutting the tree along edges in pairs in~$J$.

Combining the rotation sequence from Lemma~\ref{lem:stars} (going from $T_{\text{in}}$ to $T^\ast$) and the reverse rotation sequence from Lemma~\ref{lem:djresolve} (going from $T^\ast$ to~$T_{\text{tar}}$), we obtain a rotation sequence between~$T_{\text{in}}$ and $T_{\text{tar}}$ of length at most $2(n-1) - \lvert J \rvert$, which proves Proposition~\ref{dj}.

\newpage

\section{Relations of Happy Edge Properties}\label{sec:happyedges}

In this section, we introduce a more refined distinction of happy edge properties.

\begin{definition}
	A graph reconfiguration problem, in which flips exchange edges,
	fulfills the
	\begin{itemize}
		\item \emph{(weak) happy edge property}, 
		if, from any initial graph $G_{\text{in}}$ to any target graph $G_{\text{tar}}$, there exists a shortest flip sequence that does not flip happy edges.
		\item \emph{strong happy edge property}, 
		if, from any initial graph $G_{\text{in}}$ to any target graph $G_{\text{tar}}$, any shortest flip sequence does not flip happy edges.
		\item \emph{perfect flip property}
		if, whenever we can perform a flip $f$ in $G_{\text{in}}$ such that the resulting graph $G_1$ has one edge more in common with $G_{\text{tar}}$ than $G_{\text{in}}$ has in common with $G_{\text{tar}}$, then there exists a shortest flip sequence from $G_{\text{in}}$ to $G_{\text{tar}}$ that has $G_1$ as its first intermediate graph (that is, the flip sequence starts with $f$ and $f$ is called a \emph{perfect flip}).
	\end{itemize}
\end{definition}

Further, we provide a framework, that allows us to compare the different properties.

\begin{restatable}[$\star$]{proposition}{hierarchy}
	\label{prop:hierarchy}
	Let $P$ be a graph reconfiguration problem where flips exchange one edge at a time.
	\begin{itemize}
		\item[(i)] If any flip sequence in $P$ that flips at least one happy edge can be shortened by at least two flips, then $P$ fulfills the perfect flip property.
		\item[(ii)] If for any flip in $P$ from $G$ to $(G\setminus\{e_1\})\cup\{e_2\}$, $e_1$ and $e_2$ can never be in the same configuration, then the reverse direction in (i) holds.
	\end{itemize}
\end{restatable}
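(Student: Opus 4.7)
The plan is to prove both implications via short exchange arguments that splice in or remove a single flip and then invoke the respective hypothesis.

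For part~(i), given a perfect flip $f$ at $G_{\text{in}}$ that swaps $e_1 \in G_{\text{in}}\setminus G_{\text{tar}}$ for $e_2 \in G_{\text{tar}}\setminus G_{\text{in}}$ and produces $G_1$, the aim is to show $d(G_1, G_{\text{tar}}) = d^*-1$ where $d^* := d(G_{\text{in}}, G_{\text{tar}})$; then appending any shortest sequence from $G_1$ to $G_{\text{tar}}$ after $f$ produces a shortest sequence starting with $f$. Let $S$ be any shortest sequence from $G_{\text{in}}$ to $G_{\text{tar}}$, and form $S'$ by prepending the reverse flip $f^{-1}$ to $S$; this gives a sequence from $G_1$ to $G_{\text{tar}}$ of length $d^*+1$. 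Since $e_2 \in G_1 \cap G_{\text{tar}}$, the edge $e_2$ is a happy edge of $(G_1, G_{\text{tar}})$, and $S'$ flips $e_2$ because its first flip $f^{-1}$ removes $e_2$ while $e_2$ must be re-added later along $S$ as it lies in $G_{\text{tar}}$. By the hypothesis, $S'$ can be shortened by at least two flips, yielding $d(G_1, G_{\text{tar}}) \leq d^*-1$; together with the triangle inequality this forces equality.

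For part~(ii), the plan is induction on the length $k$ of a sequence $S = (f_1, \ldots, f_k)$ from $G_{\text{in}}$ to $G_{\text{tar}}$ that flips a happy edge $e$. Writing $f_1\colon a \to b$ and letting $H_1$ denote the graph after $f_1$, three cases arise. If $f_1$ is a perfect flip ($a \notin G_{\text{tar}}$, $b \in G_{\text{tar}}$), the perfect flip property gives $d(H_1, G_{\text{tar}}) = d^*-1$; since $e \neq a$ (as $a$ is non-target), $e$ remains a happy edge of $(H_1, G_{\text{tar}})$ flipped by $(f_2,\ldots,f_k)$, and induction yields $k \geq d^*+2$. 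If $a$ is itself a happy edge ($a \in G_{\text{tar}}$), the coexistence hypothesis forces $b \notin G_{\text{tar}}$; hence $f_1^{-1}\colon b \to a$ is a perfect flip at $H_1$ toward $G_{\text{tar}}$, so the perfect flip property gives $d(H_1, G_{\text{tar}}) = d^*+1$, and the trivial bound $k-1 \geq d(H_1, G_{\text{tar}})$ already yields $k \geq d^*+2$. Finally, if $a, b \notin G_{\text{tar}}$, then $f_1$ touches neither $e$ nor any edge of $G_{\text{tar}}$, so $(f_2,\ldots,f_k)$ still flips the happy edge $e$ of $(H_1, G_{\text{tar}})$, and induction combined with the triangle inequality $d(H_1, G_{\text{tar}}) \geq d^*-1$ again gives $k \geq d^*+2$.

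The main obstacle is the second case of part~(ii): the coexistence hypothesis is precisely what guarantees $b \notin G_{\text{tar}}$ and hence that $f_1^{-1}$ is a perfect flip at $H_1$, which pushes $H_1$ strictly farther from $G_{\text{tar}}$. Without coexistence, $b$ could lie in $G_{\text{tar}}$ and $H_1$ could be no farther from $G_{\text{tar}}$ than $G_{\text{in}}$, so the remaining subsequence could conceivably be shortest of length $d^*-1$ and the argument would break. The induction base cases $k \leq 2$ are quick: a single flip cannot flip a happy edge (the removed edge lies outside $G_{\text{tar}}$ by construction), and a length-$2$ sequence between distinct configurations that flips a happy edge would force $G_{\text{tar}}$ to contain two edges forbidden from coexisting.
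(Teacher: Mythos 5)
Your proof is correct and takes essentially the same route as the paper's: part (i) is the same exchange argument (the reversed perfect flip creates a happy edge of $(G_1,G_{\text{tar}})$, so the shortening hypothesis forces $d(G_1,G_{\text{tar}})=d^*-1$), and part (ii) rests on the same key observation that the coexistence hypothesis makes the reverse of the flip removing the happy edge a perfect flip, pushing the intermediate graph strictly farther from the target. Your induction in (ii) merely makes explicit the reduction that the paper dispatches with a ``without loss of generality, the happy edge is removed in the very first flip''.
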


We remark that all the introduced properties  
may or may not hold for 
certain graph reconfiguration problems. While the condition in Proposition~\ref{prop:hierarchy}(ii) is not fulfilled for trees, it does hold for example for triangulations.

\section{Happy Edges in Plane Spanning Trees on Convex Sets}\label{sec:happycompatible} 

In \cite{aichholzer2024reconfiguration}, the authors formulated the \emph{Weak Happy Edge Conjecture} for trees on convex point sets:
\begin{conjecture}[Conjecture~17 in \cite{aichholzer2024reconfiguration}] \label{conj:happyedgeconj}
For any two plane spanning trees $T_{\text{in}}$ and $T_{\text{tar}}$ on a convex point set, there is a shortest flip sequence from $T_{\text{in}}$ to $T_{\text{tar}}$ that does not flip happy~edges.
\end{conjecture}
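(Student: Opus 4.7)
My plan is an exchange argument on a shortest unrestricted flip sequence that, among all shortest flip sequences from $T_{\text{in}}$ to $T_{\text{tar}}$, minimizes the total number of happy-edge flips performed. Suppose for contradiction that such a minimizer $\sigma = (f_1,\ldots,f_k)$ still flips some happy edge $e$. Pick the earliest step $i$ at which $e$ is removed and the earliest subsequent step $j > i$ at which $e$ is reinserted; denote by $e'_1$ the edge added at step $i$ and by $e''_1$ the edge removed at step $j$. Since $e$ is a diagonal of the convex point set, it separates the vertices into two sides $A$ and $B$, and throughout the window $[i,j]$ the current tree contains exactly one edge bridging $A$ and $B$, starting as $e'_1$ and ending as $e''_1$.

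The next step is to classify each intermediate flip $f_\ell$ with $i < \ell < j$ as (a) an $A$-flip that exchanges two edges inside $A$, (b) a $B$-flip that exchanges two edges inside $B$, or (c) a bridge replacement. The type (c) flips form a chain $e'_1 = b_0 \to b_1 \to \cdots \to b_r = e''_1$. The target is to construct an alternative sequence of length at most $k$ that keeps $e$ present throughout and performs essentially the same $A$- and $B$-flips, collapsing $f_i, f_{i+1}, \ldots, f_j$ into a block that does not touch $e$. If $r = 0$, we save the two flips $f_i$ and $f_j$ outright, strictly shortening $\sigma$, a contradiction. If $r \geq 1$, we aim to match the length using $r$ substitute flips on each side and thereby strictly reduce the number of happy-edge flips, contradicting minimality.

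The hard part is dealing with bridge chains that are intrinsically non-compatible with $e$: some intermediate bridge $b_\ell$ may cross the segment $e$, so one cannot simply keep $e$ in place and perform the same bridge replacements. Here I would try to lift Theorem~\ref{upper} and the compatible happy-edge theorem proved earlier in the paper into each side $A$ and $B$ separately, arguing that whenever the sequence needs an $e$-crossing bridge, one can instead realize the same net combinatorial effect by a compatible detour of equal total cost through $A$-flips and $B$-flips with $e$ retained. A natural auxiliary step is to prove that, in any shortest sequence, the edges $b_\ell$ may be assumed to lie on a common side of $e$ up to an equivalent rewrite, reducing the bridge chain to a single flip from $e$ to a side-adjacent position.

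The main obstacle is precisely whether a bridge flip that truly crosses $e$ can ever be \emph{necessary} in a shortest sequence, rather than a convenient shortcut that can be eliminated. Ruling this out seems to require a global invariant that controls how many crossings a shortest sequence is forced to introduce, and it is the step where existing approaches (including the ones behind the bounds of \cite{aichholzer2024reconfiguration,bousquet2023notes,bjerkevik2024flippingnoncrossingspanningtrees}) all stall; I expect this to be the crux of Conjecture~\ref{conj:happyedgeconj} and the reason it has remained open.
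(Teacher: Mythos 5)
This statement is a \emph{conjecture} (Conjecture~17 of \cite{aichholzer2024reconfiguration}, restated here as Conjecture~\ref{conj:happyedgeconj}); the paper does not prove it and explicitly treats it as open --- see the remark after Conjecture~\ref{prehappy2} and Claim~\ref{thm:fpt}, which is conditional on it. What the paper does prove is the analogous (indeed strong) happy edge property for \emph{compatible} flips (Theorem~\ref{thm:happy}), via Lemma~\ref{prehappy} (any removed-and-readded edge must be crossed by some added edge) together with the compatible Parking Edge Lemma~\ref{prehappy 3}. So there is no proof in the paper for you to match, and your proposal does not supply one: you concede that the decisive step --- showing that a flip which adds an edge crossing $e$ while $e$ is absent can always be eliminated or rerouted at no extra cost --- is unresolved. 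That step is essentially the whole content of the conjecture (it is the unrestricted-flip analogue of the Parking Edge Conjecture~\ref{prehappy2}, whose implication of Conjecture~\ref{conj:happyedgeconj} is already known from \cite[Claim~22]{aichholzer2024reconfiguration}), so the argument is circular at its crux rather than merely incomplete.

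Two further concrete problems. First, your bookkeeping inside the window $[i,j]$ is off: once $e$ is removed, the flip that reconnects the two components need not be a single edge ``bridging $A$ and $B$'' in the geometric sense, and while $e$ is absent the intermediate trees may contain \emph{many} edges crossing the segment $e$ simultaneously, not a single evolving bridge $b_0 \to \cdots \to b_r$; a per-bridge exchange therefore does not account for all the work the window performs. Second, invoking Theorem~\ref{upper} to realize ``the same net combinatorial effect by a compatible detour of equal total cost'' cannot work: that theorem is an upper bound on the compatible flip \emph{distance}, not an exact cost-matching device, and the paper itself recalls (\cite[Theorem~1.6]{bousquet2023reconfiguration}) that the compatible flip distance can be a factor of $2$ larger than the unrestricted one. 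Hence forcing the rewritten window to avoid crossings with $e$ may genuinely lengthen the sequence, which is exactly the possibility your exchange argument must --- and does not --- rule out.
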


Based on an example from~\cite{aichholzer2024reconfiguration} for a different context, we first observe that the perfect flip property holds neither for unrestricted flips nor for compatible flips on trees.

\begin{figure}[ht]
\centering
\includegraphics[scale=0.55]{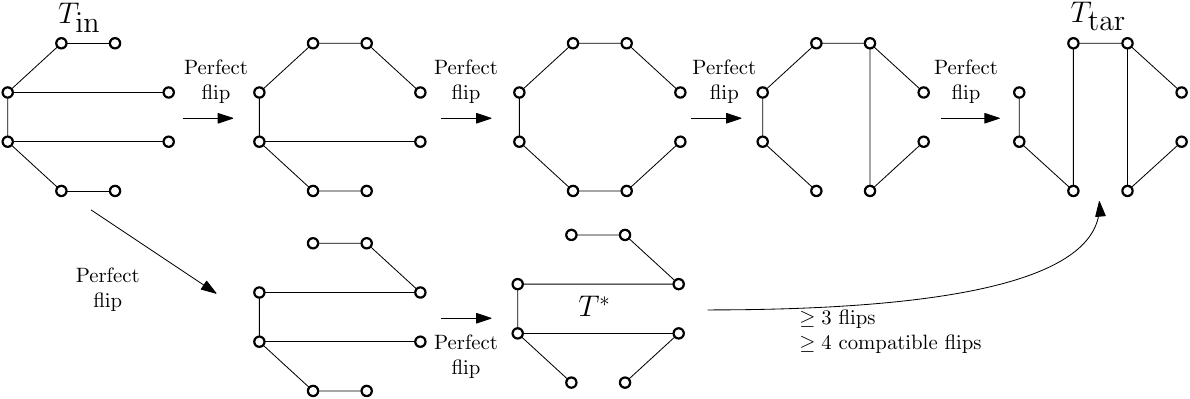}
\caption[Counterexample to the perfect flip property for trees.]{Counterexample to the perfect flip property based on \cite[Figure 7]{aichholzer2024reconfiguration}.
	The top shows the shortest flip sequence. The bottom sequence starts with two perfect flips, but reaches a point where no more perfect flips are possible. Since there are two edges in the tree $T^\ast$ that both cross two edges from the target tree, at least three additional flips or four additional compatible flips are needed.}
\label{perfect_flip}
\end{figure}

In contrast, we will show in this section that the strong happy edge property does hold for compatible flips on trees.
Our first proof ingredient is the following lemma, which is an extension of \cite[Proposition 18]{aichholzer2024reconfiguration} for compatible flips.

\begin{restatable}[$\star$]{lemma}{prehappy}
\label{prehappy}
Consider any point set $S$ and any two plane spanning trees $T_{\text{in}}$ and~$T_{\text{tar}}$ on~$S$ and any shortest compatible flip sequence from $T_{\text{in}}$ to $T_{\text{tar}}$. If some edge $e$ is removed and later added back, then some flip during that subsequence must add an edge $f$ that crosses~$e$.
\end{restatable}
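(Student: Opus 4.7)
The plan is to prove the lemma by contrapositive: assume a shortest compatible flip sequence $T_{\text{in}} = T_0, \ldots, T_m = T_{\text{tar}}$ in which an edge $e$ is removed at some step $i+1$ and added back at a later step $j$, and assume no intermediate flip adds an edge crossing $e$. I will construct a strictly shorter compatible flip sequence, contradicting minimality. A first observation is that the crossing-free assumption propagates through the subsequence: $T_i$ is plane and contains $e$, the flip at step $i+1$ is compatible so its inserted edge does not cross $e$, and by assumption no later intermediate flip adds a crossing edge. Hence for every $k \in \{i+1, \ldots, j-1\}$ the graph $T_k \cup \{e\}$ is plane and contains a unique cycle $C_k = P_k \cup \{e\}$, where $P_k$ is the $u$-$v$ path in $T_k$ and $u,v$ are the endpoints of $e$. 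Intuitively, $e$ could have been kept in the tree throughout the subsequence.

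Concretely, the modified sequence replaces each $T_k$ for $k \in \{i+1,\ldots, j-1\}$ by $T'_k := (T_k \cup \{e\}) \setminus \{h_k\}$, with $h_k \in P_k$ carefully chosen so that $T'_k$ is a tree containing $e$ and so that the spliced sequence $T_0,\ldots,T_i, T'_{i+1}, \ldots, T'_{j-1}, T_{j+1}, \ldots, T_m$ is a valid compatible flip sequence. The pivotal choice is $h_{j-1} := g$, where $g$ is the edge removed by the final flip $T_{j-1} \to T_j$; then $T'_{j-1} = T_j$, so the last flip of the subsequence disappears and one flip is saved. The remaining $h_k$ are defined by backward induction: given $h_k$, I choose $h_{k-1}$ so that $T'_{k-1} \to T'_k$ is a single compatible flip, under the boundary condition $T'_i := T_i$.

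For the inductive step, let the original flip $T_{k-1} \to T_k$ remove $g_k$ and add $f_k$. If $g_k \notin P_{k-1}$ then $P_{k-1} = P_k$, and setting $h_{k-1} := h_k$ makes $T'_{k-1} \to T'_k$ coincide with the original flip. If $g_k \in P_{k-1}$ then $f_k$ necessarily lies on $P_k$; I set $h_{k-1} := h_k$ when $h_k \in P_{k-1}$, and $h_{k-1} := g_k$ otherwise, turning $T'_{k-1} \to T'_k$ into the single flip that removes $h_k$ and adds $f_k$. Compatibility is automatic in every case because the two edges of the modified flip already coexist in a common plane tree ($T_{k-1}$ or $T_k$). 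The main obstacle is precisely this case analysis: verifying that each chosen $h_k$ stays on the current cycle $C_k$ so that $T'_k$ really is a tree, and that the boundary case $k = i+1$ (where $g_{i+1} = e$ and $P_i = \{e\}$, so that $h_{i+1} \neq e$ forces $h_i := g_{i+1} = e$ and hence $T'_i = T_i$) meshes correctly. In some degenerate sub-cases (e.g.\ $h_k = f_k$) the construction collapses $T'_{k-1} = T'_k$ and saves even more flips, which only strengthens the contradiction. Once the induction is completed, the spliced sequence has length at most $m-1$, contradicting minimality, and the lemma follows.
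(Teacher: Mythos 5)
Your proposal is correct and follows essentially the same route as the paper: both re-insert $e$ into each intermediate tree and delete a suitably chosen edge of the resulting cycle (one that the original sequence removes later anyway), producing a normalized sequence that is one flip shorter, and both then check that the modified flips remain compatible because any conflict would either cross $e$ (excluded by hypothesis) or contradict the compatibility/planarity of the original trees. The only difference is bookkeeping: the paper picks the cycle edge as the first one removed downstream (citing the normalization of Aichholzer et al.), whereas you define it by an equivalent backward induction from the flip that re-adds $e$.
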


In \cite{aichholzer2024reconfiguration}, a \emph{parking edge} is defined as an edge that appears in a flip sequence and that is not contained in~$T_{\text{in}} \cup T_{\text{tar}}$.
A second ingredient of our proof is Lemma~\ref{prehappy 3}, 
which verifies the compatible flip analogue of the following conjecture from~\cite{aichholzer2024reconfiguration}.

\begin{conjecture}[Parking Edge Conjecture, Conjecture 21 in \cite{aichholzer2024reconfiguration}] 
\label{prehappy2}
For any convex point set~$S$ and any two plane spanning trees~$T_{\text{in}}$ and~$T_{\text{tar}}$ on~$S$, there is a shortest flip sequence from~$T_{\text{in}}$ to~$T_{\text{tar}}$ that only uses parking edges from the boundary of the convex hull of $S$.
\end{conjecture}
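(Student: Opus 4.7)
The plan is to start from any shortest flip sequence $\sigma = (T_{\text{in}} = T_0, T_1, \ldots, T_k = T_{\text{tar}})$ and, among all such sequences of minimum length, fix one that lexicographically minimises the multiset of parking diagonals appearing along it. The goal is to show that for this particularly chosen $\sigma$, no parking diagonal appears at all; any attempted introduction will lead to a strict improvement contradicting the minimal choice.

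The key structural observation I would invoke is that in a plane tree on a convex point set, removing any single edge splits the vertex set into two contiguous arcs of the convex hull. Consequently, whenever a flip removes an edge $e$ from the current tree, the cyclic boundary between the two resulting components is witnessed by exactly two convex hull edges $q_1, q_2$, neither of which is currently present, and each of which reconnects the two components. Suppose, for contradiction, that $\sigma$ uses some parking diagonal $p$, first introduced at step $i$ via the flip $e_i \mapsto p$. I would then try to replace this by $e_i \mapsto q_\ell$ for $\ell \in \{1,2\}$, which is automatically a valid single flip (and automatically compatible as well, since convex hull edges cross no other edge).

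The remainder of the argument, and its main obstacle, is to propagate this substitution through the rest of $\sigma$ without lengthening it. Let $j$ be the next step that touches $p$; such a $j$ must exist, since otherwise $p$ would survive into $T_{\text{tar}}$. In the generic case, the flips at steps $i+1,\ldots,j-1$ touch neither $p$ nor $q_\ell$, so they act identically on the modified intermediate trees. At step $j$, the original flip $p \mapsto e_j$ can be replaced by $q_\ell \mapsto e_j$: this is legitimate because removing $q_\ell$ from the modified tree $T_{j-1}'$ induces the same bipartition of the vertex set as removing $p$ from $T_{j-1}$, so the resulting tree is precisely $T_j$. From step $j+1$ onwards, the modified sequence coincides with $\sigma$, and we have strictly reduced the number of parking diagonals, contradicting the minimality assumption.

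The hard part will be the case in which some intermediate flip genuinely interacts with both candidate convex hull edges (for instance by inserting $q_1$ at one step and $q_2$ at another), so that neither $q_\ell$ can be carried unaltered through the sequence. Here I would reach for an exchange argument in the spirit of Lemma~\ref{prehappy}: the only way such an interaction can obstruct propagation is for the original sequence to perform a flip that introduces $q_\ell$ while $p$ is still in the tree. Anticipating that introduction, one can merge two flips into one by performing $e_i \mapsto q_\ell$ immediately and omitting the later introduction of $q_\ell$; this saves a flip and contradicts the minimality of $\sigma$'s length. Since every failure mode of the propagation yields either a shorter sequence or a parking-diagonal multiset that is strictly smaller, the hypothesis that $\sigma$ uses any parking diagonal collapses, completing the proof.
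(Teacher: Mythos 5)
First, a framing issue: the statement you are proving is presented in the paper as a \emph{conjecture} (the Parking Edge Conjecture for unrestricted flips, quoted from earlier work), and the paper does not prove it. It only establishes the compatible-flip analogue (Lemma~\ref{prehappy 3}), whose proof hinges essentially on compatibility: the path in $T_{i_1-1}$ between the endpoints of the parking diagonal $f$ must lie entirely on one side of $f$, because otherwise some edge of that path would cross $f$ and the flip introducing $f$ would not be compatible. That is exactly the leverage that is unavailable for unrestricted flips, so an argument for the conjecture as stated would be a genuinely new result and deserves correspondingly heavy scrutiny.

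Your argument has a concrete gap in the propagation step. Your structural observation is fine (the two components of $T-e$ are contiguous arcs, so two hull edges $q_1,q_2$ reconnect them, and $e_i\mapsto q_\ell$ is a valid flip). But the claim that the intermediate flips at steps $i+1,\dots,j-1$ ``act identically'' on the modified trees $T_m'=T_m-p+q_\ell$ is false in general, and not only in the case you flag (where $q_\ell$ itself gets inserted later). Consider an intermediate flip $f\mapsto g$: adding $g$ to $T_m$ closes a unique cycle $\gamma$ containing $f$, and the flip is valid precisely because of that. If $\gamma$ passes through $p$, then in $T_m'$ the unique cycle $\gamma'$ closed by $g$ is rerouted through $q_\ell$ and need not contain $f$ at all; in that case $f\mapsto g$ applied to $T_m'$ leaves a cycle and a disconnected piece, i.e., it is not a flip. (Equivalently: the bipartitions induced by deleting $f$ from $T_m$ and from $T_m'$ can differ, so $g$ may fail to reconnect.) Your only guarantee that $p$ and $q_\ell$ induce the same cut is at step $i$, relative to $T_{i-1}-e_i$; it does not persist through arbitrary later flips. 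Your proposed repair for the interaction case --- ``merge two flips into one by performing $e_i\mapsto q_\ell$ immediately and omitting the later introduction of $q_\ell$'' --- is also not well-defined as stated: omitting the flip $f'\mapsto q_\ell$ leaves $f'$ in the tree and $p$ never added, so the modified sequence no longer terminates at $T_{\text{tar}}$ without a further normalization argument (of the kind used in Lemma~\ref{prehappy}) that you do not supply. Until the rerouted-cycle obstruction is handled, the induction on the parking-diagonal multiset does not go through.
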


We remark that in \cite[Claim 22]{aichholzer2024reconfiguration} it is shown that for unrestricted flips Conjecture~\ref{prehappy2} implies Conjecture~\ref{conj:happyedgeconj}.

\begin{restatable}[$\star$]{lemma}{prehappyy}
\label{prehappy 3}
For any convex point set $S$ and any two plane spanning trees~$T_{\text{in}}$ and~$T_{\text{tar}}$ on~$S$, there is a shortest compatible flip sequence from~$T_{\text{in}}$ to~$T_{\text{tar}}$ that only uses parking edges from the boundary of the convex hull of $S$.
\end{restatable}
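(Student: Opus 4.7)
My plan is to prove Lemma~\ref{prehappy 3} by an exchange argument that parallels the proof of Claim~22 in~\cite{aichholzer2024reconfiguration} (the parking edge statement for unrestricted flips), while crucially exploiting the fact that convex hull edges cross no other edge and hence are ``free'' under the compatibility constraint. I argue by contradiction via a double-minimality principle. Consider a shortest compatible flip sequence $F = (f_1, \ldots, f_m)$ from $T_{\text{in}}$ to $T_{\text{tar}}$ which, as a secondary objective, minimizes the number of flips in $F$ whose inserted edge is a parking diagonal. Assuming this secondary minimum is positive, I will construct another shortest compatible flip sequence with strictly smaller secondary objective, yielding a contradiction.

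Let $f_i$ be the first flip of $F$ that inserts a parking diagonal $p$, and let $e_i$ be the edge it removes; since $p \notin T_{\text{tar}}$, there is a smallest $j > i$ with $f_j$ removing $p$ and inserting some edge $g$. Throughout $T_i, T_{i+1}, \ldots, T_{j-1}$ the edge $p$ is present, and no flip in $(i, j)$ involves $p$. I will find a convex hull edge $c$ satisfying the three conditions: (a)~$c \notin T_{i-1}$, (b)~the endpoints of $c$ lie in the two different components of $T_{i-1} \setminus \{e_i\}$, and (c)~no flip in the interval $(i, j)$ touches $c$. Given such a $c$, the modified sequence obtained by redefining $f_i$ as ``remove $e_i$, add $c$'', redefining $f_j$ as ``remove $c$, add $g$'', and substituting $c$ for $p$ in every intermediate tree is a compatible flip sequence of the same length with one fewer parking diagonal. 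Compatibility of the modified $f_i$ and $f_j$ is automatic since $c$ is a convex hull edge and therefore crosses no other edge, while the intermediate flips, touching neither $p$ nor $c$, remain valid unchanged.

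Existence of $c$ satisfying (a) and (b) follows from a parity argument on the convex hull cycle: the bipartition of vertices induced by $T_{i-1} \setminus \{e_i\}$ produces an even, positive number of ``bridging'' hull edges (at least two), and any bridging hull edge distinct from $e_i$ must lie outside $T_{i-1}$, since every other edge of $T_{i-1}\setminus\{e_i\}$ has both endpoints in the same component. The main obstacle I expect is condition~(c): in principle, every bridging hull candidate could be consumed by some flip in $(i, j)$. To handle this, I plan to combine the counting argument with Lemma~\ref{prehappy}, which forbids ``cross-and-uncross'' patterns in a shortest compatible flip sequence and thereby constrains how hull edges may appear during the interval. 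If necessary, the exchange is iterated by choosing among all parking diagonals of $F$ one with minimal interval $[i, j]$, so that the interval shrinks until a valid $c$ must exist. The careful case analysis required by this last step — together with possibly rerouting conflicting intermediate flips via additional local swaps — is where I expect the bulk of the technical work to lie.
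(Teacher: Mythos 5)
There is a genuine gap at the heart of your exchange step: the claim that, once a hull edge $c$ satisfying (a)--(c) is found, ``substituting $c$ for $p$ in every intermediate tree'' yields a valid flip sequence. A flip in the interval $(i,j)$ that touches neither $p$ nor $c$ can still become invalid after the substitution. Concretely, if $T$ is an intermediate tree and $T' = (T\setminus\{p\})\cup\{c\}$, then removing an edge $e$ splits $T\setminus\{e,p\}$ into three components; $p$ and $c$ may reconnect \emph{different pairs} of these components, so the two components of $T'\setminus\{e\}$ need not coincide with those of $T\setminus\{e\}$, and the edge $g$ inserted by the original flip may now close a cycle or fail to reconnect. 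Your condition (b) only controls the component structure at the single moment the parking edge is inserted, not throughout the interval. On top of this, you explicitly leave condition (c) unresolved and only sketch a plan (iterating on minimal intervals, ``rerouting conflicting flips''), so the argument is incomplete even granting the substitution step.

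The paper's proof avoids both problems by \emph{reordering} rather than substituting. The parking diagonal $f$ splits the point set into sides $A$ and $B$; since the flip inserting $f$ is compatible, the tree path between $f$'s endpoints lies entirely in one side, say $B$, and while $f$ is present, flips in $A$ and in $B$ are independent. One therefore performs all side-$A$ flips of the interval first (they are already valid in $T_{i_1-1}$), which determines the final two subtrees of side $A$; the hull edge $h$ is then chosen as the unique hull gap between these two subtrees and inserted in place of $f$. All side-$B$ flips follow (every cycle that used $f$ now uses $h$ together with a path through $A$), and finally $h$ is removed. Because $h$ is selected \emph{after} side $A$ has reached its final state and is only present while the remaining flips occur in $B$, your conditions (b) and (c) are satisfied by construction, and no global substitution into unmodified intermediate trees is ever needed. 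If you want to salvage your approach, you would have to prove that $c$ can be chosen to connect the same pair of components as $p$ in $T\setminus\{e,p\}$ for \emph{every} intermediate removal $e$, which is essentially what the reordering delivers for free.
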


\begin{figure}[ht]
\centering
\includegraphics[scale=0.55]{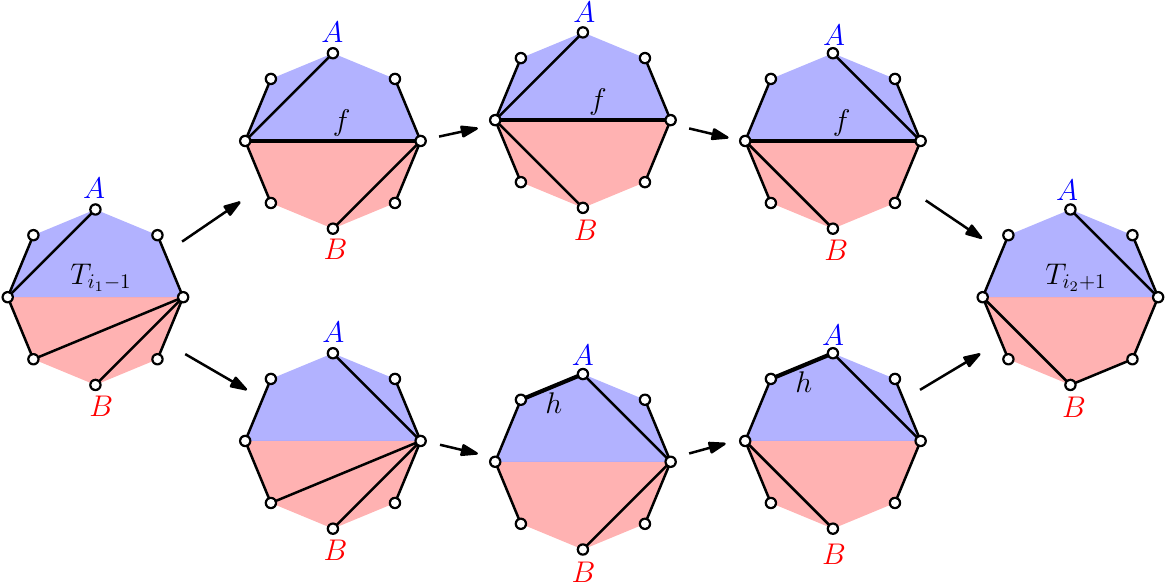}
\caption[Example of a reordering of steps.]{Reordering flips in Lemma~\ref{prehappy 3}. The top path is the part of the original flip sequence that contains $f$, the bottom path is the part of the reordered flip sequence that contains~$h$.}
\label{reorder}
\end{figure}

Figure~\ref{reorder} shows the intuition behind the proof: One by one, we replace a parking edge~$f$ that is not on the boundary of the convex hull with a parking edge $h$ on the convex hull boundary. To make this replacement possible, we change the order of the flips that happen while $f$ is part of the tree. The edge $f$ splits the convex point set into two sides, say $A$ and~$B$. Flips in either of the two sides can be executed independently from flips in the other side. Assume $f$ is added when flipping from the tree $T_{i_1-1}$ to $T_{i_1}$ and removed in the flip from $T_{i_2}$ to $T_{i_2+1}$. Exactly one of the sides, say side $B$, of $T_{i_1-1}$ entirely contains a path that connects the two endpoints of $f$. The flips in the other side, say side $A$, can be executed before $f$ gets added. Afterwards, we close a cycle in side $A$ by adding the convex hull parking edge $h$ and execute all the flips in side $B$. We conclude the new subsequence of flips by removing $h$ and obtain the tree $T_{i_2}$.

We now show the strong happy edge property for compatible flips on trees.

\begin{theorem}
\label{thm:happy}
For any convex point set $S$ and any two plane spanning trees $T_{\text{in}}$ and $T_{\text{tar}}$ on~$S$, 
any compatible flip sequence from $T_{\text{in}}$ to $T_{\text{tar}}$ that removes (and adds) a happy edge is at least one step longer than the shortest compatible flip sequence from $T_{\text{in}}$ to $T_{\text{tar}}$.
\end{theorem}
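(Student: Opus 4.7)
The plan is to argue by contradiction: assume $\sigma$ is a shortest compatible flip sequence from $T_{\text{in}}$ to $T_{\text{tar}}$ that removes the happy edge $e$ (and therefore re-adds it later, since $e\in T_{\text{tar}}$). I would then show that $\sigma$ can be reshaped into a compatible flip sequence of the same length that uses only convex-hull parking edges and still flips $e$, and derive a contradiction via Lemma~\ref{prehappy}. The theorem's statement is equivalent to saying that no shortest compatible flip sequence removes a happy edge, so this suffices.

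First I would apply the parking-edge replacement procedure underlying the proof of Lemma~\ref{prehappy 3} directly to $\sigma$. That procedure iteratively picks a non-convex-hull parking edge $f$ appearing in the current sequence, splits the tree on the two sides $A$ and $B$ of $f$, reorders the side-$A$ flips to the front, and replaces the pair ``add $f$, $\ldots$, remove $f$'' by ``add convex-hull parking edge $h$, $\ldots$, remove $h$'' while executing the side-$B$ flips in between (cf.~Figure~\ref{reorder}). Each such replacement is length-preserving, so after finitely many iterations I obtain a compatible flip sequence $\sigma'$ with $|\sigma'|=|\sigma|$ that uses only convex-hull parking edges; in particular $\sigma'$ is also shortest. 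Because $e$ is happy, it belongs to $T_{\text{in}}\cap T_{\text{tar}}$ and is therefore not a parking edge, so the procedure never renames or deletes any flip that removes or adds $e$. Thus $\sigma'$ still removes and re-adds $e$.

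With the shortest sequence $\sigma'$ in hand, I would invoke Lemma~\ref{prehappy}: during the subsequence of $\sigma'$ from the removal of $e$ to its reinsertion, some flip must add an edge $f^\ast$ that crosses $e$. Since $T_{\text{in}}$ and $T_{\text{tar}}$ are plane and both contain $e$, we have $f^\ast\notin T_{\text{in}}\cup T_{\text{tar}}$, so $f^\ast$ is a parking edge of $\sigma'$ and, by construction, a convex-hull edge of $S$. But a convex-hull edge of a convex point set crosses no other edge, contradicting that $f^\ast$ crosses $e$. Hence no shortest compatible flip sequence removes a happy edge, yielding the theorem.

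The main obstacle will be verifying that the replacement procedure from Lemma~\ref{prehappy 3} applies verbatim to the fixed sequence $\sigma$, rather than merely producing some shortest convex-hull-parking sequence from scratch, and that the local reorderings leave flips of non-parking edges intact. If the procedure as written in the proof of Lemma~\ref{prehappy 3} does not already have this ``happy-edge preserving'' guarantee, I would strengthen it with a small piece of bookkeeping: since each swap only touches flips within the time interval during which a chosen non-convex-hull parking edge is present, and the flips of $e$ act on edges crossing or incident to $e$ rather than on that parking edge itself, they can be tracked as an atomic block through each reordering, so their presence in the final sequence $\sigma'$ follows directly.
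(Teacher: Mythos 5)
Your proof is correct and follows essentially the same route as the paper: both combine Lemma~\ref{prehappy 3} (parking edges can be assumed to lie on the convex hull) with Lemma~\ref{prehappy} (a removed-and-re-added edge must be crossed by some added edge) to conclude that a removed happy edge could only be crossed by a convex-hull parking edge, which is impossible. The only difference is organizational: the paper sidesteps your acknowledged ``main obstacle'' by keeping the prefix up to the removal of $e$ and splicing in a fresh shortest convex-hull-parking sequence from Lemma~\ref{prehappy 3} for the tail only, so it never needs to verify that the whole-sequence conversion preserves the flips of $e$ --- though your bookkeeping argument for that preservation is sound, since the replacement procedure only renames flips involving the chosen non-convex-hull parking edge, which can never be the happy edge $e$.
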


\begin{proof}
Let $T_{\text{in}} = T_0$, $T_1$,...,$T_k=T_{\text{tar}}$ be a compatible flip sequence that removes a happy edge~$e$ in a flip from~$T_i$ to~$T_{i+1}$. By Lemma~\ref{prehappy 3}, there exists a shortest compatible flip sequence from $T_{i+1}=T_{i+1}'$,$T_{i+2}'$,...,$T'_{\text{tar}}=T_{\text{tar}}$ that only uses parking edges from the convex hull boundary. 
Since $e$ crosses neither an edge of $T_{\text{in}} \cap T_{\text{tar}}$ nor one on the convex hull boundary,~$e$ crosses no edge in the flip sequence $T_{0}$,...,$T_{i}$,$T'_{i+1}$,...,$T_{\text{tar}}$. Hence, by Lemma~\ref{prehappy}, we can construct a shorter flip sequence that preserves $e$. 
\end{proof}

By Proposition~\ref{prop:hierarchy}(i) and the example in Figure~\ref{perfect_flip}, Theorem~\ref{thm:happy} is best possible in the sense that there are flip sequences that flip a happy edge and cannot be shortened by two flips.

When we restrict the flip even further by only allowing edge rotations instead of all compatible exchanges, the happy edge property no longer holds. Also for the more restricted edge slides, the happy edge property does not hold, see~\cite{AICHHOLZER2007155} for the general case, and~\cite{aichholzer2024reconfiguration} for convex point sets. 

Figure \ref{convex_case} shows a counterexample to the happy edge property for convex point sets with rotations. Assume the happy edge property would hold. Then the happy edge that passes through the middle of our convex polygon would split the tree into two proper subtrees. In each such subtree, the unhappy edge has a fixed position where it has to go. Since the initial position and the target position of the unhappy edge do not share any vertex, we need at least two rotations to get each unhappy edge from one position to the other. This gives us a minimum length of four for any flip sequence that preserves happy edges.

\begin{figure}[ht]
\centering
\includegraphics[scale=0.55]{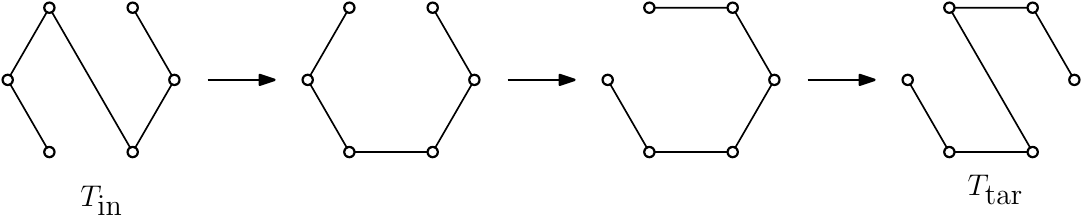}
\caption[Counterexample to the happy edge property for edge rotations in convex sets.]{Counterexample to the happy edge property for the reconfiguration of plane spanning trees in convex point sets using rotations.}
\label{convex_case}
\end{figure}

The flip sequence in Figure \ref{convex_case} only uses three rotations. Therefore, the happy edge property does not hold for the reconfiguration of plane spanning trees with rotations.

\section{Fixed-Parameter Tractable Algorithms}\label{sec:fpt}

Next, we show that the flip distance $k$ and the compatible flip distance $k_c$ are fixed-parameter tractable in $k$ (resp.\ $k_c$). The core ingredient for all proofs is the happy edge property for convex hull edges, which is shown to hold for unrestricted flips in Proposition 18 in \cite{aichholzer2024reconfiguration}. We obtain further improvements on the runtime built on the happy edge property for all edges and the parking edge property.

The upcoming lemma will describe a crucial step in the first two algorithms. Even if we never flip happy convex hull edges, the number of happy edges that are on the convex hull can still be arbitrarily large when compared to the number of unhappy edges. Therefore, we have to reduce that number in order to obtain a reasonable instance size that is bounded by a function in~$k$. We do this as follows: We mark all vertices that are incident to an unhappy edge that is in the initial configuration or the target configuration. All non-marked vertices are now only incident to convex hull edges. If we have a path between two marked vertices that only travels along convex hull edges incident to only non-marked vertices (except the first and/or last edge that start or end in a marked vertex), then we contract that path into one single edge (if both endpoints are marked) or a single vertex (if only one endpoint is marked). All the non-marked vertices get removed in the process. We call the new reduced instances $T_{\text{in,r}}$ and $T_{\text{tar,r}}$. The following Lemma states that in this contraction step no information about the length of the flip sequence is lost.

\begin{figure}[ht]
	\centering
	\includegraphics[scale=0.55]{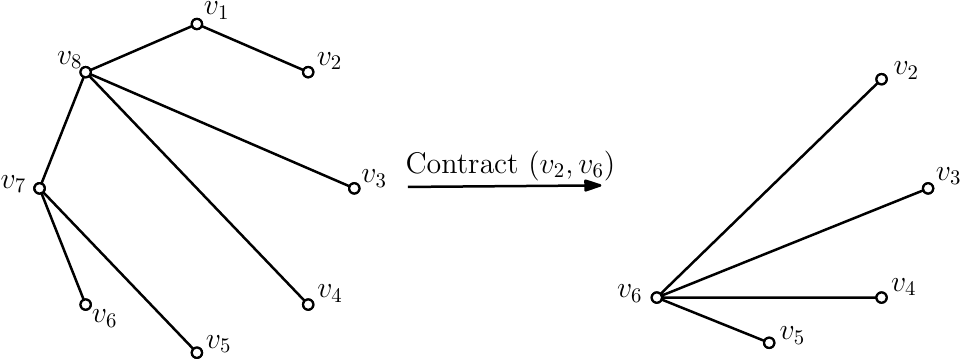}
	\caption[Contracting convex hull edges in a tree.]{An example of a tree before (left) and after (right) contracting the path from $v_2$ to $v_6$ along the convex hull.} 
	\label{fig:contraction}
\end{figure}

\begin{restatable}[$\star$]{lemma}{contract}
	\label{abc}
	The (compatible) flip distance between $T_{\text{in}}$ and $T_{\text{tar}}$ has the same length as the flip distance between $T_{\text{in,r}}$ and $T_{\text{tar,r}}$.
\end{restatable}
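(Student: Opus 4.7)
The plan is to establish both inequalities $d(T_{\text{in}},T_{\text{tar}}) \le d(T_{\text{in,r}},T_{\text{tar,r}})$ and $d(T_{\text{in,r}},T_{\text{tar,r}}) \le d(T_{\text{in}},T_{\text{tar}})$, treating the compatible and the unrestricted cases in parallel.

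For the direction from original to reduced, I would start with any shortest (compatible) flip sequence from $T_{\text{in}}$ to $T_{\text{tar}}$ and invoke the happy edge property for convex hull edges---Theorem~\ref{thm:happy} in the compatible setting and Proposition~18 of~\cite{aichholzer2024reconfiguration} in the unrestricted setting---to select a shortest sequence that never flips any of the contracted happy convex hull edges. Since every intermediate tree then contains each contracted path in full, projecting the sequence to the reduced instance by collapsing each contracted path to its single representative edge yields a flip sequence of the same length on $(T_{\text{in,r}},T_{\text{tar,r}})$, provided that every projected flip is a valid, plane, and (in the compatible case) compatible exchange.

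For the reverse direction, I would take any shortest (compatible) flip sequence on the reduced instance and lift it to the original by keeping all contracted convex hull paths fixed in every intermediate configuration and performing the corresponding flip on the lifted edges of the original. Each lifted intermediate graph is again a plane spanning tree of $S$, and each lifted flip inherits compatibility from the reduced flip, because no edge of the contracted paths is touched in any flip and hence no new crossings can be introduced with them.

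The main technical obstacle is to show that contraction is a faithful combinatorial operation: two edges between marked vertices cross in the reduced instance if and only if their original counterparts cross in the original instance, and a set of edges forms a plane spanning tree in the reduced instance if and only if that set, together with the fixed contracted paths, forms a plane spanning tree in the original instance. Both facts follow from the observation that contracting a maximal run of non-marked convex hull vertices only removes vertices strictly between two marked endpoints, preserving the cyclic order of the marked vertices; moreover, the contracted path is a pendant-like chain along the convex hull whose presence can be added or removed without affecting planarity or connectedness of the remainder of the tree.
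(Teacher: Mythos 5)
Your overall skeleton (prove two inequalities; use the convex-hull happy edge property --- Proposition~18 of~\cite{aichholzer2024reconfiguration} for unrestricted flips, Theorem~\ref{thm:happy} for compatible ones --- to fix a shortest sequence that never flips a contracted path edge; then project down and lift up) is the same as the paper's, and the lifting direction is fine. The projection direction, however, has a genuine gap. You only describe what happens to edges between marked vertices, and your ``faithfulness'' discussion at the end addresses only those. But even in a sequence that never flips a happy edge of the contracted path $v_a,v_{a+1},\dots,v_b$, an intermediate tree can contain an edge $(v_c,v_d)$ where $v_c$ is an interior, non-marked vertex of the path and $v_d$ lies elsewhere: adding such an edge closes a cycle whose removed edge need not be on the path, so the flip preserves all happy edges. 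Your operation of ``collapsing each contracted path to its single representative edge'' leaves such an edge with a nonexistent endpoint, so the projected intermediate configuration is simply not defined.

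The paper closes exactly this hole with a normalization map $N$: the path edges become $(v_a,v_b)$, and every edge $(v_c,v_d)$ incident to an interior vertex $v_c$ is redirected to $(v_a,v_d)$ (with an analogous rule for contracting a happy leaf). One then checks that consecutive normalized trees either coincide or differ by a single flip and deletes the repetitions --- so the projected sequence may be strictly shorter, not ``of the same length'' as you claim; this is harmless for the inequality you need, but your assertion as stated is too strong. Finally, the validity of each projected flip is verified by observing that any cycle or crossing in $N(T_i)$ pulls back to a cycle or crossing in $T_i$; this is precisely the proviso you leave open (``provided that every projected flip is a valid, plane, and compatible exchange''), and it is the actual content of the lemma rather than a side condition one may assume.
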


Now, we are ready to obtain our first FPT-algorithm. We remark that the notation $O^\ast$ hides terms that are polynomial in $n$.

\begin{theorem}\label{thm:FPT1}
	The flip distance $k$ for unrestricted flips is fixed-parameter tractable in $k$ with runtime $O^\ast((ck^3)^k)$, where $c=91.125$. The constant $c$ can be improved to $8$ if the happy edge property holds for unrestricted flips.
\end{theorem}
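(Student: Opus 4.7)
The plan is a reduce-then-enumerate FPT algorithm: first kernelise via Lemma~\ref{abc}, then exhaustively search for a flip sequence of length at most $k$ inside the reduced instance. Applying Lemma~\ref{abc} replaces $(T_{\text{in}}, T_{\text{tar}})$ by the contracted pair $(T_{\text{in,r}}, T_{\text{tar,r}})$ on $n_r$ vertices without changing the flip distance. Because every flip changes the symmetric difference with $T_{\text{tar}}$ by at most $2$, any flip sequence of length $k$ forces $d := |T_{\text{in}} \setminus T_{\text{tar}}| \leq k$; and because the contraction removes every non-marked vertex, $n_r$ is controlled by the endpoints of the at most $2d$ unhappy edges. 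A careful accounting of how endpoints of unhappy edges in $T_{\text{in}}$ and $T_{\text{tar}}$ overlap (beyond the naive $4d$ bound) will be needed to obtain the sharp kernel bound $n_r \leq \tfrac{9}{2}k$ that produces $c=(9/2)^3=91.125$.

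For the enumeration step I would build the depth-$k$ branching tree of all flip sequences starting at $T_{\text{in,r}}$. A single flip is determined by the edge removed (at most $n_r-1$ options) together with the edge added such that the result is a plane spanning tree (at most $\binom{n_r}{2}$ options), giving $O(n_r^3)$ candidate flips per step. Proposition~18 of~\cite{aichholzer2024reconfiguration}—the happy edge property restricted to convex hull edges—then lets us safely prune every flip that touches a happy convex hull edge without losing a shortest sequence. Over the $k$ levels, this yields at most $O^\ast(n_r^{3k}) = O^\ast((91.125\,k^3)^k)$ candidate sequences, each of which can be verified against $T_{\text{tar,r}}$ in polynomial time.

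For the improvement to $c=8$ under the full happy edge property, every removed edge in a shortest flip sequence must be unhappy, leaving at most $2d \leq 2k$ removal choices at each step. A matching $O(k)$ bound on admissible additions can be derived from a parking-edge-style argument: an added edge either lies in $T_{\text{tar,r}}$ (at most $2k-1$ candidates) or is a parking edge on the convex hull of the kernel (an $O(k)$-sized set). Per-step branching thus drops to $(2k)^3 = 8k^3$, yielding total branching $O^\ast((8k^3)^k)$ and $c=8$.

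The principal obstacle is not the FPT framework—reduce to a kernel, brute-force the kernel—but squeezing the constants. In particular, proving the sharp kernel bound $n_r \leq \tfrac{9}{2}k$ requires a careful pairing of unhappy-edge endpoints across $T_{\text{in}}$ and $T_{\text{tar}}$, and articulating precisely how the (conjectural) full happy edge property would simultaneously restrict both the removal side and the addition side of each flip to at most $2k$ candidates will require the most technical care.
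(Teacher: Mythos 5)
Your high-level shape (kernelise, then brute-force the kernel) matches the paper, but the proposal misses the central difficulty: the happy edge property is \emph{not} known for unrestricted flips, so a shortest flip sequence may flip happy diagonals, and Lemma~\ref{abc} alone does not give an $O(k)$-size kernel. The contraction in Lemma~\ref{abc} only collapses paths of happy \emph{convex hull} edges through non-marked vertices; happy diagonals survive it, and there can be $\Theta(n)$ of them forming long chains of all-happy faces between the few unhappy edges (think of two trees differing in one edge, joined by a long zigzag of happy diagonals). Your claim that $n_r$ is controlled by the endpoints of the unhappy edges is therefore false, and your guess that the $\tfrac{9}{2}k$ bound comes from a sharper accounting of endpoint overlaps is off target: the naive bound from unhappy edges is already $4k$, and the extra $\tfrac{1}{2}k$ vertices come from happy diagonals that a length-$k$ sequence might still flip. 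The paper closes this gap with three ingredients absent from your proposal: (i) \emph{good happy edges} (happy diagonals one of whose sides is entirely happy) are never flipped in a shortest sequence, which prunes whole happy subtrees; (ii) for each path in the dual tree between two faces containing unhappy edges, a shortest sequence flips all of its happy edges or none (Claim~\ref{claim:surrounding}); and (iii) a length-$k$ sequence can afford fewer than $\tfrac{k-t}{2}$ flips on happy edges, so one enumerates the $2^{t-1}<2^k$ subsets of such dual paths and keeps only those whose happy-edge count fits the budget. This contributes the extra $2^k$ factor that your runtime computation silently omits; your $(4.5k)^{3k}$ only coincidentally equals the paper's $2^k\bigl(4.5k\binom{4.5k}{2}\bigr)^k$.

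Your route to $c=8$ is also flawed. You invoke a parking-edge bound on the admissible added edges, but the parking edge property is only proved for \emph{compatible} flips (Lemma~\ref{prehappy 3}); for unrestricted flips it is still Conjecture~\ref{prehappy2}, so you would be assuming a second unproven conjecture on top of the happy edge property. Moreover, $(2k)^3$ counts three factors where a flip has only two independent choices (one removal, one addition), so that arithmetic does not parse. The paper instead cuts along happy diagonals (legitimate once Conjecture~\ref{conj:happyedgeconj} is assumed), obtains components with at most $4k$ vertices in total, and counts $k$ removals times $\binom{4k}{2}<8k^2$ additions per step.
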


\begin{proof}
	For the unrestricted flip we define \emph{good happy edges}. 
	All happy edges on the convex hull are good happy edges.
	Further, a happy edge $e$ that does not lie on the convex hull splits the point set into two parts. 
	If one of the two parts contains only happy edges, then $e$ is a good happy edge as well.
	
	\begin{restatable}[$\star$]{claim}{goodhappy}\label{claim:good}
		No good happy edge is flipped in any shortest flip sequence.
	\end{restatable}
	
	Now consider the tree $N(T_{\text{in}})$ that is obtained from $T_{\text{in}}$ from exhaustively contracting happy leaves and paths of happy edges along the convex hull as in \cref{abc}. We consider the dual tree of $N(T_{\text{in}})$, whose vertices are the faces of $N(T_{\text{in}}) \cup \{\text{convex hull edges}\}$ except the outer face and which contains edge between two vertices if and only if their corresponding faces share an edge of $N(T_{\text{in}})$.
	
	\begin{restatable}[$\star$]{claim}{surround}
		\label{claim:surrounding}
		Let $P$ be a path in the dual tree that connects two faces containing distinct unhappy edges such that all the edges on traversed faces are happy edges. Then any shortest flip sequence either flips all those happy edges or none of them.
	\end{restatable}
	
	Let there be a total of $t$ unhappy edges. Then there are a total of $t-1$ paths $P_1$,...,$P_{t-1}$ that fit the form of Claim \ref{claim:surrounding}. Let $l_1$,...,$l_{t-1}$ denote their number of happy edges in some order. Any flip sequence of length $k$ gets to spend a total of at most $k-t$ flips on all these happy edges combined. We enumerate all $2^{t-1}<2^k$ subsets of $\{P_1,...,P_{t-1}\}$ and if their number of happy edges sums up to less than $\frac{k-t}{2}$, then we consider an instance that contains all the faces that contain unhappy edges and the faces along the paths in the subset.
	Then the total remaining tree contains at most $k$ unhappy edges from $T_{\text{in}}$ and $k$ unhappy edges from $T_{\text{tar}}$. That gives a total of at most $2k$ unhappy edges and a total of at most $4k$ vertices. Additionally, we have a total of less than $0.5k$ edges from faces with only happy edges. This gives another $0.5k$ vertices. So we have at most $4.5k$ vertices in our reduced forest. Brute forcing the optimal solution gives us a total of $4.5k-1$ possible edges to remove and a total of at most $\binom{4.5k}{2}$ positions to add edges for every flip. This gives us a total runtime of $O^\ast \left(2^k\left(4.5k\binom{4.5k}{2}\right)^k\right) = O^\ast((91.125k^3)^k)$ based on the number of flip sequences that we have to check.
	
	\begin{restatable}[$\star$]{claim}{FPT}
	\label{thm:fpt}
	If Conjecture~\ref{conj:happyedgeconj} is true, then the constant $c$ in Theorem~\ref{thm:FPT1} improves to $8$.
	\end{restatable}
	
	\begin{figure}[ht]
		\centering
		\includegraphics[scale=0.55]{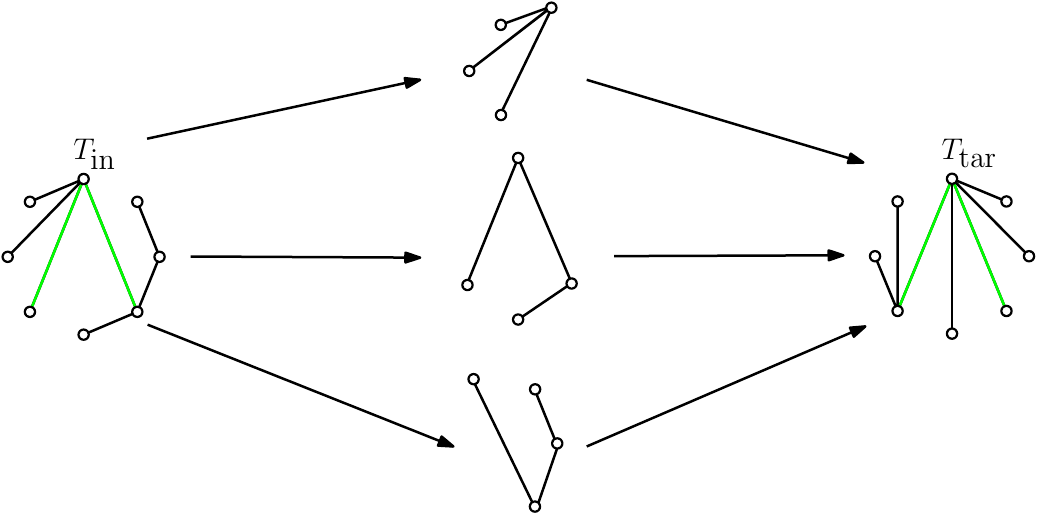}
		\caption[Splitting the original tree into smaller instances.]{We split the initial tree $T_{\text{in}}$ along its happy edges (colored in green) into three parts, from there we flip every part individually into its corresponding counterpart in $T_{\text{tar}}$.} 
		\label{split}
	\end{figure}
	
	Figure \ref{split} gives an intuitive idea, where the improvement comes from. We cut the point set along all happy diagonals such that the new instances have all their happy edges on  the convex hull.
\end{proof}

Next, we prove the existence of a faster FPT-algorithm for compatible flips that is based on the parking edge property. 

\begin{theorem} \label{thm:fpt2}
	The compatible flip distance $k_c$ is fixed-parameter tractable in $k_c$ with runtime $O^\ast(((2+\epsilon)k_c^2)^{k_c})$, for any fixed $\epsilon >0$.
\end{theorem}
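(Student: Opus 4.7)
The plan is to refine the algorithm of Theorem~\ref{thm:FPT1} by exploiting both the strong happy edge property (Theorem~\ref{thm:happy}) and the parking edge property (Lemma~\ref{prehappy 3}), which together restrict the search to compatible sequences in which no happy edge is ever removed and every parking edge lies on the convex hull. Since both properties transfer to the reduced instance, I would start by applying the contraction of Lemma~\ref{abc} to obtain $T_{\text{in,r}},T_{\text{tar,r}}$ on at most $4k_c$ vertices, with contracted happy convex hull paths becoming happy super-edges that are, by Theorem~\ref{thm:happy}, never flipped in any shortest compatible sequence.

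On the reduced instance I would run a depth-$k_c$ branch-and-bound, at each node enumerating only pairs $(e_-,e_+)$ where $e_-$ is a non-happy edge of the current tree and $e_+$ is either a still-missing edge of $T_{\text{tar,r}}$ or a convex hull edge of the reduced instance not currently in the tree, checking compatibility and the plane-spanning-tree condition in polynomial time at each step. The candidates for $e_-$ are invariantly the $d:=|T_{\text{in,r}}\setminus T_{\text{tar,r}}|\le k_c$ non-happy edges of the current tree, since every compatible flip swaps one non-happy edge for another. The candidates for $e_+$ split into at most $k_c$ target edges plus the convex hull parking edges allowed by Lemma~\ref{prehappy 3}. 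To close the bound, I would sharpen Lemma~\ref{abc} in an $\epsilon$-dependent way, identifying a representative subset of at most $(1+\epsilon)k_c$ convex hull edges of the reduced instance that is large enough to host the parking edges of some shortest sequence but small enough to keep the per-flip branching factor at $k_c\cdot(k_c+(1+\epsilon)k_c)=(2+\epsilon)k_c^2$; multiplied over the $k_c$ levels of the search tree and combined with polynomial bookkeeping, this gives the claimed $O^\ast(((2+\epsilon)k_c^2)^{k_c})$ runtime.

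The main obstacle is precisely this $\epsilon$-dependent refinement of the reduction. One has to show that for any shortest compatible sequence as produced by Lemma~\ref{prehappy 3}, one can locally substitute each parking edge by one drawn from a predetermined small subset of the convex hull of the reduced instance without lengthening the sequence. I expect this to follow from a ``parking edge sliding'' exchange argument in the spirit of the proof of Lemma~\ref{prehappy 3}: when a convex hull parking edge $f$ used in some flip is far from the region where the surrounding flips happen, the contiguous happy-hull structure on one side of $f$ allows us to replace $f$ by a nearby representative $h$ while leaving the rest of the sequence intact, reusing the idea of splitting the convex point set along a parking edge into two independently-flippable sides. Once this representative set is identified before the search begins, the remaining steps---the branching itself, the compatibility and tree checks, and the runtime accounting---are standard.
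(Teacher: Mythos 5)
Your overall strategy (strong happy edge property plus the parking edge property to bound the branching factor by $k_c\cdot(\#\text{targets}+\#\text{parking positions})$) matches the paper's, but the step you yourself flag as the main obstacle --- getting the number of candidate parking positions down to $(1+\epsilon)k_c$ --- is exactly where your route does not go through. After the contraction of Lemma~\ref{abc} the reduced instance still contains all happy \emph{diagonals}, and the number of empty convex hull positions of the reduced instance equals the number of diagonals of $T_{\text{in,r}}$ plus one, which includes those happy diagonals; this can be on the order of $2k_c$ rather than $(1+\epsilon)k_c$, so your per-flip branching factor would be roughly $3k_c^2$, not $(2+\epsilon)k_c^2$. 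Your proposed fix, an $\epsilon$-dependent ``representative subset'' of hull edges obtained by a parking-edge sliding/exchange argument, is not proved and does not obviously address this: the excess hull holes are not interchangeable positions that can be merged into representatives, they are genuinely distinct faces separated by happy diagonals.

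The paper avoids the need for any such exchange argument by a different decomposition. Using Theorem~\ref{thm:happy} it cuts the point set along \emph{all} happy edges into components, so that every happy edge becomes a convex hull edge of its component; a component whose tree has $l$ unhappy edges then has exactly $l+1$ empty hull positions, by a direct count. The $+1$ per component is absorbed by fixing $t=\lceil 1/\epsilon\rceil$, brute-forcing all components with at most $t$ unhappy edges (constant-size subproblems for fixed $\epsilon$), and observing that for the remaining components $l+1\le l(1+1/t)\le l(1+\epsilon)$; summing gives at most $(1+\epsilon)k'$ parking positions plus $k'$ target edges, hence the $(2+\epsilon)k'^2$ branching factor. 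If you want to salvage your write-up, replace the contraction-plus-representatives step by this cut-along-happy-edges decomposition and the small-component brute force; the rest of your branching argument then goes through essentially as you describe it.
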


\begin{proof}
	We divide the point set into components by cutting along happy edges and then find the shortest flip sequence for every remaining component individually, see Figure~\ref{split}. Observe that all the happy edges in these components are convex hull edges. If all edges in a component are happy, do not flip any edge in it. By Lemma~\ref{prehappy 3} there exists a shortest flip sequence that only uses edges from $T_{\text{in}}$, $T_{\text{tar}}$ or the convex hull.
	
	Now let $\epsilon > 0$ be fixed and $t = \big\lceil \frac{1}{\epsilon} \big\rceil$. Let~$C_1$,...,$C_m$ be all components with~$\leq t$ unhappy edges. Determine the respective flip distances~$k_j$ from~$T_{\text{in}}^{C_j}$ to~$T_{\text{tar}}^{C_j}$ via brute force and discard all the instances. For checking whether the remaining instances have a combined flip distance of $k'=k_c-k_1-...-k_m$ one has to check $((2+\epsilon)k'^2)^{k'}$ different flip sequences. There are $k'$ choices to remove an edge and every component with $l \geq n$ diagonals has $l+1 \leq l(1+\frac{1}{t})\leq l(1+\epsilon)$ holes in the convex hull where we can add edges.
\end{proof}

\section{Conclusion}\label{sec:conclusion}

In this work, we have considered the compatible flip graph and the rotation graph, by improving the upper bound on the respective diameter from $2n-o(n)$ to $\frac{5}{3}(n-1)$ and to~$\frac{7}{4}(n-1)$, respectively. The constructive upper bounds align with our findings about the happy edge property. On the one hand, we showed that any shortest compatible flip sequence preserves happy edges and that there is a shortest compatible flip sequence that only uses parking edges from the convex hull. The same holds for the compatible flip sequences that we used to obtain our upper bound. On the other hand, shortest rotation sequences may rotate happy edges. This may also happen in our upper bound construction. We further showed an immediate application of happy edge properties in trees, namely, that they can be used to derive fixed-parameter tractable algorithms. Open questions that are related to our work are:

\begin{enumerate}
\item Can we close the gap between upper and lower bounds for the flip distance, for each of the flip types? Do the diameters for different flip types differ or do they coincide?
\item Does the happy edge property still hold for flipping plane spanning trees if we drop one of the restrictions that either the point set is convex or the flips are all compatible?
\item What is the time complexity of finding shortest flip sequences for plane spanning trees? Again, this is interesting for each of the flip types.
\end{enumerate}

\bibliography{citation}

\newpage
\appendix

\section{Proof of Theorem \ref{upper}}\label{app:upper}

This section is devoted to the proof of Theorem~\ref{upper}.
Our goal is to modify the methods which were introduced and used in \cite{bjerkevik2024flippingnoncrossingspanningtrees} to show an upper bound of $\frac{5}{3} (n-1)$ for the diameter of the flip graph in order to improve the upper bound on the diameter of the compatible flip graph on a convex $n$-point set. 
As in~\cite{bjerkevik2024flippingnoncrossingspanningtrees}, the idea will be to pair edges~$(e,e')$ from the initial tree and the target tree such that the edge $e$ will either be flipped directly to the edge $e'$ performing one perfect flip or to flip $e$ to a convex hull edge and later flip the convex hull edge to $e'$ performing an \emph{imperfect flip}. The length of the flip sequence will then be $\#$perfect flips + $2$ $\#$imperfect flips. So we want to maximize the number of perfect flips.

We start by introducing some notation and results from~\cite{bjerkevik2024flippingnoncrossingspanningtrees}.
For two fixed trees $T_{\text{in}}$, $T_{\text{tar}}$ a \emph{linear representation} is introduced, which can be interpreted as cutting the cyclic order of the points on the convex hull between $v_1$ and $v_n$ and unfolding the circle into a horizontal line segment, called \emph{spine}. Each edge that was a straight-line chord can be thought of a semi-circle above (from $T_{\text{in}}$) or below (from $T_{\text{tar}}$) the spine. Recall that the linear order $v_1,...,v_n$ defines a cover relation between edges and vertices. An edge $(v_i,v_j)$ \emph{covers} a vertex $v_k$ if $i\leq k\leq j$. An edge $e$ \emph{covers} another edge $f$ if $e$ covers both vertices of~$f$. 
A \emph{gap}~$g_i$ is the segment along the spine with endpoints $v_i$ and $v_{i+1}$. 
For each gap $g$ and a set of non-crossing edges $E$, let $\rho_E(g)$ be the shortest edge of $E$ that covers $g$.

\begin{lemma}\cite[Lemma 3.1]{bjerkevik2024flippingnoncrossingspanningtrees}
	Let $E$ be a set of non-crossing edges on a linearly labeled point set $S$. 
	Then $\rho_E$ defines a bijection between the set of gaps and $E$ if and only if $E$ forms a tree on~$S$.
\end{lemma}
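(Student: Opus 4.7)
The plan is to exploit that the number of gaps is exactly $n-1$: in the forward direction, a tree has $|E|=n-1$, so it suffices to show that $\rho_E$ is well-defined and surjective; in the reverse direction, the bijection forces $|E|=n-1$, so it suffices to show $E$ is connected. The key technical ingredient in both directions is that non-crossing severely restricts how strictly shorter edges can cover a gap already covered by a longer edge.

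For the forward direction I first verify $\rho_E$ is well-defined on every gap $g_i$: since $E$ is connected, the tree-path from $v_1$ to $v_n$ starts left of $g_i$ and ends right of $g_i$, so some edge on the path must cover $g_i$. Uniqueness of the shortest covering edge follows from non-crossing, because two distinct equal-length edges covering the same gap would interleave and cross. For surjectivity, I fix $e=(v_a,v_b)\in E$ and suppose for contradiction that every gap in $\{g_a,\ldots,g_{b-1}\}$ is covered by some strictly shorter edge of $E$. Non-crossing with $e$ forces each such shorter edge to have both endpoints in $\{v_a,\ldots,v_b\}$, because any edge of length $<b-a$ with exactly one endpoint in $[a,b]$ would cross $e$. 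Starting from $d_0:=a$, I would build a path $v_a=v_{d_0},v_{d_1},\ldots,v_{d_m}=v_b$ by greedily choosing $f_i$ as the longest shorter edge of $E$ starting at $v_{d_{i-1}}$. A short case analysis shows that the shorter edge witnessing that $g_{d_{i-1}}$ is covered must start exactly at $v_{d_{i-1}}$: if its left endpoint were some earlier $v_{d_j}$ it would contradict the maximality of $f_{j+1}$, and if it were strictly between $v_{d_{j-1}}$ and $v_{d_j}$ it would cross $f_j$. The resulting path closes with $e$ to form a cycle in $E$, contradicting that $E$ is a tree.

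For the reverse direction, bijectivity gives $|E|=n-1$, so if $E$ were not a tree it would contain a cycle $C$. A standard fact for non-crossing edges on linearly ordered points is that the vertices of a non-crossing cycle appear in cyclic order matching their linear order; hence the unique max-span edge $e=(v_a,v_b)$ of $C$ connects the extreme cycle vertices, and $C\setminus\{e\}$ is a path from $v_a$ to $v_b$ whose vertices all lie in $\{v_a,\ldots,v_b\}$ and whose edges are all strictly shorter than $e$. Since this path must cross every gap in $\{g_a,\ldots,g_{b-1}\}$ (otherwise all its vertices would stay on one side of the skipped gap), each such gap is covered by an edge strictly shorter than $e$, so $e$ lies in no fiber of $\rho_E$, contradicting bijectivity.

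The main obstacle I expect is the case analysis in the greedy path construction in the forward direction, specifically ruling out that the next covering edge bypasses the current right endpoint of the path from an earlier vertex. This is where non-crossing does the real work. The remaining ingredients are elementary counting and the standard observation that non-crossing cycles on a line are vertex-sorted.
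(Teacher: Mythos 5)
The paper does not actually prove this statement: it is imported verbatim as \cite[Lemma 3.1]{bjerkevik2024flippingnoncrossingspanningtrees}, so there is no in-paper proof to compare against. Judged on its own, your argument is correct and self-contained. In the forward direction, well-definedness (the $v_1$--$v_n$ tree path covers every gap; two equal-length covering edges of the same gap would interleave) and the reduction of bijectivity to surjectivity via $\lvert E\rvert = n-1 = \#\text{gaps}$ are both sound. The crux is your greedy path $v_a=v_{d_0},\ldots,v_{d_m}=v_b$: the case analysis showing that the shorter edge covering $g_{d_{i-1}}$ must have left endpoint exactly $v_{d_{i-1}}$ (earlier $v_{d_j}$ contradicts maximality of $f_{j+1}$; a left endpoint strictly between $v_{d_{j-1}}$ and $v_{d_j}$ crosses $f_j$) is complete, since non-crossing with $e$ already confines all relevant edges to $\{v_a,\ldots,v_b\}$; the resulting cycle with $e$ gives the contradiction. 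The reverse direction is also fine, though you lean on the ``sorted non-crossing cycle'' fact more than necessary: it suffices to take \emph{any} max-span edge $e$ of the cycle $C$ and note that $C\setminus\{e\}$ is a path between its endpoints, hence covers every gap that $e$ covers, by edges of length at most that of $e$ --- and equality is impossible because a distinct equal-length edge covering a common gap would cross $e$. That variant avoids invoking the convex-position cycle fact altogether. Either way, the statement is established.
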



For a plane spanning tree $T$, the edges are labeled $e_i=\rho_T(g_i)$ and categorized into three types. For each $i\in [n-1]$ the edge $e_i = (u,v)$ is a
\begin{itemize}
	\item \emph{short edge} if $\{u,v\} = \{v_i,v_{i+1}\}$
	\item \emph{near edge} if $\lvert\{u,v\}\cap\{v_i,v_{i+1}\}\rvert = 1$
	\item \emph{wide edge} if $\{u,v\}\cap\{v_i,v_{i+1}\} = \emptyset$
\end{itemize}

The sets of short, near, and wide edges of $T_{\text{in}}$ are denoted by $S$, $N$, and $W$, respectively. 
Likewise, the sets of short, near, and wide edges of $T_{\text{tar}}$ are denoted by $S'$, $N'$, and $W'$, respectively.
Next, a set of pairs of edges $\mathcal{P}$ is defined, 
with $(e,e') \in \mathcal{P}$ if and only if $\rho_{T_{\text{in}}}^{-1}(e) = \rho_{T_{\text{tar}}}^{-1}(e')$.
The set $\mathcal{P}$ is further partitioned into

\begin{itemize}
	\item $\mathcal{P}_{=} = \{(e,e')\in\mathcal{P}\mid e=e'\}$,
	\item $\mathcal{P}_{N} = \{(e,e')\in\mathcal{P}\mid e\neq e', e\in N, e'\in N'\}$, and
	\item $\mathcal{P}_{R} = \mathcal{P}\setminus(\mathcal{P}_{=} \cup \mathcal{P}_{N})$
\end{itemize}

For pairs $(e_i,e_i') \in \mathcal{P}_N$ (and their corresponding gaps $g_i$), a more refined case distinction is made, depending on the relation between $e_i$ and $e_i'$.
\begin{itemize}
		\item \emph{above pairs}, where $e$ and $e'$ share a vertex and $e$ is longer than $e'$.
		\item \emph{below pairs}, where $e$ and $e'$ share a vertex and $e'$ is longer than $e$.
		\item \emph{crossing pairs}, where $e$ and $e'$ cross, or equivalently, $e$ and $e'$ share a different vertex with their gap.
\end{itemize}

Note that this case distinction is complete, since $e_i$ and $e_i'$ both cover their gap $g_i$ and, hence, the intervals of the spine which they cover overlap.
Let $A$, $B$, and $C$ denote the set of above, below and, crossing pairs, respectively. 

The next lemma follows from a series of results in \cite{bjerkevik2024flippingnoncrossingspanningtrees} where the authors produce a perfect flip sequence between two trees that only differ by either above pairs (A), below pairs (B), or crossing pairs (C) in the following way: 
By \cite[Lemma 3.2]{bjerkevik2024flippingnoncrossingspanningtrees} the sets A, B, and C are acyclic sets in the conflict graph. 
By \cite[Proposition~17 (arXiv version)]{bjerkevik2024flippingnoncrossingspanningtrees} there exists a perfect flip sequence for collections of pairs of edges that form an acyclic set in the conflict graph, given the rest of the tree consists only of convex hull edges. 
Since any edge pair in $A$ or $B$ is compatible, their flip sequences for above and below pairs are compatible, which directly implies the following statement.

\lemmaCompatibleAB*

In contrast, no direct flip between two edges of a pair in $C$ is compatible, since any such pair is crossing. 
Therefore, we have to come up with a different flip sequence in this case. Our main contribution here is the following Lemma.

\lemmaCompatibleC*

To prove Lemma~\ref{Lemma3}, we aim to change the way in which the edges from
$C$ are paired. 
Up to now, every edge in $T_{\text{in}}$ was flipped to end up in the position of its paired edge in $T_{\text{tar}}$. 
We will rearrange those target positions.

For this, we revisit the \emph{conflict graph} from~\cite{bjerkevik2024flippingnoncrossingspanningtrees}.
In a nutshell, this conflict graph $H$ has a set of near-near pairs (or, equivalently, the corresponding gaps) as vertices. 
Two vertices $(e_i,e_i')$ and $(e_j,e_j')$ in $H$ are connected with a directed edge from $(e_i,e_i')$ to $(e_j,e_j')$, 
if the direct flip $(e_j,e_j')$ is not possible while the edge $e_i$ is in the tree.
We use a scaled down version of $H_C$ of $H$ which has only the edge pairs of $C$ as vertices. 
In this case, a directed edge from $(e_i,e_i')$ to $(e_j,e_j')$ exists if the edge $e_j'$ crosses the edge $e_i$.

As part of \cite[Lemma 3.2]{bjerkevik2024flippingnoncrossingspanningtrees}, we get the following statement.

\begin{lemma}\label{Lemma4}
	$H_C$ is acyclic.
\end{lemma}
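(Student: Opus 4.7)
I argue by contradiction: assume $H_C$ contains a directed cycle
$(e_{i_1}, e_{i_1}') \to \cdots \to (e_{i_p}, e_{i_p}') \to (e_{i_1}, e_{i_1}')$,
where each arrow $(e_{i_k}, e_{i_k}') \to (e_{i_{k+1}}, e_{i_{k+1}}')$ encodes
that $e_{i_{k+1}}'$ crosses $e_{i_k}$. On top of this, the edges
$e_{i_1}, \dots, e_{i_p}$ are pairwise non-crossing (as they all lie in $T_{\text{in}}$)
and likewise $e_{i_1}', \dots, e_{i_p}'$ are pairwise non-crossing (as they all lie in $T_{\text{tar}}$).

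I first label each crossing pair by its orientation: call $(e_i, e_i')$ \emph{type X} if
the $T_{\text{in}}$-edge is incident to $v_i$ and the $T_{\text{tar}}$-edge to $v_{i+1}$,
and \emph{type Y} otherwise. For each arrow I then perform a case analysis over
the four orientation combinations (XX, XY, YX, YY) of source and target. Unfolding
the single crossing $e_{j}' \times e_i$ together with the non-crossings of $e_i, e_j$
and of $e_i', e_j'$ pins down sharp relations between the gap indices and the non-gap
endpoints of the two pairs. In particular, arrows within XX strictly increase the
gap index and arrows within YY strictly decrease it, so no cycle can be contained
in a single orientation.

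The closing step uses the bijection characterization
$e_i = \rho_{T_{\text{in}}}(g_i)$ and $e_i' = \rho_{T_{\text{tar}}}(g_i)$ as the
unique shortest edges of the respective trees covering the gap $g_i$. For a putative
2-cycle in a mixed-orientation case, the crossing/non-crossing constraints force
$e_j$ to cover $g_i$ and, symmetrically via the reverse arrow, $e_i$ to cover $g_j$;
this yields $|e_j| \geq |e_i|$ and $|e_i| \geq |e_j|$, hence $|e_i| = |e_j|$ and
then $e_i = e_j$ by uniqueness of the bijection, contradicting $i \neq j$. For a
longer cycle, the same length identity is chained step by step around the cycle,
forcing consecutive pairs to coincide and again contradicting the bijection.

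The main obstacle I anticipate is the ``shifted'' subcase of XX (and symmetrically YY),
in which the target pair lies entirely to one side of the source pair and neither
$e_j$ nor $e_j'$ covers $g_i$, so no direct length inequality is produced. Such arrows
strictly move the gap index in one direction and therefore cannot close a cycle on
their own; any cycle using them must eventually pass through an XY or YX arrow, at
which point the bijection-based argument reactivates. The technical bookkeeping is
to merge the ``shift monotonicity'' with the ``covering identity'' into a single
lexicographic potential on pairs, for instance on $(\text{signed gap index},\, |e_i|+|e_i'|)$,
that is strictly monotone along every arrow of $H_C$.
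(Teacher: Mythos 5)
Your setup is sound and parts of it are verifiably correct: classifying each crossing pair by which of $v_i,v_{i+1}$ the $T_{\text{in}}$-edge touches, one checks that XX arrows force $i<j$ and YY arrows force $j<i$, and your mutual-covering argument does kill all $2$-cycles (for a mixed $2$-cycle the crossing conditions force $e_i$ to cover $g_j$ and $e_j$ to cover $g_i$, whence $|e_i|>|e_j|>|e_i|$ by minimality and injectivity of $\rho_{T_{\text{in}}}$). Note for comparison that the paper does not prove this lemma at all: it imports it from Lemma~3.2 of the cited work, whose proof is geometric --- an arc of $H_C$ forces the curve of the source pair to lie partially above that of the target pair in the linear representation, and every nonempty subset of $C$ has a topmost pair (this is exactly what Remark~\ref{Lemma5} records). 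So you are attempting a genuinely different, purely combinatorial route.

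The gap is the passage from $2$-cycles to longer cycles. Working out your case analysis, an arrow $(e_i,e_i')\to(e_j,e_j')$ yields exactly one of: (C1) $e_i$ covers $g_j$, hence $|e_i|>|e_j|$; (C2) $e_j'$ covers $g_i$, hence $|e_j'|>|e_i'|$; or (S) your shifted subcase, which yields only an index inequality. These do not ``chain around the cycle'': C1 strictly decreases the $T_{\text{in}}$-length along the arrow while C2 strictly increases the $T_{\text{tar}}$-length along the arrow, so a cycle alternating C1- and C2-arrows produces inequalities about disjoint, non-consecutive quantities and no contradiction --- e.g.\ a hypothetical cycle (shifted XX) $\to$ (XY with $j<i$) $\to$ (shifted YY) $\to$ (YX with $i<j$) is excluded by nothing you state. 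The proposed lexicographic potential does not repair this: reading ``signed gap index'' as $+i$ for type X and $-i$ for type Y, it increases along XX and YY arrows but strictly drops along every XY arrow (from positive to negative), and the reverse sign convention fails on YX arrows; moreover the tie-breaker $|e_i|+|e_i'|$ is not controlled by any single one of C1, C2 or S, since each constrains only one of the two lengths. To close the argument you would need positional information in the shifted cases too (e.g.\ that a shifted XX arrow forces \emph{both} endpoints of the pair strictly rightward, since $A_i<i<A_j$ and $B_i<j+1<B_j$) combined with the nesting that $\rho$ forces in the covering cases --- which essentially reconstructs the ``partially above'' order of the cited proof. As it stands, the step ``the same length identity is chained step by step around the cycle'' is asserted but false for mixed cycles, so the proof is incomplete.
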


\begin{remark}\label{Lemma5}
	The proof of \cite[Lemma 3.2]{bjerkevik2024flippingnoncrossingspanningtrees} actually reveals an even stronger statement than Lemma~\ref{Lemma4}. 
	In this proof, the authors observe that for a directed edge from $(e_i,e_i')$ to $(e_j,e_j')$, in the conflict graph $H_C$,
	the curve formed by $(e_i,e_i')$ in the linear representation must at least partially lie above the according curve formed by $(e_j,e_j')$.
	They further show that there is at least one ``topmost'' pair $(e_i,e_i')$ in $C$ for which no other edge pair $(e_k,e_k')$ of $C$ lies partially above.
	Any such top-most pair $(e_i,e_i')$ is a root of $H_C$ with the properties that
	\begin{itemize}
		\item the corresponding edge $e_i$ of $T_{\text{in}}$ is not covered by any other edge~$e_j$ of $T_{\text{in}}$ with $(e_j,e_j') \in C$,
		\item the corresponding edge $e_i'$ of $T_{\text{tar}}$ covers no other edge~$e_j'$ of $T_{\text{tar}}$ with $(e_j,e_j')' \in C$, and 
		\item the edge $e_i'$ of $T_{\text{tar}}$ neither covers nor crosses any edge $e_j$ of $T_{\text{in}}$ with $(e_j,e_j') \in C\setminus{(e_i,e_i')}$.
	\end{itemize}
	Further, removing $(e_i,e_i')$ from $H_C$ yields a new conflict graph that, again, has a top-most root. %
\end{remark}

Every edge $e=(u,v)$ that does not lie on the boundary of the convex hull splits the vertex set into two parts, the points that appear in clockwise order along the convex hull boundary from $u$ to $v$ and those that appear in counterclockwise order. The vertices $u$ and $v$ are contained in both parts. 
The term \emph{face} with regard to a tree will refer to a inclusion-wise maximal subset of the point set such that it cannot be split into smaller parts by any edge of $T$. 
An edge is said to \emph{border} a face, if both its endpoints are contained in the vertex set of a face. 
Every convex hull edge borders exactly one face, every non-convex hull edge borders exactly two faces. 
Observe that every face contains exactly one pair of consecutive vertices of the convex hull that does not form an edge of the tree
(where except for $(v_n,v_1)$, all of those pairs of edges form gaps).

\begin{proof}[Proof of Lemma \ref{Lemma3}]
	Let $T_{\text{in}}$ and $T_{\text{tar}}$ consist of only common convex hull edges and pairs of edges in $C$, and consider the according conflict graph $H_C$ on $C$.
	By Lemma~\ref{Lemma4},  $H_C$ is acyclic and contains a root $(e_i,e_i') \in C$
	such that  
	$e_i'$ is not crossed by any of the initial edges~$e_k$ for any other pair $(e_k,e_k')$ 
	in $C$, and consequently no edge of our current tree. 
	Further, by Remark \ref{Lemma5}, we can choose $(e_i,e_i')$ such that $e_i$ is not covered by any edge in $T_{\text{in}}$ and $e_i'$ does not cover any edge in $T_{\text{in}}\Delta T_{\text{tar}}$.
	Let $(e_i,e_i')$ be chosen as above. The edge $e_i$ borders two faces of the tree $T$: 
	one face has the gap $g_i$ as non-edge 
	on the convex hull and the other face has $(v_n,v_1)$ as non-edge 
	on the convex hull. 
	We perform a flip that flips $e_i$ to $(v_n,v_1)$. 
	
	Now remove the pair~$(e_i,e_i')$ from the conflict graph. 
	Let $(e_j,e_j')$ be a pair of edges that is a 
	root of the reduced conflict graph $H_C$ that fulfills the properties of Remark~\ref{Lemma5} (for the reduced~$H_C$).
	We perform a flip that exchanges the edge~$e_j$ with the edge $e_i'$. 
	As $g_i$ and $g_j$ are each non-edges in different faces of $e_i'$,
	we do not close a cycle during the flip. 
	Further, by the choice of $e_i'$, the flip does not create a crossing and hence is compatible. 
	
	We remove the pair $(e_j,e_j')$ 
	from $H_C$ and iterate the process with $e_j'$ taking the role of $e_i'$. 
	We repeat the process until $H_C$ is empty. By the choice of $(e_j,e_j')$, adding $e_j'$ to the graph will never separate edges in pairs in $H_C$ from the face of $e_j'$ that contains $(v_1,v_n)$. Further, since $e_j'$ is not crossed by any edge in pairs in $H_C$, the next flip will be compatible. 
	Lastly, $g_j$ and $(v_1,v_n)$ (and therefore the convex hull non-edge from the previous step) will always lie in different faces of $e_j$, so we will not close a cycle in any flip.
	We conclude the flip sequence by performing a flip that exchanges $(v_n,v_1)$ for the remaining edge $e_i'$ that is yet to be added. 
	In total, we flipped $T_{\text{in}}$ into $T_{\text{tar}}$ in $\lvert C \rvert +1$ flips. For an illustration see Figure~\ref{shuffle}.
\end{proof}

\begin{figure}[ht]
	\centering
	\includegraphics[scale=0.75]{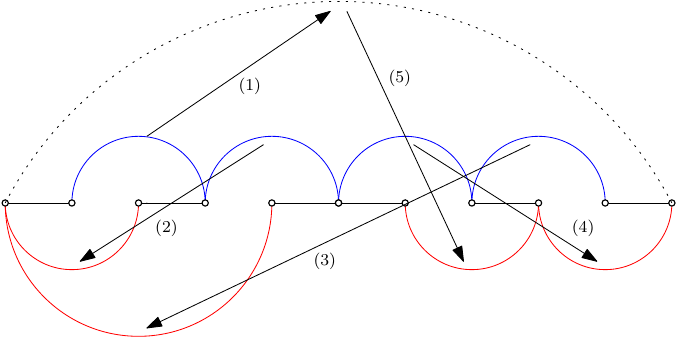}
	\caption{Example of a compatible flip sequence that fixes $4$ crossing pairs in $5$ flips.}
	\label{shuffle}
\end{figure}

Before proving Theorem~\ref{upper}, we need two more statements from \cite{bjerkevik2024flippingnoncrossingspanningtrees} (or, more exactly, from the arXiv version of the paper, since the conference version does not contain all proofs).

\begin{observation}[{\cite[Observation 28 (arXiv version)]{bjerkevik2024flippingnoncrossingspanningtrees}}] \label{Lemma6}
	Consider two plane spanning trees $T_{\text{in}}$, $T_{\text{tar}}$ on a convex point set $S$. If~$T_{\text{in}}\cup T_{\text{tar}}$ contain all convex hull edges of $S$, then there exists a compatible
	flip sequence of length $\lvert T_{\text{in}}\setminus T_{\text{tar}}\rvert$.
\end{observation}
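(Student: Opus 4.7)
The plan is to leverage the hypothesis to show that a Bjerkevik-style pairing of edges via gaps yields only ``tame'' pair types, each resolvable by one perfect compatible flip, so that the trivial lower bound $d=\lvert T_{\text{in}}\setminus T_{\text{tar}}\rvert$ is matched.

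As a first step, I would reduce to the case where $T_{\text{in}}$ and $T_{\text{tar}}$ share no diagonals by cutting along every happy diagonal, exactly as in the proof of Theorem~\ref{upper}: a common diagonal $e$ splits the convex polygon into two subpolygons, $e$ becomes a happy hull edge of each, and the hypothesis that $T_{\text{in}}\cup T_{\text{tar}}$ contains every convex hull edge transfers to each subinstance. Flips in different subinstances are disjoint and thus trivially compatible, so it suffices to produce a compatible flip sequence of length $d_i=\lvert T_{\text{in}}^{(i)}\setminus T_{\text{tar}}^{(i)}\rvert$ in each subinstance; the totals add to $d$.

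After this reduction, I would invoke the edge-gap bijection $\rho$ from \cite[Lemma~3.1]{bjerkevik2024flippingnoncrossingspanningtrees}, assigning to each gap $g_i=(v_i,v_{i+1})$ the shortest covering edges $\rho_{T_{\text{in}}}(g_i)$ and $\rho_{T_{\text{tar}}}(g_i)$. The key observation is that under the hypothesis, at least one of $\rho_{T_{\text{in}}}(g_i)$ and $\rho_{T_{\text{tar}}}(g_i)$ equals the short edge $(v_i,v_{i+1})$: if the hull edge $(v_i,v_{i+1})$ lies in a tree, it is necessarily the shortest edge of that tree covering $g_i$ and is thus assigned to $g_i$ by $\rho$. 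Consequently, every pair produced by composing the two bijections is either short-short (a happy hull edge, no flip needed) or of one of the types short-near, short-wide, near-short, or wide-short; the near-near, near-wide, wide-near, and wide-wide types underlying the sets $A, B, C$ (which forced the extra flip in Lemma~\ref{Lemma3}) are all absent. Exactly $d$ pairs are non-happy.

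The core step is to resolve each non-happy pair by a single perfect compatible flip. Within a short-$X$ (or $X$-short) pair, I would perform the direct flip that exchanges the short hull edge with its paired near or wide partner. Compatibility holds because a hull edge and a near edge share a vertex, while a hull edge $(v_i,v_{i+1})$ and a wide edge $(v_a,v_b)$ with $a<i$ and $b>i+1$ cannot cross---a chord of a convex polygon can cross a boundary edge only at a shared endpoint. This leaves the ordering problem: sequence the $d$ direct flips so that every intermediate configuration is a plane spanning tree. I plan to follow the conflict-graph template of Lemma~\ref{Lemma2} and \cite[Proposition~17 (arXiv version)]{bjerkevik2024flippingnoncrossingspanningtrees}: define a directed conflict graph on the non-happy pairs, with an arc from one pair to another whenever executing the first flip would break the second (by disconnecting the tree or invalidating a still-needed short witness), prove acyclicity using the nesting order of short, near, and wide edges around their gaps, and process pairs in topological order.

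The main obstacle is this acyclicity step. Unlike Lemmas~\ref{Lemma2} and~\ref{Lemma3}, where the trees consist only of short edges together with pairs of a single type, the present setting mixes short-near, short-wide, near-short, and wide-short pairs in the same tree. I expect the analysis nevertheless to succeed: each short-$X$ pair is strictly nicer than a pair in $A$, $B$, or $C$, because one partner is already a hull edge and can be inserted into or removed from a gap along the boundary without disturbing other pairs' short witnesses, and no analog of the $+1$ overhead of Lemma~\ref{Lemma3} arises since there are no crossing near-near pairs. Summing one perfect compatible flip per non-happy pair across all subinstances yields a compatible flip sequence of total length $d$, matching the trivial lower bound.
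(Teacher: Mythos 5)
The paper does not actually prove this statement: it imports it verbatim from \cite[Observation 28 (arXiv version)]{bjerkevik2024flippingnoncrossingspanningtrees} and only adds the remark that the flip sequence described there is compatible because every flip in it inserts or deletes a convex hull edge, which crosses nothing. Your proposal is therefore a genuine re-derivation rather than a parallel of the paper's argument. Your central structural observation is correct and is the right one: since every hull edge $(v_i,v_{i+1})$ lies in $T_{\text{in}}\cup T_{\text{tar}}$ and a hull edge is necessarily the shortest edge covering its gap, every gap receives a short edge from at least one of the two trees; hence $\mathcal{P}_N=A\cup B\cup C$ is empty, every non-identical pair contains exactly one short edge, and each such pair costs a single flip that involves a hull edge (so the removed and inserted edges never cross). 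The $+1$ overhead of Lemma~\ref{Lemma3} indeed cannot arise.

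The gap is exactly where you place it yourself: you never prove that the $d$ direct flips can be scheduled so that every intermediate graph is a crossing-free spanning tree, and only ``expect the analysis to succeed.'' Note that the issue is not compatibility of the two edges in a pair but that an inserted diagonal of $T_{\text{tar}}$ must not cross \emph{any} edge still present, and that the removed edge must lie on the cycle closed by the inserted one. A bespoke acyclicity argument for mixed pair types is unnecessary; a two-phase schedule closes the gap and is already implicit in the proof of Theorem~\ref{upper}: first perform all flips whose inserted edge is the short (hull) edge, i.e.\ flip each non-short $e_i\in T_{\text{in}}$ into its gap $g_i$ (step (1) of that proof; always valid since the inserted edge is a hull edge and $e_i$ lies on the cycle it closes); then perform all flips removing a hull edge $g_i$ and inserting $e_i'\in T_{\text{tar}}$ (step (5); at that point every tree edge is a hull edge or an edge of $T_{\text{tar}}$, so no insertion creates a crossing). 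Two further points need care. First, your claim that exactly $d$ pairs are non-identical requires that no happy edge is displaced by the bijection; this is why the cut along happy diagonals is genuinely needed, and you must additionally choose the linear representation so that the cut position $(v_n,v_1)$ is a non-happy hull edge (such a choice exists whenever $d\geq 1$), since a happy edge $(v_n,v_1)$ could be assigned to different gaps in the two trees and would then be flipped out and back, giving $d+2$ flips. With these repairs your argument becomes a correct, self-contained alternative to citing the reference, at the cost of invoking the full gap-bijection machinery for a statement that admits a much shorter greedy proof.
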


We remark that the text of 
Observation~28 in the arXiv version of \cite{bjerkevik2024flippingnoncrossingspanningtrees} 
does not explicitly state that the flip sequence is compatible.	
However, the according flip sequence described in the text before that observation is clearly compatible, since every flip in it involves at least one convex hull edge.

\begin{lemma}[{\cite[Lemma 15 (arXiv version)]{bjerkevik2024flippingnoncrossingspanningtrees}}]\label{Lemma7}
	Let $T$ be a plane tree on linearly ordered points $v_1$,...,$v_n$. Let $k$ be the number of uncovered edges of $T$. Then $\lvert S\rvert \geq k$ and $\lvert W\rvert \leq \lvert S\rvert -k$.
\end{lemma}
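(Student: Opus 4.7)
The plan is to exploit the nesting structure of $T$ via its \emph{cover forest}: the nodes are the edges of $T$, and the parent of an edge $f$ is the inclusion-minimal edge of $T$ that properly covers $f$ (well-defined because, in the linear representation, edges of $T$ are either nested or disjoint). The roots of this forest are precisely the $k$ uncovered edges, and every edge lies in the subtree of a unique root. For each $e \in T$, let $s(e)$ and $w(e)$ denote the numbers of short and wide edges in the subtree of $e$ in the cover forest, with $e$ itself counted. I would prove the strengthened claim $s(e) - w(e) \geq 1$ for every edge $e$. Summing over the $k$ roots then yields both halves of the lemma at once: $|S| = \sum_r s(r) \geq k$ (since each $s(r) \geq 1$) and $|W| = \sum_r w(r) \leq \sum_r (s(r) - 1) = |S| - k$.

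The driving combinatorial identity is the following: for any $e = (v_i, v_j) \in T$ whose children in the cover forest are $c_\ell = (v_{a_\ell}, v_{b_\ell})$, $\ell = 1, \ldots, m$, listed so that $i \leq a_1 < b_1 \leq a_2 < b_2 \leq \cdots \leq a_m < b_m \leq j$, one has $\sum_{\ell=1}^{m}(b_\ell - a_\ell) = j - i - 1$. Indeed, the unique gap $g_k$ with $\rho_T(g_k) = e$ must lie among $g_i, \ldots, g_{j-1}$ and must avoid every range $\{g_{a_\ell}, \ldots, g_{b_\ell - 1}\}$ (else the child $c_\ell$ would be a strictly shorter edge of $T$ covering $g_k$, contradicting $\rho_T(g_k) = e$). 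By the bijectivity of $\rho_T$ from \cite[Lemma 3.1]{bjerkevik2024flippingnoncrossingspanningtrees}, exactly one such gap exists, so the $j-i$ gaps in $\{g_i, \ldots, g_{j-1}\}$ decompose as $1$ plus the total length $\sum_\ell (b_\ell - a_\ell)$. The position of this \emph{$e$-owned} gap determines the type of $e$: if $k \in \{i, j-1\}$ then $e$ is near, and if $i < k < j-1$ then $e$ is wide.

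The induction on the size of the subtree of $e$ then closes cleanly. When $e$ is a cover-forest leaf, the identity forces $j = i+1$, so $e$ is short and $s(e) - w(e) = 1$. Otherwise, decomposing $s(e) - w(e) = \sum_{\ell}(s(c_\ell) - w(c_\ell)) - \mathbf{1}[e\text{ wide}]$ and applying the induction hypothesis yields $s(e) - w(e) \geq m - \mathbf{1}[e\text{ wide}]$. If $e$ is near, then $m \geq 1$ and we are done. If $e$ is wide, its owned gap lies strictly inside $\{g_i, \ldots, g_{j-1}\}$, so both $g_i$ and $g_{j-1}$ must be owned by descendants of $e$; the children responsible for them must satisfy $a_\ell = i$ and $b_{\ell'} = j$ respectively, and cannot coincide (that would force the child to equal $e$ itself). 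Hence $m \geq 2$, and again $s(e) - w(e) \geq 1$. I expect the main obstacle to be the careful derivation of the combinatorial identity together with the argument that wideness of $e$ produces two distinct ``endpoint'' children; once those are established, the induction concludes immediately.
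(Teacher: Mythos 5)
This lemma is not proved in the paper at all: it is imported verbatim as Lemma~15 of the arXiv version of \cite{bjerkevik2024flippingnoncrossingspanningtrees}, so there is no in-paper proof to compare against, and your proposal should be judged as a self-contained derivation. As such it is correct. The cover forest is well defined because any two edges of $T$ whose spans share a gap are nested, its roots are exactly the $k$ uncovered edges, and the children of an edge have pairwise internally disjoint spans inside $[i,j]$, which gives your identity $\sum_\ell (b_\ell-a_\ell)=j-i-1$ once one knows that the gaps under $e$ mapped to $e$ by $\rho_T$ are exactly those avoided by every child. You state one direction of that equivalence explicitly (a gap under a child is not mapped to $e$); you should also spell out the converse, namely that a gap under $e$ avoided by all children has $e$ as its shortest covering edge, because any shorter covering edge is properly covered by $e$, hence a descendant of $e$ in the cover forest, hence lies under some child --- this is what licenses the step ``exactly one such gap exists.'' With that in place the rest closes as you describe: a cover-forest leaf is forced to be short, a near internal edge has $m\ge 1$ children, and a wide edge owns a gap strictly interior to its span, so $g_i$ and $g_{j-1}$ must be covered by two necessarily distinct children (a single child covering both would coincide with $e$), giving $m\ge 2$; the induction yields $s(e)-w(e)\ge 1$ for every edge and summing over the $k$ roots gives both $\lvert S\rvert\ge k$ and $\lvert W\rvert\le\lvert S\rvert-k$.
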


Finally, we combine the above lemmata and observations to obtain the desired upper bound on the flip distance. 

\theoremCompatibleUpperBound*

\begin{proof}
	The proof follows the lines of the proof of \cite[Theorem~26 (arXiv version)]{bjerkevik2024flippingnoncrossingspanningtrees}.
	For the sake of self-containment, we include the proof in this appendix.
	
	We start with the case that $c=0$, where we distinguish two cases.
	If $T_{\text{in}}\cup T_{\text{tar}}$ covers all convex hull edges, then by Observation \ref{Lemma6}, there is a compatible flip sequence of length exactly $d$ from $T_{\text{in}}$ to $T_{\text{tar}}$. 
	Since every tree has at least two convex hull edges, $d=0$ implies $b\geq 2$ and $d=1$ implies $b\geq 1$.
	Therefore the flip distance in this case is $d \leq \frac{5}{3}d + \frac{2}{3}b -\frac{1}{3}$.
	
	Otherwise there exists a common gap on the convex hull. 
	We label the vertices in a way such that this gap corresponds to the edge $(v_n,v_1)$. 
	Then all pairs in $\mathcal{P}_{=}$ correspond to
	happy convex hull edges	and there are at least two uncovered edges.
	
	From Lemma \ref{Lemma7} we obtain $\lvert W \rvert + \lvert W'\rvert \leq \lvert S\rvert + \lvert S'\rvert -4$ and therefore $\lvert \mathcal{P}_R \rvert \leq \lvert W \rvert + \lvert W'\rvert + \lvert S\rvert + \lvert S'\rvert = 2\lvert S\rvert + 2\lvert S'\rvert -2b -4$. Additionally, $\lvert S\rvert + \lvert S'\rvert -2b$ gaps in $\lvert \mathcal{P}_R\rvert$ correspond to pairs with one short edge and only require one flip.
	
	By Lemma \ref{Lemma2} and Lemma \ref{Lemma3} we require at most $\frac{5}{3}\lvert \mathcal{P}_N \rvert+1$ flips to flip all pairs in $\mathcal{P}_N$. 
	The sequence is obtained by flipping all edges that belong to pairs of the smaller two components among $A$, $B$ and $C$ to the convex hull, then flipping the largest component perfectly (or with one additional flip in case $C$ is the largest) and then flipping all edges from the convex hull to their respective target edge.
	
	
	The flip sequence now consists of five parts:
	\begin{itemize}
		\item[(1)] Flip all edges $e_i$ from pairs in $\mathcal{P}_R$ to $g_i$
		\item[(2)] Flip all edges $e_i$ from the smaller two components among $A$, $B$ and $C$ to $g_i$
		\item[(3)] Flip all edges from the largest component among $A$, $B$ and $C$ perfectly (or with one additional flip in case $C$ is the largest)
		\item[(4)] Flip all edges from the smaller two components among $A$, $B$ and $C$ from $g_i$ to $e_i'$
		\item[(5)] Flip all edges $g_i$ from pairs in $\mathcal{P}_R$ from $g_i$ to $e_i'$
	\end{itemize}
	
	Let $d_1 = \lvert S \rvert + \lvert S'\rvert -4$ the number of pairs in $\mathcal{P}_R$ that require one flip, $d_2 = \lvert \mathcal{P}_{R} \rvert - d_1$ the number of pairs that require two flips, and $d_3 = \lvert \mathcal{P}_{N} \rvert$. Further, we have $d_2 \leq d_1 + 2b-4$.
	
	By putting everything together we obtain a flip distance of at most
	
	\begin{align*}
		d_1 + 2d_2 + \frac{5}{3}d_3+1 = d_1 + \frac{1}{3}d_2 +\frac{5}{3}(d_2+d_3)+1 \\ \leq d_1 + \frac{1}{3}(d_1+2b) +\frac{5}{3}(d_2+d_3)-\frac{1}{3} \\=\frac{4}{3} d_1 +\frac{5}{3}(d_2+d_3)+\frac{2}{3}b-\frac{1}{3} \\ \leq \frac{5}{3}d+\frac{2}{3}b-\frac{1}{3}
	\end{align*}
	
	Now if $c\neq0$, we cut the tree along happy non-convex-hull edges and perform the case~$c=0$ for each face $F$ individually. For that purpose, let $d(F)$ denote the number of edges in $T_{\text{in}} \setminus T_{\text{tar}}$ restricted to $F$ and $b(F)$ the number of common convex hull edges of $T_{\text{in}}$ and $T_{\text{tar}}$ of the induced subtrees on $F$. 
	Note that happy non-convex-hull edges become happy convex hull edges for the two components they border. We get the following upper bound for the flip distance:
	
	\begin{align*}
		\sum_F \frac{5}{3} d(F) + \frac{2}{3} b(F) - \frac{1}{3} = \frac{5}{3} d + \frac{2}{3}(b+2c) - \frac{1}{3}(c+1) = \frac{5}{3}d+\frac{2}{3}b+c-\frac{1}{3}
	\end{align*}
	
	The bound of $\frac{5}{3}(n-1)-\frac{1}{3}$ follows since $d+b+c=n-1$.
\end{proof}
	
	\section{Omitted Details from Section \ref{sec:rotations}}\label{app:rotations}
	
	\tohull*
	
	\begin{proof}
Without loss of generality we prove the claim for $R$, the argument for $L$ follows analogously. Every edge $e$ that belongs to a pair in $R$ cuts the convex point set $S$ into two \emph{sides}, the vertices that appear in clockwise order along the convex hull between the two vertices of $e$, and the vertices that appear in counterclockwise order. A \emph{face} is an inclusion-wise maximal subset of $S$ that is not cut into two parts by any edge that belongs to a pair in~$R$. 
Every face $F$ forms again a convex point set and we get an induced subtree $T[F]$ of $T$ on $F$ since every edge of $T$ has either both or none of its endpoints in $F$. 
All the edges that belong to pairs in $R$ and have both end vertices in $F$ are convex hull edges of $T[F]$.  

		For each such face $F$, we apply Lemma~\ref{convex_hull_star} to $F$, where we pick $v_n$ to be the rightmost vertex $v_r$ of~$F$ in the linear representation, and set the index $j$ to $r-1$.
Further, we set $K^\ast=\emptyset$ and let $I^\ast:=I^\ast[F]$ be the set of all indices of vertices of~$F$ except $r$.  %
		This application yields a rotation sequence of length $\lvert F \rvert-1-\lvert I[F] \cap I^\ast[F]\rvert$ where $I[F]$ is the set of all indices of vertices that are attached to convex hull edges of $T[F]$ and are directed away from $v_r$ in~$T[F]$. 

It remains to count how many rotations are performed in total. 
To this end, we split~$R$ into the set $R_{\text{diag}}$ of edges in $R$ that are diagonals and the set $R_{\text{CH}}$ of convex hull edges in~$R$. 
Observe that edges that belong to pairs in $R$ are all contained in $I[F] \cap I^\ast[F]$ for every face~$F$ that they bound. 
For calculating the number of rotations, we consider that 
		(1) there are $\lvert R_{\text{diag}} \rvert +1$ faces; 
		(2) all the vertices that belong to edges in pairs in $R_{\text{diag}}$ belong to exactly two faces; and 
		(3) pairs in $R_{\text{CH}}$ border exactly one face. 
This gives us a rotation sequence of the following total length.
		\begin{align*}
			\sum_{F~\text{face}} \lvert F \rvert-1-\lvert I[F] \cap I^\ast[F]\rvert = n+2\lvert R_{\text{diag}} \rvert-(\lvert R_{\text{diag}} \rvert +1) - \sum_{F~\text{face}} \lvert I[F] \cap I^\ast[F] \rvert \\ \leq n+2\lvert R_{\text{diag}} \rvert-(\lvert R_{\text{diag}} \rvert +1) - 2\lvert R_{\text{diag}} \rvert - \lvert R_{\text{CH}} \rvert = n - 1 - \lvert R \rvert
		\end{align*}
		The above estimate concludes the proof.
	\end{proof}

	\lrresolve*
	
	\begin{proof}
		
		Note that for every pair in $R$ (resp.\ $L$) the gap in the convex hull to the left (resp.\ right) of its assigned vertex is empty and both edges cover that gap. 
		We revisit the conflict graph for $R$ (resp.\ $L$). This time we use the full version of the conflict graph as defined in \cite{bjerkevik2024flippingnoncrossingspanningtrees} where the set of vertices is again given by the pairs of edges in $R$ (resp.\ $L$) and we add an edge from $(e_1,e_1')$ to $(e_2,e_2')$ if one of the three conditions holds:
		\begin{itemize}
			\item $e_1$ and $e_2'$ cross
			\item $e_2'$ covers $e_1$ and $e_1$ covers the gap left (resp.\ right) to the vertex assigned to $(e_2,e_2')$
			\item $e_1$ covers $e_2'$ and $e_2'$ covers the gap left (resp.\ right) to the vertex assigned to $(e_1,e_1')$
		\end{itemize}
		From \cite[Lemma 3.2]{bjerkevik2024flippingnoncrossingspanningtrees} we get that the set of all above pairs or the set of all below pairs form an acyclic subset of conflict graph, consequently the sets $RA$ and $RB$ (resp.\ $LA$ and $LB$) are both acyclic subsets of the conflict graph if regarded separately.
		
		It remains to show that the whole set $R = RA \cup RB$ (resp.\ $L=LA\cup LB$) combined form an acyclic set in the conflict graph.
		\begin{claim}\label{acyclic}
			The conflict graph on $R$ (resp.\ $L$) is acyclic.
		\end{claim}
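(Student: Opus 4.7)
The plan is to exploit the extremely constrained structure of conflict edges on right-attached pairs: only the ``crossing'' conflict type can occur, and each such edge pins down the attachment-order relationship, forcing either the source to be in $RA$ or the target to be in $RB$. A minimum-attachment-index argument then rules out cycles.

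First I would dispatch the two ``cover'' conditions. Write the two pairs as $e_1 = (v_{j_1}, v_{i_1})$, $e_1' = (v_{k_1}, v_{i_1})$ and $e_2 = (v_{j_2}, v_{i_2})$, $e_2' = (v_{k_2}, v_{i_2})$ with $j_1, k_1 < i_1$ and $j_2, k_2 < i_2$ (right-attachedness). The condition ``$e_2'$ covers $e_1$'' gives $k_2 \le j_1$ and $i_1 \le i_2$, while ``$e_1$ covers the gap $(v_{i_2-1},v_{i_2})$'' gives $j_1 < i_2$ and $i_1 \ge i_2$; combining them forces $i_1 = i_2$. Since $\rho_{T,v_n}$ is a bijection, distinct pairs have distinct attachment vertices, so this conflict type (and symmetrically the third) cannot occur between distinct pairs in $R$; only crossings remain.

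Next, for a crossing conflict edge $(e_1,e_1') \to (e_2,e_2')$ I would split on the comparison of $i_1$ and $i_2$ (equality is impossible as noted). If $i_1 < i_2$, the crossing of $e_1$ and $e_2'$ forces $j_1 < k_2 < i_1 < i_2$; since $e_1', e_2' \in T_{\text{tar}}$ are non-crossing and $k_2 < i_1$, one has $k_2 \le k_1$, whence $j_1 < k_1$ and pair~1 is strictly $RA$. Symmetrically, if $i_1 > i_2$ then $k_2 < j_1 < i_2 < i_1$, and non-crossing of $e_1, e_2 \in T_{\text{in}}$ forces $j_1 \le j_2$, hence $k_2 < j_2$ and pair~2 is strictly $RB$.

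The acyclicity now follows from an extremal argument. Happy pairs ($e = e'$) produce no crossings with any edge of $T_{\text{in}} \cup T_{\text{tar}}$, hence are isolated vertices in the conflict graph. Suppose for contradiction a cycle exists and let $P$ be a pair on the cycle with smallest attachment index; it is necessarily non-happy. Its outgoing edge in the cycle has target with larger index, forcing $P$ to be strictly $RA$, while its incoming edge has source with larger index, forcing $P$ to be strictly $RB$. Since no pair can be both, no cycle exists, and the argument for $L$ is symmetric after reflecting the linear order. The main obstacle I expect is the planarity-based case analysis promoting the derived inequalities to strict ones; after that, the extremal-index argument is immediate.
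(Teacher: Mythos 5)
Your proof is correct, and it takes a genuinely different route from the one in the paper. The paper's proof imports the acyclicity of $RA$ and of $RB$ separately from \cite[Lemma 3.2]{bjerkevik2024flippingnoncrossingspanningtrees} and only adds the missing piece, namely that no conflict edge is directed from an $RB$ pair to an $RA$ pair (checking the three conflict conditions for that one configuration); acyclicity of the union then follows from the two acyclic parts plus the one-directional cross edges. You instead give a self-contained argument: you first observe that for right-attached pairs the two cover-type conflict conditions force equal attachment indices and hence never occur between distinct pairs, so only crossing conflicts remain; you then classify each crossing conflict edge by the order of the attachment indices, using planarity of $T_{\text{in}}$ (resp.\ $T_{\text{tar}}$) to promote the inequalities to strict ones, concluding that the lower-indexed endpoint of the conflict edge is strictly $RA$ if it is the source and strictly $RB$ if it is the target; the minimal-attachment-index pair on a hypothetical cycle would then be both, a contradiction. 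Your version buys independence from the cited lemma (it reproves acyclicity within $RA$ and within $RB$ as a byproduct) at the cost of a slightly longer case analysis; the paper's version is shorter but leans on the external result. One small remark: the strict inequalities you derive already yield the contradiction $j<k$ and $k<j$ at the minimal pair, so the separate observation that happy pairs are isolated, while true and a useful sanity check, is not actually needed for the extremal step.
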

		
		\begin{claimproof}[Proof of Claim \ref{acyclic}]
			This follows from the fact that the conflict graph only contains edges directed from pairs in $RA$ to pairs in $RB$ but not the other way around. Let $(e_1,e_1')$ be a right-attached below pair and $(e_2,e_2')$ be a left-attached above pair. Assume $e_1$ crosses $e_2'$ then also $e_1'$ crosses $e_2'$ a contradiction. If~$e_2'$ covers $e_1$ the vertex assigned to $(e_2,e_2')$ has to be to the right of the vertex assigned to $(e_1,e_1')$. Then also the assigned gap of $(e_2,e_2')$ is more right than all of $(e_1,e_1')$ therefore $e_1$ cannot cover the gap. Similarly, if $e_1$ covers $e_2'$, then the gap assigned to $(e_1,e_1')$ is too far to the right to be covered.
		\end{claimproof}
	
		Recall that each pair of edges in $R$ shares a common endpoint and hence any direct flip for such a pair is a rotation.
		Since the induced conflict graph of $R$ (resp.\ $L$) is acyclic, we obtain a sequence of direct rotations of all pairs in $R$ (resp.\ $L$) by always rotating a source of the current conflict graph; see~\cite[Proposition 17 (arXiv version)]{bjerkevik2024flippingnoncrossingspanningtrees}.
	\end{proof}
	
	\stars*
	
	\begin{proof}
		We use the following algorithmic approach: We consider the gap $g = (v_i,v_{i+1})$ that is visible from above and its incident edges $e_\ell$ and $e_r$ if they exist. Further, let $T$ be the tree obtained from $T_{\text{in}}$ by adding the edge $(v_1,v_n)$ via one rotation (in case it was not already in $T_{\text{in}}$). 
		Throughout the algorithm, we will perform rotations in $T$. 
		Further, we will modify $T$ by cutting along edges and contracting edges such that edges that need not be rotated anymore get removed. 
		This gives rise to the following case distinction; see Figure~\ref{fig:dj_cases} for an illustration:
		
		\begin{itemize}	
			\item[\textbf{(1)}] The edge $e_r$ belongs to a left attached pair: rotate $e_r$ into $g$
			\item[\textbf{(2)}] The edge $e_r$ belongs to a jump: Consider the edge $e_r':=\rho_{T_{\text{tar}},v_n}(v_{i+1}) = (v_j,v_{i+1})$, that is, the edge in $T_{\text{tar}}$ that is paired with $e_r$.
			\subitem\textbf{(2a)} All edges in $T$ that are attached to vertices between $v_{j+1}$ and $v_i$ are convex hull edges. Rotate $e_r$ into $e_r'$. Next, modify $T$ by \textbf{cutting} off all vertices that are now covered by $e_r'$ such that $e_r'$ becomes a convex hull edge in the resulting tree.
			\subitem\textbf{(2b)} There exists an edge that is attached to a vertex between $v_{j+1}$ and $v_i$ and is not a convex hull edge. 
				Among all such edges, let $f$ be the edge with the rightmost attachment point $v_k$. 
				We rotate $f$ into the edge $f'=(v_k,v_{i+1})$. 
				Then, we modify $T$ by \textbf{contracting} $f'$ and all edges below it into the vertex $v_{i+1}$. %
			\item[\textbf{(3)}] $g$ is the rightmost gap.
			\subitem\textbf{(3a)} If there exists some edge in $T$ that is not on the convex hull, take the rightmost vertex $v_k$ with a right attached edge $f$ attached to it and rotate $f$ to the edge $(v_k,v_n)$. Next, modify $T$ by contracting $f$ and all the edges below it into the vertex~$v_n$.
			\subitem\textbf{(3b)} If all other edges are on the convex hull, \textbf{terminate}.
		\end{itemize}
		
		\begin{figure}[ht]
			\centering
			\includegraphics[scale=0.55]{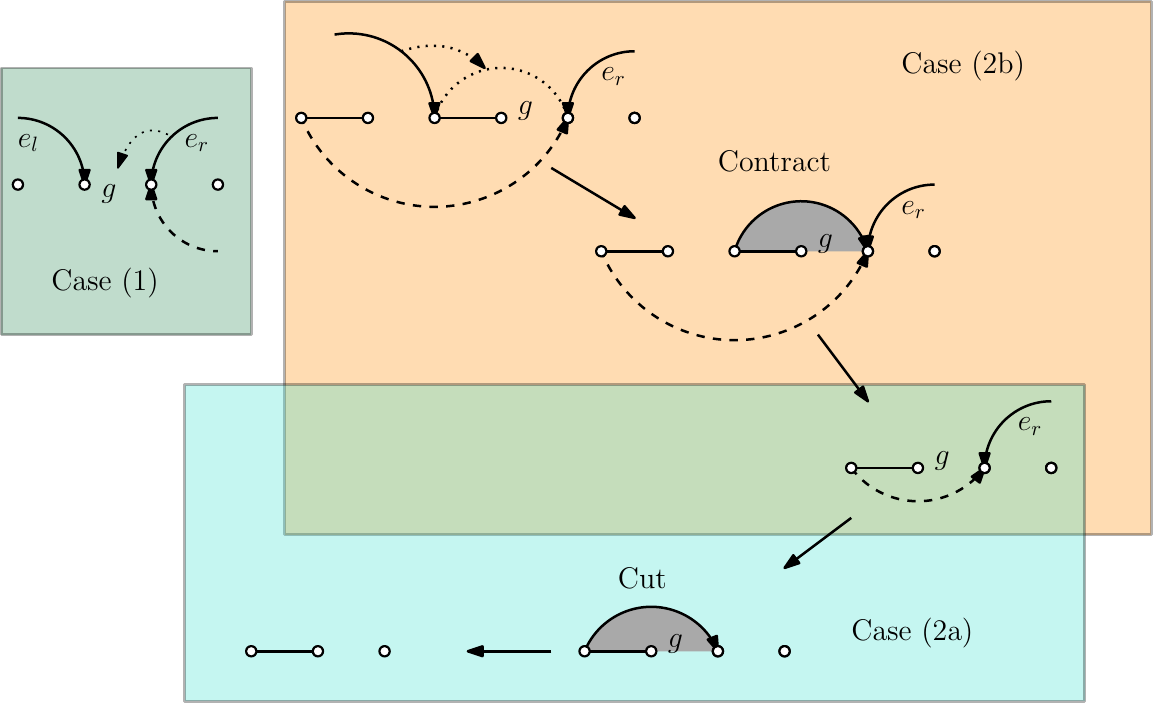}
			\caption{Case distinction based on the structure on the gap that is visible from above} 
			\label{fig:dj_cases}
		\end{figure}
	
	\begin{restatable}{claim}{correctness}\label{claim:stars}
		The algorithm terminates after at most $n-1$ iterations and the resulting tree after undoing all the cutting and contracting steps is $T^\ast$.
	\end{restatable}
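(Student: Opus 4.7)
The plan is to prove (a) that each iteration is well-defined and performs one valid rotation, (b) that at most $n-1$ rotations occur in total, and (c) that the tree remaining at termination, after undoing all cuts and contractions, coincides with $T^\ast$.

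For (a), I would show by induction on the iteration count that the current reduced tree $T$ always contains either $(v_1,v_n)$ or its contracted remnant, so that a gap visible from above exists as long as the algorithm has not terminated. The case distinction on $e_r$ is then exhaustive: since $e_r = \rho_{T,v_n}(v_{i+1})$ borders the face containing $(v_1,v_n)$, its other endpoint must lie on the boundary of that face; this forces $e_r$ to be either part of a left-attached pair (case~1), part of a jump (case~2), or incident to $v_n$ when $g$ is the rightmost gap (case~3). The right-attached, below, and diving cases are excluded because they would place $e_r$'s other endpoint outside the face of $(v_1,v_n)$. Each case performs a valid rotation: the shared vertex is $v_{i+1}$ (or $v_k$ in case~3a), and the unique cycle closed by the new edge is broken by removing $e_r$ (or $f$), producing a plane spanning tree of the current reduced vertex~set.

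For (b), I would assign to each rotation the edge it places in $T^\ast$-position: the initial setup rotation places $(v_1,v_n)$, which is a type-(C) edge with second vertex $v_n$; case~1 places the convex hull edge $(v_i,v_{i+1})$ attached to its right vertex $v_{i+1}$, hence type~(A); case~2a places $e_r' = (v_j,v_{i+1})$, a type-(B) target edge of a jump pair; cases~2b and 3a each place a type-(C) edge whose second vertex is the joint vertex of a jump pair or $v_n$. The cuts and contractions guarantee that no placed edge is revisited by subsequent iterations. Since $T^\ast$ has exactly $n-1$ edges, this placement injection bounds the total number of rotations by $n-1$.

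For (c), termination in case~3b means the current reduced tree contains only convex hull edges of the reduced point set; since every previous iteration contracted along or cut off edges already in $T^\ast$-position, these remaining convex hull edges are themselves type-(A) edges (attached to their right vertex with respect to the orientation induced by the construction). Reversing the sequence of cuts and contractions reinserts the earlier-placed type-(A)/(B)/(C) edges, showing the reconstructed tree is exactly $T^\ast$. The principal obstacle in this plan is verifying the exhaustiveness in~(a): the argument that $e_r$ cannot belong to a right-attached, below, or diving pair requires a careful geometric analysis tying the rooting of $T$ at $v_n$ to the visible-from-above face, which is the most delicate part of the proof.
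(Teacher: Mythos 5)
Your overall plan follows the same route as the paper (case analysis of the algorithm, a count of one rotation per iteration, and type-accounting of the placed edges), and two of your ingredients are fine: the injection of rotations into edges of $T^\ast$ is a legitimate alternative to the paper's one-rotation-per-gap bookkeeping, and your explicit exhaustiveness argument for the case distinction (the initial edge $e_r$ cannot go left from $v_{i+1}$, since it would then cover the visible gap) is a point the paper leaves implicit. However, there is a genuine gap in parts (b) and (c): you never establish the structural fact on which the whole correctness of Cases (2b) and (3a) rests, namely that the chosen edge $f$ (the non-convex-hull edge with the rightmost attachment point $v_k$) is a \emph{right-attached} edge, so that all edges attached to vertices strictly between $v_k$ and $v_{i+1}$ are already convex hull edges. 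Without this, the contraction of ``$f'$ and all edges below it'' may bury edges that are not yet in a valid $T^\ast$-position, and your assertion that ``the cuts and contractions guarantee that no placed edge is revisited'' and that the buried edges are all of type (A) is unsupported.

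The second, related omission is the progress argument for repeated occurrences of Case (2b). Recall that $T^\ast$ must contain \emph{all} target edges of jump pairs (type (B)); this is exactly what Lemma~\ref{lem:djresolve} exploits to save $\lvert J\rvert$ rotations, so it cannot be weakened to ``some''. Your type-accounting only shows that each rotation places \emph{an} edge of type (A), (B) or (C); it does not show that every jump edge $e_r$ eventually lands on its target $e_r'$ rather than, say, being contracted away or rerouted as some $f$ in a later iteration. The paper closes this by observing that each application of Case (2b) strictly shortens the remaining target of $e_r$ and strictly decreases the number of right-attached non-convex-hull edges below it, so after finitely many rounds Case (2a) applies and $e_r$ is rotated onto $e_r'$; together with the right-attachedness of $f$ (which shows $f$ is never a jump edge), this yields that all jump targets appear in $T^\ast$. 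Your global termination bound via the injection does not substitute for this local progress argument.
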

	
	\begin{claimproof}[Proof of Claim \ref{claim:stars}]
		To see that we only execute one rotation per gap, observe that in Case (1) a gap is filled with an edge and never considered again in the remaining cases as the gap is removed from the graph by cutting or contracting.
		
		In Case (1), we only create edges of Type (A).
		
		In Case (2a), we only create edges of the form (A) and (B).
		
		Observe that in the Cases (2b) and (3a) the edge with the rightmost attachment point that is not on the convex hull is a right attached edge. Therefore, all edges between $v_k$ and $v_{i+1}$ are convex hull edges in the current state of $T$. Therefore, we do not get any edges other than the ones of Types (A) and (C) from those steps. Further, since in Case (2b), we observe that the edge $e_r$ is now attached to $v_k$, we look at a strictly shorter target edge for $e_r$ with strictly less right attached non-convex hull edges in the new state of the tree $T$. Therefore after finitely many occurrences of Case (2b) $e_r$ will be flipped into its target position.
	\end{claimproof}
	
	This concludes the proof of Lemma \ref{lem:stars}.
		\end{proof}
	
	\DJresolve*
	
	\begin{proof}
		Similarly to the proof of Lemma~\ref{cl}, we split $S$ into faces. However, this time, we use edges of pairs in $J$ for the splitting:
		In the following, a face is an inclusion-wise maximal subset of $S$ that is not cut into two parts by any edge in $T_{\text{tar}}$ (or, equivalently, in $T^\ast$) that belongs to a pair in $J$. 
		Every such face $F$ the points on the boundary of $F$ form a convex subset. 
		Since every edge of $T_{\text{tar}}$ has either both or none of its endpoints in $F$,
		we get an induced subtree $T_{\text{tar}}[F]$ of~$T_{\text{tar}}$ on~$F$.
		All the edges of $T_{\text{tar}}[F]$ that belong to pairs in $J$ and have both end vertices in $F$ are hence convex hull edges of $T_{\text{tar}}[F]$. 
		
		For each such face $F$, we apply Lemma \ref{convex_hull_star} to $F$, 
		where we pick $v_n$ to be the rightmost vertex $v_r$ of~$F$ in the linear representation and choose the index $j$ to be $r-1$. 
		Further, we set $K^\ast:=K^\ast[F]$ 
		to be the set of all indices of vertices that have an edge to $v_r$ in $T^\ast[F]$
		and let $I:= I^\ast[F]$ be the set of all indices of vertices that are attached to convex hull edges of $T^\ast[F]$ and are directed away from $v_r$ in $T^\ast[F]$. 
		This application yields a rotation sequence of length at most $\lvert F \rvert-1-\lvert I_{\text{tar}}[F] \cap I^\ast[F]\rvert$ where $I_{\text{tar}}[F]$ is the set of all indices of vertices that are attached to convex hull edges of $T_{\text{tar}}[F]$ and are directed away from $v_r$ in $T_{\text{tar}}[F]$. 

It remains to count how many rotations are performed in total. 
We split $J$ into the set $J_{\text{diag}}$ of edges in $J$ that are diagonals and the set $J_{\text{CH}}$ of convex hull edges in $J$. 
Note that all edges that belong to pairs in $J$ are contained in $I_{\text{tar}}[F] \cap I^\ast[F]$ for every face $F$ they belong to.
For calculating the number of rotations, we consider that 
		(1) there are $\lvert J_{\text{diag}} \rvert +1$ faces; 
		(2) all the vertices that belong to edges in pairs in $J_{\text{diag}}$ belong to two faces; and 
		(3) pairs in $J_{\text{CH}}$ only border one face. 
		This gives us a rotation sequence whose total length is bounded from above by the following expression\footnote{For an exact length, we would need to also substract the terms $\lvert K_{\text{tar}}[F] \cap K^\ast[F]\rvert$ for each face. We omit those, since we do not have a non-zero lower bound on them.}. %
		\begin{align*}
			\sum_{F~\text{face}} \lvert F \rvert-1-\lvert I_{\text{tar}}[F] \cap I^\ast[F]\rvert = n+2\lvert J_{\text{diag}} \rvert-(\lvert J_{\text{diag}} \rvert +1) - \sum_{F~\text{face}} \lvert I_{\text{tar}}[F] \cap I^\ast[F]\rvert \\ \leq n+2\lvert J_{\text{diag}} \rvert-(\lvert J_{\text{diag}} \rvert +1) - 2\lvert J_{\text{diag}} \rvert - \lvert J_{\text{CH}} \rvert = n - 1 - \lvert J \rvert
		\end{align*}
		The above estimate concludes the proof.
	\end{proof}
	
	\section{Omitted Details from Section \ref{sec:happyedges}}\label{app:a}
	
	\hierarchy*
	
	\begin{proof}
		\begin{figure}[ht]
			\centering
			\includegraphics[scale=0.55]{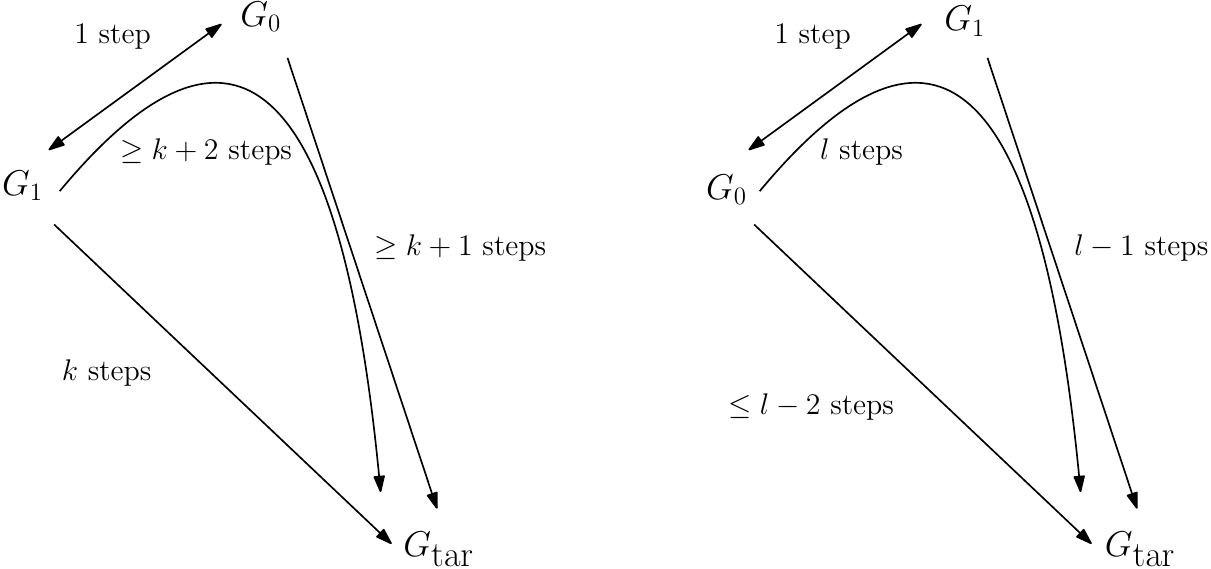}
			\caption{An improvement by two flips implies the perfect flip property (left) and the perfect flip property implies, under additional assumptions, an improvement by two flips (right)}
			\label{best_possible_2}
		\end{figure}
		
		\textbf{(i)}:
		Assume we can execute a perfect flip in $G_0$ to get to some $G_1$. Then any flip sequence from $G_1$ to $G_{\text{tar}}$ via $G_0$ has to be at least 2 flips longer than the shortest flip sequence from $G_1$ to $G_{\text{tar}}$ by assumption. So, if we need $k$ steps to get from $G_1$ to $G_{\text{tar}}$, then we need at least $k+2$ steps to get from $G_1$ to $G_{\text{tar}}$ via~$G_0$ and, therefore, at least $k+1$ steps from $G_0$ to $G_{\text{tar}}$. But going from $G_0$ to $G_1$ and then performing the sequence of length~$k$ from there gives us a flip sequence of length exactly~$k~+~1$ and, therefore, a shortest flip sequence. For an illustration see Figure~\ref{best_possible_2} (left).
		
		\textbf{(ii)}: Let $e_1$ be an edge that lies in both $G_{\text{in}}$, the initial graph, and $G_{\text{tar}}$, the target graph. Now let $G_{\text{in}} = G_0$, $G_1$,...,$G_k = G_{\text{tar}}$ be a flip sequence from $G_{\text{in}}$ to $G_{\text{tar}}$ that removes and later adds $e_1$. Further, assume without loss of generality that $e_1$ is removed in the very first flip from $G_0$ to~$G_1$ and that the edge $e_2$ is added in the same flip.
		
		Consider the flip sequence from $G_1$ to $G_{\text{tar}}$. By the perfect flip property, there exists a shortest flip sequence from $G_1$ to $G_{\text{tar}}$ via $G_0$, since $G_0$ and $G_{\text{tar}}$ have one edge more in common than $G_1$ and $G_{\text{tar}}$, since $e_1$ is in $G_{\text{tar}}$ and $e_2$ is not by assumption.
		Thus there is a shortest flip sequence from $G_1$ to $G_{\text{tar}}$ via $G_0$ that contains the shortest flip sequence from $G_0$ to $G_{\text{tar}}$ as a subsequence.
		
		This means that the shortest flip sequence from $G_0$ to $G_{\text{tar}}$ via $G_1$ has the same length as a flip sequence that first goes from $G_0$ to $G_1$, then back from $G_1$ to $G_0$, and then executes the shortest flip sequence from $G_0$ to $G_{\text{tar}}$. Since the first two steps are redundant, we can remove them and find a flip sequence that is at least two flips shorter then the given one. For an illustration see Figure \ref{best_possible_2} (right).
	\end{proof}
	
	\section{Omitted Details from Section \ref{sec:happycompatible}}\label{app:b}
	
	\prehappy*
	
	\begin{proof}
		
		The construction used in \cite{aichholzer2024reconfiguration} that proves Lemma \ref{prehappy} works as follows. Consider a flip sequence from $T_{\text{in}}$ to $T_{\text{tar}}$ and suppose that an edge $e$ is removed and later added back, and that no edge crossing $e$ is added during that subsequence. Let $T_0$,...,$T_k$ be the trees in this subsequence, where $T_0$ and $T_k$ contain $e$, but none of the intermediate trees do. Moreover,  none of the trees $T_i$, $1\leq i\leq k-1$, contains an edge that crosses $e$. We follow the lines in \cite{aichholzer2024reconfiguration} where they construct a shorter flip sequence from $T_0$ to $T_k$ using a normalization technique. For each $i$ with $1\leq i \leq k-1$ consider adding $e$ to $T_i$. For $i \neq 0,k$ this creates a cycle $\delta_i$. Let~$f_i$ be the first edge of $\delta_i$ that is removed during the flip sequence from $T_i$ to $T_k$. Define $N_i = T_i \cup \{e\} \setminus \{f_i\}$ and $N_0$ = $T_0$. This yields a new, shorter flip sequence $N_0$,...,$N_{k-1}$.
		
		Note that if the original flip sequence uses only compatible flips then the same holds for the normalized one. The reason is that any edge that gets added to some $N_i$ and that crosses the edge that gets removed from $N_{i-1}$ fits into two cases: It either crosses $e$, which is not possible by the assumption that no edge crossing $e$ gets added, or it crosses an edge that also lies in $T_{i-1}$ which contradicts the assumption that the original flip was compatible.
	\end{proof}
	
	\prehappyy*
	
	\begin{proof}
		We repeat the following construction for all parking edges that are introduced in a flip sequence and do not lie on the boundary of the convex hull. Further, we repeat the process for every appearance of such an edge separately, see Figure \ref{reorder}. 
		
		Let~$f=(v_a,v_b)$ be a parking edge that does not lie on the boundary of the convex hull. Let $T_{i_1}$ be the first tree to contain~$f$ and~$T_{i_2}$ be the last tree to contain~$f$ such that any tree between~$T_{i_1}$ and~$T_{i_2}$ contains~$f$. The trees~$T_0$,...,$T_{i_1-1}$ and $T_{i_2+1}$,...,$T_k$ stay as in the original sequence.
		
		We reorder and modify the other steps based on the following observation. The edge~$f$ separates the point set into two sides $A$ and $B$, the vertices that appear in clockwise direction from $v_a$ to $v_b$ and the vertices that appear in counterclockwise order. While $f$ is part of the tree, flips in the two sides can be executed independently from the other side. We remark that for this proof we characterize a flip by the two edges involved. Two valid flips are the same if they remove and add the same edges, regardless of the trees to which the edges belong to. Note that every spanning tree contains a unique path between any two vertices. If, in particular, we consider the path~$P$ from~$v_a$ to~$v_b$ in~$T_{i_1-1}$, we see that this path has to be contained entirely in one side induced by~$f$, w.l.o.g. in side~$B$. Otherwise, there would be an edge along the path that crosses~$f$, introducing a non-compatible flip from~$T_{i_1-1}$ to~$T_{i_1}$.
		
		If we add the edge~$f$, the cycle that gets closed is entirely contained in~$B$. Therefore, the edge that is removed lies in~$B$. Any flip that is executed from~$T_{i_1}$ to~$T_{i_2}$ that happens in~$A$ can already be executed in~$T_{i_1-1}$. Any cycle that previously involved the edge~$f$ now involves the path~$P$ instead. We order our flips in a way, such that all flips that happen in~$A$ are executed right away starting from~$T_{i_1-1}$. After executing all those steps, the side~$A$ already coincides with the side~$A$ in~$T_{i_2}$, except for the edge~$f$.
		
		Next we consider the part of~$T_{i_2}$ that lies in $A$. This part consists of $f$ and two trees~$T$ and $T'$ (possibly with an empty set of edges) in~$A$. Since~$T_{i_2}$ is plane there exist a unique pair of vertices that appear consecutively along the boundary of the convex hull and one belongs to~$T$ and the other to~$T'$. We continue our modified flip sequence by adding the convex hull edge~$h$ between the two vertices and remove the edge that would have been removed when adding~$f$.
		
		Now we execute all the flips in~$B$ from the original flip sequence. Any flip that formed a cycle that involves~$f$ now forms a cycle that involves~$h$ and paths in~$A$ connecting~$v_a$ and~$v_b$ to~$h$. In the last flip, we remove $h$ and add the edge that would have been added in the original sequence when removing~$f$. The resulting tree is~$T_{i_2}$.
		
		The new flip sequence is of the same length as the original one, but we replaced the non convex hull edge~$f$ with the convex hull edge~$h$. 
		\end{proof}
	
	\section{Omitted Details from Section \ref{sec:fpt}} \label{app:abc}
	
	\contract*
	
	\begin{proof}
		
		To show this, we observe that for any flip sequence from $T_{\text{in}}$ to $T_{\text{tar}}$ that preserves happy edges, we can obtain a flip sequence from $T_{\text{in,r}}$ to $T_{\text{tar,r}}$ of the same or shorter length and the other way around. For the backward direction, if we have a flip sequence from $T_{\text{in,r}}$ to $T_{\text{tar,r}}$, then we add and remove edges between vertices with the same labels to get from $T_{\text{in}}$ to $T_{\text{tar}}$, pretending the unmarked vertices don't exist.
		
		For the forward direction, assume we have a flip sequence from $T_{\text{in}}$ to $T_{\text{tar}}$, $T_{\text{in}}=T_0$, $T_1$,...,$T_k = T_{\text{tar}}$ that preserves happy edges. One by one, we contract a path $P~=$ $v_a$,$v_{a+1}$,...,$v_{b-1}$,$v_b$ (all indices taken modulo n), where all the $v_i$ are non-marked vertices that appear consecutively on the reduced convex hull, $v_a$ is the marked start and $v_b$ is the marked end, on the convex hull.
		
		We use a normalization technique, and distinguish two cases. \\[-2ex]
		
		\textbf{Case 1: Contract a path into an edge} See Figure~\ref{fig:contraction} for an example. Given a tree $T$ on the same vertex set as $T_{\text{in}}$ and~$T_{\text{tar}}$ that contains a path of edges that are happy with regard to $T_{\text{in}}$ and $T_{\text{tar}}$ such that further only the end vertices of the path are incident to non happy edges in $T_{\text{in}}$ or $T_{\text{tar}}$. We normalize with respect to the edge~$(v_a,v_b)$ to get the normalization $N(T)$ of $T$ by doing the following: Every edge that is not incident to any vertex on the path $P$ stays the same. All the edges along the path get removed and replaced by the edge~$(v_a,v_b)$. Any edge $(v_c,v_d)$ with $v_c \in \{v_{a+1},...,v_{b-1}\}$ and $v_d \notin \{v_a,v_b,v_{a+1},...,v_{b-1}\}$ gets replaced by the edge $(v_a,v_d)$. \\[-2ex]
		
		\textbf{Case 2: Contracting a happy leaf} Let $(v_a,v_b)$ be a happy leaf with respect to $T_{\text{in}}$ and~$T_{\text{tar}}$. Further, let $T$ be a tree on the same vertex set as $T_{\text{in}}$ and $T_{\text{tar}}^C$ that contains~$(v_a,v_b)$. Without loss of generality, let $v_b$ be the degree one vertex in $T_{\text{in}}$ and $T_{\text{tar}}$. We normalize as follows: Remove the edge $(v_a,v_b)$ from $T$ and replace any edge $(v_c,v_b)$ with the edge $(v_c,v_a)$.
		
		In both cases it holds that $N(T_{\text{in}}) = T_{\text{in,r}}$ and $N(T_{\text{tar}}) = T_{\text{tar,r}}$. Further, the normalizations $N(T_i)$ and $N(T_{i+1})$ of two consecutive trees $T_i$ and $T_{i+1}$ along the flip sequence either coincide (if the removed edge and the added edge are incident to vertices of the same contracted path or get slid along a happy leaf) or differ by a flip (all other choices of added and removed edges). After removing consecutive multiples, $N(T_0)$, $N(T_1)$,...,$N(T_k)$ is a sequence of trees that form a flip sequence from $T_{\text{in,r}}$ to $T_{\text{tar,r}}$. This is true, because, if we have a cycle in some $N(T_i)$, then this cycle must have appeared in $T_i$. The same holds for crossings.
		
		This concludes that we do not lose any information about the flip distance when contracting long paths along unmarked vertices along the convex hull.
			\end{proof}
			
	
	\goodhappy*
	
	\begin{claimproof}
		The claim clearly holds for convex hull edges \cite{aichholzer2024reconfiguration}. We apply the normalization from Lemma \ref{abc} to contract a happy leaf on the convex hull without changing the length of the shortest flip sequence. This contraction creates a new convex hull edge. Exhaustively contracting happy leaves, eventually every good happy edge will be a convex hull edge and therefore have the happy edge property holds.
	\end{claimproof}
	
	\surround*
	
	\begin{claimproof}
		Consider a path in the dual graph of $N(T_{\text{in}})$ between two faces that contain non-happy edges, such that all faces that appear at the interior of the path contain only happy edges. Assume that at least one edge $e$ is not flipped along the path. We cut $N(T_{\text{in}})$ and $N(T_{\text{tar}})$ along $e$ into two trees $N(T_{\text{in}})^1$ and $N(T_{\text{tar}})^1$, and $N(T_{\text{in}})^2$ and $N(T_{\text{tar}})^2$. Now, a shortest flip sequence from $N(T_{\text{in}})$ to $N(T_{\text{tar}})$ induces two shortest flip sequences from $N(T_{\text{in}})^1$ to $N(T_{\text{tar}})^1$ and from $N(T_{\text{in}})^2$ to $N(T_{\text{tar}})^2$. In both trees $e$ is a happy leaf that can be contracted without changing the length of the shortest flip sequence. Through the contraction, new happy leaves emerge. We contract happy leaves and happy paths along the convex hull exhaustively.
	\end{claimproof}

	\pagebreak
	\FPT*
	
	\begin{claimproof}
		Given two trees $T_{\text{in}}$ and $T_{\text{tar}}$. We cut our convex point set along the happy edges to obtain smaller subsets with formerly happy edges on their boundary and (possibly) unhappy edges in their interior. Afterwards, we discard all subsets that do not contain unhappy edges.
		
		Let $C$ be a subset that was not discarded in the process. We refer by $T_{\text{in}}^C$ to the graph induced by the initial tree on the point set of $C$ and analogously for $T_{\text{tar}}^C$.
		
		Finding the shortest flip sequence from $T_{\text{in}}$ to $T_{\text{tar}}$ is the same as finding the shortest independent flip sequences from $T_{\text{in}}^C$ to $T_{\text{tar}}^C$ for every single subset $C$ and then executing the respective flip sequences iteratively in $T_{\text{in}}$ to obtain $T_{\text{tar}}$. This is due to the assumption that Conjecture~\ref{conj:happyedgeconj} is true. The flips that can be executed on the points of one polygon $C$ without flipping a happy edge are all among the flips that can be executed in the reduced trees that correspond to $C$.

		By \cref{abc} we can again reduce the instance size by getting rid of redundant happy edges on the convex hull. We get the following runtime: The total number of marked vertices is at most $4k$ from having $2k$ unhappy edges, $k$ from $T_{\text{in}}$ and $k$ from $T_{\text{tar}}$, with $2$ endpoints each. The bound on the unhappy edges holds, since the flip distance is an upper bound on the number of unhappy edges. This leaves us with solving the instance for a forest with at most $4k$ vertices and $4k-1$ edges.
		
		The question, whether a flip sequence of length~$\leq k$ can be found, can be answered via brute force. At any given point, we have~$k$ choices to remove an unhappy edge and~$< \binom{4k}{2}$ choices to add an edge. Trying all sequences of length~$k$ gives us a total of~$<\left(k\binom{4k}{2}\right)^k$ sequences.
	\end{claimproof}
	
\end{document}